\documentclass[11pt]{article}

\usepackage[a4paper,margin=1in]{geometry}
\usepackage{amsmath,amssymb,amsthm,mathtools}
\usepackage{booktabs}
\usepackage{enumitem}
\usepackage{hyperref}

\hypersetup{
  colorlinks=true,
  linkcolor=blue,
  citecolor=blue,
  urlcolor=blue
}

\newtheorem{theorem}{Theorem}[section]
\newtheorem{lemma}[theorem]{Lemma}
\newtheorem{proposition}[theorem]{Proposition}
\newtheorem{corollary}[theorem]{Corollary}
\newtheorem{example}[theorem]{Example}
\theoremstyle{definition}
\newtheorem{definition}[theorem]{Definition}
\theoremstyle{remark}
\newtheorem{remark}[theorem]{Remark}

\makeatletter
\renewenvironment{proof}[1][\proofname]{%
  \par
  \pushQED{\qed}%
  \normalfont
  \topsep6\p@\@plus6\p@\relax
  \trivlist
  \item[\hskip\labelsep\bfseries #1\@addpunct{.}]\ignorespaces
}{%
  \popQED
  \endtrivlist
  \@endpefalse
}
\makeatother

\newcommand{\Z}{\mathbb Z}
\newcommand{\Q}{\mathbb Q}

\newcommand{\ii}{\mathrm i}
\newcommand{\Cay}{\operatorname{Cay}}

\title{Perfect State Transfer on Oriented Circulant Graphs: A Complete Classification}
\author{
	Xingkun Song$^{1,2}$
\thanks{Email: \href{mailto:xksong@126.com}{xksong@126.com}}\\
	{\small $^{1}$ School of Mathematics and Statistics, Qinghai Minzu University,}\\
	{\small Xining, Qinghai 810007, P.R. China}\\[1ex]
	{\small $^{2}$ Qinghai Institute of Applied Mathematics,}\\
	{\small Xining, Qinghai 810007, P.R. China}\\[1ex]
}
\date{}

\begin{document}

\maketitle

\begin{abstract}
The continuous-time quantum walk on an oriented circulant graph is determined
by the Fourier eigenvalues of its Hermitian adjacency matrix.  We classify
perfect state transfer (PST) between distinct vertices in every nonempty
oriented circulant graph.  We show that each such graph is described by an
odd primitive quadratic Dirichlet character of conductor $\Delta$, a set of
gcd-classes, and a choice between the two orientations of each selected
class.  For a graph of order $n$, we derive an explicit formula for every
Fourier eigenvalue without assuming that $n/\Delta$ is coprime to $\Delta$.
We prove that PST occurs only for $\Delta\in\{3,4,8\}$ and give necessary and
sufficient conditions on the connection set for each conductor.  Equivalently,
the square-free radicands of oriented circulant graphs with PST are exactly
$1$, $2$, and $3$.  More generally, when
$\lambda_j=\sqrt{D}\eta_j$ with $\eta_j\in\mathbb{Z}$, congruences satisfied
by the integers $\eta_j$ determine all PST pairs and times, the minimum
period, and the largest vertex sets supporting multiple state transfer (MST).
In this class, pretty good state transfer is equivalent to PST.  We also
determine the connected orders and enumerate the resulting graphs.
\end{abstract}

\medskip
\noindent\textbf{Mathematics Subject Classifications:}
05C50, 05C25, 11L05, 81Q35.

\medskip
\noindent\textbf{Keywords.}
Oriented circulant graph; quadratic character; Hermitian spectrum;
perfect state transfer; multiple state transfer.

\section{Introduction}

Let $\Gamma=\Cay(\Z_n,C)$ be a circulant graph.  Its spectrum is the discrete
Fourier transform of the connection set $C$ \cite{Davis1979}; consequently,
its arithmetic is controlled by the cyclotomic field generated by the
relevant roots of unity.  Bridges and Mena, and later So, characterized
integral undirected circulant graphs as unions of gcd-classes
\cite{BridgesMena1979,So2006}.  Such graphs provide a well-studied setting
for periodic quantum walks and perfect state transfer (PST)
\cite{SaxenaSeveriniShparlinski2007,BasicPetkovicStevanovic2009}.

Quantum state transfer was introduced for spin networks by Bose
\cite{Bose2003}, and its graph-theoretic formulation is surveyed in
\cite{Godsil2012}.  For an oriented graph, the continuous-time walk is
defined from the Hermitian adjacency matrix \cite{GodsilLato2020}.  PST on
abelian Cayley graphs has been studied in several settings
\cite{TanFengCao2019}.  Kadyan and Bhattacharjya characterized integral mixed
Cayley graphs over abelian groups and, in the cyclic case, gave a
connection-set description and a Ramanujan-sum expression for the spectrum
\cite{KadyanBhattacharjya2021,KadyanBhattacharjya2023}.  Song then
characterized PST and multiple state transfer (MST) in integral oriented
circulant graphs \cite{Song2024IOC}.  The corresponding integral mixed
circulant graphs were treated in \cite{SongLin2024Mixed}; see also
\cite{SongLin2026Mixed}.

The residue classes modulo $4$ used in the integral theory are the two
classes determined by the odd quadratic character $\chi_4$.  More
generally, an odd primitive quadratic Dirichlet character $\chi$ of conductor
$\Delta$ has quadratic Gauss sum $\ii\sqrt{\Delta}$
\cite[Chapter~9]{IrelandRosen}.  Godsil and Zhang characterized the oriented
and signed Cayley graphs on finite abelian groups whose eigenvalues are
integral multiples of a square root
\cite[Theorem~8.1]{GodsilZhang2024}.  Their result determines the admissible
connection sets but does not give, in the notation needed here, the
eigenvalue attached to each Fourier character.  These individual eigenvalues
are needed because state transfer is governed by simultaneous phase
equations.

Godsil and Lato proved that periodicity in a connected oriented graph forces
a square-root condition on the eigenvalues in the support of a vertex
\cite[Theorem~6.1]{GodsilLato2020}.  The cyclic specialization of
\cite[Theorem~8.1]{GodsilZhang2024} is recorded in
Theorem~\ref{thm:cyclic-classification}, where it gives an explicit
description of the connection sets of oriented circulant graphs with
square-root spectra.

Chan, Pantangi, Razafimahatratra, and Sin recently studied PST and MST in
oriented normal Cayley graphs \cite{ChanPantangiRazafimahatratraSin2026}.
Their Corollary~2.3 restricts the cardinality of a PST equivalence class to
$2$, $3$, $4$, or $6$, while their Theorem~4.1 excludes cardinality $6$ for
oriented normal Cayley graphs on solvable groups.  Thus an oriented
circulant graph cannot support MST on more than four vertices.  The purpose
of the present paper is more specific: for cyclic groups we determine exactly
which connection sets admit transfer, all pairs of vertices joined by PST,
all PST times, and the resulting enumeration formulas.
Their character-theoretic condition for oriented normal Cayley graphs
\cite[Theorem~3.3]{ChanPantangiRazafimahatratraSin2026} provides a general
group-theoretic counterpart to the Fourier congruences used below.

We state the classification before developing its proof.  We use the notation
of Definition~\ref{def:main-graph} for the corresponding oriented circulant
graphs.  For each
$\Delta\in\{3,4,8\}$, let $\chi_\Delta$ denote the odd primitive quadratic
character of conductor $\Delta$.  Given a positive integer $m$ and a set
$X$ of divisors of $m$, put
\[
   p=
   \begin{cases}
      3,&\Delta=3,\\
      2,&\Delta\in\{4,8\},
   \end{cases}
   \qquad
   m=p^a u,\quad (p,u)=1,
\]
and, for $0\le r\le a$, put
\[
   X_r=X\cap\{p^{a-r}e:e\mid u\}.
\]

\begin{theorem}\label{thm:introduction-classification}
Let $\Gamma$ be a nonempty oriented circulant graph.  Then $\Gamma$ has PST
between distinct vertices if and only if there are
$\Delta\in\{3,4,8\}$, a positive integer $m$, a set $X$ of divisors of $m$,
and a map $\sigma:X\to\{\pm1\}$ such that
\[
   \Gamma=\mathcal I_{\Delta m,\chi_\Delta}(X,\sigma)
\]
and the corresponding row of the following table holds.  Here
$\mathcal T$ is the transfer subgroup: it consists of $0$ and all vertices
joined to $0$ by PST.
\begin{center}
\begin{tabular}{c@{\qquad}l@{\qquad}c}
\toprule
$\Delta$&condition&$\mathcal T$\\
\midrule
$3$&$X_0=\{m\}$&$m\Z_{3m}$\\
$4$&$a=0,\ X_0=\{m\}$&$\{0,2m\}$\\
$4$&$a\ge1,\ X_0=\{m\},\ X_1\ne\{m/2\}$&$\{0,2m\}$\\
$4$&$a\ge1,\ X_0=\{m\},\ X_1=\{m/2\}$&$m\Z_{4m}$\\
$8$&$X_0=\{m\}$&$\{0,4m\}$\\
\bottomrule
\end{tabular}
\end{center}
\end{theorem}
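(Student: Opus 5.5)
We first reduce to a structured class of graphs, then compute the eigenvalues explicitly, and finally solve the PST phase equations one conductor at a time. Reduction: PST between distinct vertices forces periodicity at the initial vertex. Since $\Gamma$ is vertex-transitive and the circulant matrices are diagonalized by the Fourier basis, every eigenvalue lies in the support of every vertex. We may first pass to the connected component containing $0$; this is harmless because $\Gamma$ is a disjoint union of isomorphic circulants, and PST can only join vertices in the same component. By \cite[Theorem~6.1]{GodsilLato2020} all eigenvalues are then of the form $\lambda_j=\sqrt{D}\,\eta_j$ with $D$ square-free and $\eta_j\in\Z$. Theorem~\ref{thm:cyclic-classification} now writes $\Gamma=\mathcal I_{\Delta m,\chi}(X,\sigma)$ with $\chi$ odd primitive quadratic of conductor $\Delta$. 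The explicit Fourier eigenvalue formula derived earlier gives each $\eta_j$ as a signed sum, over the classes $d\in X$, of terms equal to a Gauss-sum factor times a Ramanujan-type sum in $\gcd(j,\Delta m)$. For the nonempty graph some $\eta_j\neq0$.

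Next we use the standard PST criterion for circulants (the Fourier vectors are all eigenvectors with flat modulus). PST from $0$ to $b$ at time $t$ holds if and only if $e^{-\ii t\lambda_j}=\omega\,\zeta^{jb}$ for all $j$, where $\omega$ is a fixed unimodular constant and $\zeta=e^{2\pi\ii/(\Delta m)}$. Writing $t\sqrt D=\pi/g$ after normalizing by $g=\gcd_j(\eta_j-\eta_0)$, this becomes the congruence
\[
\frac{\eta_j-\eta_0}{g}\equiv \frac{2jb}{n}\cdot(\text{const})\pmod 2 .
\]
The key point is that the right-hand side is additive in $j$, so the map $j\mapsto(\eta_j-\eta_0)/g \bmod 2$ must be a group homomorphism $\Z_n\to\Z_2$ (or, more generally, to a cyclic group of order $|\mathcal T|$). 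Its kernel determines $b$, and the transfer subgroup $\mathcal T$ is its dual. We combine this with the bound $|\mathcal T|\in\{2,3,4\}$ from \cite{ChanPantangiRazafimahatratraSin2026}.

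Excluding $\Delta\notin\{3,4,8\}$: the eigenvalue formula shows that $\eta_j$ is a multiple of $\chi(j')$ times a Ramanujan-sum value. Odd primitive conductors other than $3,4,8$ have $\chi$ taking both signs on residues $j'$ that differ by elements of the kernel of any homomorphism to a group of order at most $4$. This contradicts the required additivity; the contradiction is visible already on units of $\Z_\Delta$, which rules out the case. For $\Delta\in\{3,4,8\}$ we evaluate $\eta_j \bmod 2g$ class by class, using $p$-adic valuations of $\gcd(j,\Delta m)$. The top class $X_0$ (divisors $d$ with $p^a\mid d$ in the relevant sense) must be exactly $\{m\}$, since otherwise the $2$-adic valuations of the differences $\eta_j-\eta_0$ fail to be constant on cosets. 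For $\Delta=4$, the class $X_1$ decides whether the homomorphism factors through order $2$ or order $4$, giving $\{0,2m\}$ versus $m\Z_{4m}$. Sufficiency is checked by the same computation run in reverse.

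The main obstacle will be the $2$-adic and $3$-adic bookkeeping of the eigenvalue formula when $p\mid m$ (the non-coprime case). We must show that every class other than those in $X_0$ and $X_1$ contributes multiples of $2g$, and that any deviation in $X_0$ breaks the congruence. We would attack this first by computing $\eta_j-\eta_{j+n/|\mathcal T|}$ directly, as a difference of Ramanujan sums, and tracking its valuation.
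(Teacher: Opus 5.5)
Your overall plan matches the paper's: reduce via Godsil--Lato and Godsil--Zhang to $\mathcal I_{\Delta m,\chi}(X,\sigma)$, use the explicit spectrum, turn PST into a linearity condition on the $\widehat\eta_j$, and use the signs of $\chi$ on units. There is, however, a genuine hole in the conductor exclusion, at antipodal transfer.

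First, set up the phase condition correctly. It is $r\widehat\eta_j+sj\equiv0\pmod n$ for an integer $r$ obtained by B\'ezout. You cannot prescribe $t\sqrt D=\pi/g$, since for $\Delta=3$ the transfer time is $2\pi/(3\sqrt3)$. For $s$ of order $k$ the condition reduces to $\widehat\eta_j\equiv\alpha j\pmod k$ with $\alpha$ a unit. Together with $\eta_{uj}=\chi(u)\eta_j$ it gives $u\equiv\chi(u)\pmod k$ for all units, which forces $\Delta=3$ if $k=3$ and $\Delta=4$ if $k=4$.

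For $k=2$, however, this condition is vacuous, because $\pm1$ are both odd. Your sign-flip argument therefore gives no contradiction modulo $2$. If it did, it would also exclude $\Delta=4$: there $\chi_4(1)=-\chi_4(3)$ and $3-1$ lies in the kernel of $\Z_n\to\Z_2$. The bound $|\mathcal T|\le4$ of Chan et al.\ does not help either. As written, you have not excluded antipodal PST for $\Delta=20$ or $24$, nor for $\Delta=7$ or $\Delta=3$ with $m$ even. The last case is also needed for the $\Delta=3$ row when $X_0\ne\{m\}$.

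The missing idea is a vanishing eigenvalue at an \emph{odd non-unit} Fourier index. Take an odd prime $p\mid\Delta$ and $j=p^{v_p(m)+1}$. Every term of the spectral formula then has $Q_d\nmid j$ or $p\mid j/Q_d$, so $\eta_j=0$. But transfer to $n/2$ requires $\widehat\eta_j\equiv j\equiv1\pmod 2$, a contradiction.

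Second, "tracking valuations" of differences of Ramanujan sums does not establish that $X_0=\{m\}$ is necessary, nor that $X_1=\{m/2\}$ is necessary when $|\mathcal T|=4$. Several classes in $X_0$ could a priori cancel. The paper uses two ingredients:
\begin{itemize}
\item the block formula $\eta_j=-p^r\chi(j/p^r)F_r(j)$ for $v_p(j)=r$, in which only $X_r$ contributes, so the factor $p^r$ also disposes of the higher classes;
\item the linear independence of $\{c_q\}_{q\mid u}$ over $\mathbb F_p$, valid because $p\nmid u$.
\end{itemize}
Independence first shows $p\nmid g$. It then shows that $F_0\equiv c\not\equiv0\pmod p$ on indices prime to $p$ leaves only the $q=1$ term, i.e.\ $X_0=\{m\}$. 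Likewise, $F_1(2w)\equiv1\pmod 2$ for odd $w$ forces $X_1=\{m/2\}$. Note also that for $\Delta=3$ the relevant modulus is $3$, not $2$, so ``multiples of $2g$'' is the wrong target there.
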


The proof of Theorem~\ref{thm:introduction-classification} is completed in
Section~\ref{sec:classification}.  Theorem~\ref{thm:all-orders-classification}
also gives the least transfer times in the cases listed above.

The first main step is the explicit spectral calculation.
Theorem~\ref{thm:general-spectrum} gives the eigenvalue associated with every Fourier
character of $\mathcal I_{n,\chi}(X,\sigma)$.  No coprimality assumption is
made between $n/\Delta$ and $\Delta$, so the formula includes the ramified
orders needed for connected examples.  It is obtained by separating a
complete character sum into a quadratic Gauss sum and a Ramanujan sum.

The second main step concerns state transfer for an arbitrary oriented
circulant graph with eigenvalues in $\sqrt{D}\Z$.
Theorem~\ref{thm:transfer-structure} determines all pairs of vertices joined
by PST, all PST times, the minimum period, and the largest MST sets from the integer
eigenvalue coefficients.  It also shows that pretty good state transfer
(PGST) is equivalent to PST under this spectral hypothesis.

Applying the explicit eigenvalue formula to this transfer theorem yields the
complete classification.  Theorem~\ref{thm:global-conductor} first restricts
PST between distinct vertices to
\[
   \Delta\in\{3,4,8\}.
\]
Theorem~\ref{thm:all-orders-classification} proves the conditions stated in
Theorem~\ref{thm:introduction-classification} at every order.  The possible
square-free radicands are $1$, $2$, and $3$.  The conductor-$4$ part recovers
the integral oriented classification in \cite{Song2024IOC}, while conductors
$3$ and $8$ give the nonintegral cases.  The classification also yields all
transfer times, the connected orders, and exact enumeration formulas.

Section~2 develops the quadratic-character construction and the explicit
spectral formula.  Section~3 proves the general theorem that determines PST
vertices and times from the spectrum.
Section~4 establishes the conductor restriction and the complete
classification.  Section~5 gives coprime and prime-power specializations,
examples, further consequences, and enumeration formulas.  Section~6
concludes the paper.

\section{Quadratic-character circulant graphs and their spectra}

All graphs in this paper are finite and simple.  An oriented circulant graph
is called connected when its underlying undirected graph is connected.  For
every positive integer
$r$, write $\zeta_r=\exp(2\pi\ii/r)$.  We write $\Z_n$ additively and identify
its characters with
\[
   \varphi_j(x)=\zeta_n^{jx},\qquad 0\le j\le n-1.
\]
For a positive integer $N$, write $\Z_N^\times$ for the unit group of
$\Z_N$.
If $C\subseteq \Z_n$ satisfies $C\cap(-C)=\varnothing$, the oriented
circulant graph $\Cay(\Z_n,C)$ has Hermitian adjacency matrix $A$ given by
\[
   A_{uv}=
   \begin{cases}
      \ii, & v-u\in C,\\
      -\ii, & u-v\in C,\\
      0, & \text{otherwise}.
   \end{cases}
\]
The Fourier vectors are eigenvectors of $A$, and the eigenvalue belonging to
$\varphi_j$ is
\begin{equation}\label{eq:hermitian-eigenvalue}
   \lambda_j
   =
   \ii\sum_{c\in C}\left(\zeta_n^{jc}-\zeta_n^{-jc}\right).
\end{equation}

\subsection{Odd quadratic characters}

Let $\chi$ be a primitive quadratic Dirichlet character of conductor
$\Delta$.  The standard convention is that $\chi(a)=0$ when
$(a,\Delta)>1$.  Since $\chi$ is quadratic, its values on units are
$\pm1$.  The character is called odd if $\chi(-1)=-1$.  If $N$ is a
multiple of $\Delta$ and $a\in\Z_N$, then $\chi(a)$ means that $\chi$ is
evaluated at the image of $a$ in $\Z_\Delta$.

\begin{proposition}\label{prop:parity}
Let $N$ be a multiple of $\Delta$, and define
\[
   C_\chi^+(N)=\{a\in \Z_N^\times:\chi(a)=1\},\qquad
   C_\chi^-(N)=\{a\in \Z_N^\times:\chi(a)=-1\}.
\]
If $\chi$ is odd, then
\[
   -C_\chi^+(N)=C_\chi^-(N).
\]
In particular, $C_\chi^+(N)\cap (-C_\chi^+(N))=\varnothing$, so
$C_\chi^+(N)$ is an oriented connection set.
\end{proposition}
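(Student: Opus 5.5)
The plan is a direct double-inclusion argument based on multiplicativity of $\chi$. First check that the map $a\mapsto -a$ is a well-defined bijection of $\Z_N^\times$ onto itself: if $(a,N)=1$ then $(-a,N)=1$. Next note that, because $\Delta\mid N$, reduction modulo $\Delta$ is a well-defined ring homomorphism $\Z_N\to\Z_\Delta$. It sends units to units, so for $a\in\Z_N^\times$ the value $\chi(a)$ is that of $\chi$ at a unit of $\Z_\Delta$, hence lies in $\{\pm1\}$. Thus $\Z_N^\times$ is the disjoint union $C_\chi^+(N)\sqcup C_\chi^-(N)$.

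The key step is the identity $\chi(-a)=\chi(-1)\chi(a)=-\chi(a)$ for every $a\in\Z_N^\times$. This uses complete multiplicativity of the Dirichlet character $\chi$ together with oddness, and it is compatible with reduction modulo $\Delta$ since $-a\bmod\Delta=-(a\bmod\Delta)$. Consequently, if $a\in C_\chi^+(N)$, then $-a\in\Z_N^\times$ and $\chi(-a)=-1$, so $-C_\chi^+(N)\subseteq C_\chi^-(N)$. Conversely, if $b\in C_\chi^-(N)$, then $-b$ is a unit with $\chi(-b)=1$, so $b=-(-b)\in -C_\chi^+(N)$. This gives equality.

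For the final assertion, $C_\chi^+(N)\cap(-C_\chi^+(N))=C_\chi^+(N)\cap C_\chi^-(N)$, which is empty because $\chi(a)$ cannot equal both $1$ and $-1$. This is exactly the condition $C\cap(-C)=\varnothing$ required of an oriented connection set. It also excludes $0$ (and the element $N/2$ when $N$ is even), since such elements are not units unless $N\le2$. For $N\le 2$ the set is vacuous or trivially fine, and in any case $\Delta\ge3$ forces $N\ge3$.

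No step is a real obstacle. The only point needing care is that $\chi(a)$ is well defined on $\Z_N$, i.e.\ that it is independent of the integer representative. This follows from $\Delta\mid N$ and the periodicity of $\chi$ modulo $\Delta$.
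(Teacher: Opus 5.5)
Your proof is correct and follows the paper's argument: both use $\chi(-a)=\chi(-1)\chi(a)=-\chi(a)$ to get $-C_\chi^+(N)\subseteq C_\chi^-(N)$, and then obtain the reverse inclusion by the symmetric argument. Your extra checks, namely that $\chi$ is well defined on $\Z_N$ because $\Delta\mid N$ and that negation preserves $\Z_N^\times$, are details the paper leaves implicit.
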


\begin{proof}
Let $a\in C_\chi^+(N)$.  Since $\chi$ is odd,
\[
   \chi(-a)=\chi(-1)\chi(a)=-1.
\]
Thus $-a\in C_\chi^-(N)$, proving one inclusion.  Interchanging the two
character classes gives the reverse inclusion.
\end{proof}

We shall use the standard correspondence between primitive quadratic
characters and fundamental discriminants, together with the quadratic
Gauss-sum evaluation; see \cite[Chapter~9]{IrelandRosen}.

\begin{proposition}\label{prop:delta}
There exists an odd primitive quadratic character of conductor $\Delta$ if
and only if $-\Delta$ is a fundamental discriminant.  Equivalently,
\[
   \Delta\equiv 3 \pmod 4 \quad\text{and $\Delta$ is square-free,}
\]
or
\[
   \Delta=4q,\qquad q\ \text{square-free and}\quad q\equiv1,2\pmod4.
\]
For such a character,
\[
   \tau(\chi):=\sum_{a=1}^{\Delta}\chi(a)\zeta_\Delta^a
   =\ii\sqrt{\Delta},
\]
after the standard choice of the positive square root.
\end{proposition}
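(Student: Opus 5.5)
The plan is to reduce everything to the classical structure theory of quadratic characters and Gauss sums in \cite[Chapter~9]{IrelandRosen}, organizing the argument around the Kronecker symbol. First I would recall that every primitive quadratic Dirichlet character $\chi$ of conductor $\Delta$ has the form $\chi(a)=\left(\frac{d}{a}\right)$, the Kronecker symbol of a fundamental discriminant $d$ with $|d|=\Delta$. Conversely, every fundamental discriminant $d$ gives a primitive quadratic character of conductor $|d|$. The parity is then determined by the sign of $d$: $\chi(-1)=\operatorname{sign}(d)$. So the odd primitive quadratic characters of conductor $\Delta$ correspond exactly to $d=-\Delta$, which proves the first equivalence.

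To make the parity statement concrete, I would argue locally. A primitive quadratic character factors uniquely as a product of primitive quadratic characters of prime-power conductor. The available factors are:
\begin{itemize}
\item for odd $p$, the Legendre symbol of conductor $p$, with parity $(-1)^{(p-1)/2}$;
\item $\chi_4$ of conductor $4$, which is odd;
\item $\chi_8^{+}$ of conductor $8$, which is even;
\item $\chi_8^{-}=\chi_4\chi_8^{+}$ of conductor $8$, which is odd.
\end{itemize}
There are no primitive quadratic characters modulo $2$ or modulo $2^k$ for $k\ge4$.

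Next I would translate the parity condition into congruences, splitting by the $2$-part of $\Delta$.
\begin{itemize}
\item If $\Delta$ is odd, then $\Delta$ is square-free and $\chi$ is the product of Legendre symbols over $p\mid\Delta$. In that case $\chi(-1)=\prod_p(-1)^{(p-1)/2}$, and this equals $-1$ exactly when $\Delta\equiv3\pmod4$, since each prime $p\equiv3\pmod4$ contributes a factor $-1$ to both sides.
\item If $\Delta=4q$ with $q$ odd, the $2$-factor is $\chi_4$. Then oddness requires the odd part to be even, so $q\equiv1\pmod4$.
\item If $\Delta=8q'$ with $q'$ odd square-free, we may choose either $\chi_8^{\pm}$ as the $2$-factor, and its parity can be adjusted to make $\chi$ odd. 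This gives $q=2q'\equiv2\pmod4$.
\end{itemize}
Together these recover the listed description of the conductors $-\Delta$ of fundamental discriminants.

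The Gauss sum evaluation is where the real content lies, and I expect it to be the main obstacle. I would not reprove it but cite the classical theorem that $\tau(\chi)^2=\chi(-1)\Delta$ for primitive real $\chi$, together with Gauss's determination of the sign: $\tau(\chi)=\sqrt{\Delta}$ if $\chi$ is even and $\tau(\chi)=\ii\sqrt{\Delta}$ if $\chi$ is odd. The identity $\tau(\chi)^2=\chi(-1)\Delta$ itself follows routinely from $\tau(\chi)\overline{\tau(\chi)}=\Delta$ (primitivity) and $\overline{\tau(\chi)}=\chi(-1)\tau(\chi)$ (since $\chi$ is real). This gives $\tau(\chi)=\pm\ii\sqrt{\Delta}$ in the odd case.

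The sign is the deep part. For prime conductor it is Gauss's theorem. For composite $\Delta$ I would use multiplicativity: for coprime conductors,
\[
   \tau(\chi_1\chi_2)=\chi_1(\Delta_2)\chi_2(\Delta_1)\tau(\chi_1)\tau(\chi_2),
\]
and combine this with quadratic reciprocity to track the signs. Alternatively, I would simply invoke the cited statement in \cite{IrelandRosen} for Kronecker-symbol Gauss sums, which is what "after the standard choice of the positive square root" signals.
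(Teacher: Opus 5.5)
Your proposal is correct and follows the same route as the paper. The paper cites the correspondence between primitive quadratic characters and fundamental discriminants (with parity given by the sign of the discriminant), the usual list of negative fundamental discriminants, and the classical quadratic Gauss-sum evaluation; your local factorization into $\chi_4$, $\chi_8^{\pm}$ and Legendre symbols, your derivation of $\tau(\chi)^2=\chi(-1)\Delta$, and your sketch of fixing the sign via the product formula and quadratic reciprocity are accurate elaborations of steps the paper leaves to that citation.
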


\begin{proof}
Primitive quadratic characters correspond to fundamental discriminants, and
the character is odd precisely when the associated discriminant is negative.
The two alternatives in the statement are the usual classification of negative
fundamental discriminants.  The formula for $\tau(\chi)$ is the classical
quadratic Gauss-sum evaluation; see, for example,
\cite[Chapter~9]{IrelandRosen}.
\end{proof}

For a positive square-free integer $\delta$, set
\[
   \Delta(\delta)=
   \begin{cases}
      \delta, & \delta\equiv3\pmod4,\\
      4\delta, & \delta\equiv1,2\pmod4.
   \end{cases}
\]
Then
\[
   \sqrt{\Delta(\delta)}
   =
   \begin{cases}
      \sqrt{\delta}, & \delta\equiv3\pmod4,\\
      2\sqrt{\delta}, & \delta\equiv1,2\pmod4.
   \end{cases}
\]
In particular,
\[
   \Delta(1)=4,\qquad \Delta(2)=8,\qquad \Delta(3)=3.
\]

\begin{remark}\label{rem:conductor-radicand}
In this paper $\Delta$ denotes the conductor of the primitive quadratic
character, equivalently the absolute value of the associated negative
fundamental discriminant.  The square-free radicand is the square-free part
of $\Delta$.  In particular, radicand $2$ corresponds to the discriminant
$-8$ and conductor $8$, not conductor $2$.  Godsil and Zhang
\cite{GodsilZhang2024} use a real skew adjacency matrix, whereas the
Hermitian adjacency matrix used here is obtained by multiplication by
$\pm\ii$, according to the arc convention.  Their eigenvalues in
$\sqrt{-\delta}\Z$ therefore correspond, up to an overall sign, to Hermitian
eigenvalues in $\sqrt\delta\Z$.
\end{remark}

\subsection{Construction from gcd-classes}

For the remainder of the paper, unless stated otherwise, fix an odd primitive
quadratic character $\chi$ of conductor $\Delta$.  Let $n$ be a multiple of
$\Delta$.  The admissible gcd-classes are the classes in which the character
can be evaluated after division by the gcd.  Accordingly, for a divisor $d$ of
$n$ such that $\Delta\mid n/d$, define
\[
   \Theta_{n,\chi}^{\varepsilon}(d)
   =
   \{\,dk\in \Z_n:\ k\in \Z_{n/d}^{\times},\ \chi(k)=\varepsilon\,\},
   \qquad \varepsilon\in\{\pm1\}.
\]
The set $\Theta_{n,\chi}^{\varepsilon}(d)$ lies in the gcd-class
\[
   \Theta_n(d)=\{x\in \Z_n:\gcd(x,n)=d\}.
\]
By Proposition~\ref{prop:parity},
\[
   -\Theta_{n,\chi}^{\varepsilon}(d)=\Theta_{n,\chi}^{-\varepsilon}(d).
\]

\begin{definition}\label{def:main-graph}
Let $X$ be a set of divisors $d$ of $n$ satisfying $\Delta\mid n/d$, and let
$\sigma:X\to\{\pm1\}$.  Define
\[
   \mathcal I_{n,\chi}(X,\sigma)
\]
to be the oriented circulant graph on $\Z_n$ with connection set
\[
   C(X,\sigma)
   =
   \bigcup_{d\in X}\Theta_{n,\chi}^{\sigma(d)}(d).
\]
\end{definition}

Let $\Gamma=\mathcal I_{n,\chi}(X,\sigma)$.  The sets in the displayed union
are disjoint because they belong to distinct gcd-classes.  Thus $d$ specifies
the gcd-class and $\sigma(d)$ selects one of its two character classes.  By
Proposition~\ref{prop:parity}, each selected class is disjoint from its
negative; hence $\Gamma$ is an oriented circulant graph.

\begin{remark}\label{rem:sign-reversal}
Let $\Gamma=\mathcal I_{n,\chi}(X,\sigma)$ be an oriented circulant graph,
and let
$u\in\Z_n^\times$.  Multiplication by $u$ is an isomorphism
\[
   \mathcal I_{n,\chi}(X,\sigma)
   \cong
   \mathcal I_{n,\chi}(X,\chi(u)\sigma),
\]
where $(\chi(u)\sigma)(d)=\chi(u)\sigma(d)$.  In particular,
$\mathcal I_{n,\chi}(X,\sigma)$ and
$\mathcal I_{n,\chi}(X,-\sigma)$ are isomorphic.  Write $A(X,\sigma)$ and
$U_{X,\sigma}(t)=\exp(\ii tA(X,\sigma))$ for the Hermitian adjacency matrix
and transition matrix of $\mathcal I_{n,\chi}(X,\sigma)$, respectively.
Then
\[
   A(X,-\sigma)=-A(X,\sigma),
\]
and hence
\[
   U_{X,-\sigma}(t)=U_{X,\sigma}(-t).
\]
Consequently, $\mathcal I_{n,\chi}(X,\sigma)$ and
$\mathcal I_{n,\chi}(X,-\sigma)$ have the same pairs of vertices joined by
PST, the same periods, and the same maximum cardinality of a set supporting
MST.
Indeed, multiplication by $u$ preserves every gcd-class and sends
\[
   \Theta_{n,\chi}^{\sigma(d)}(d)
   \quad\text{onto}\quad
   \Theta_{n,\chi}^{\chi(u)\sigma(d)}(d).
\]
This proves the isomorphism.  Taking $u=-1$ and using $\chi(-1)=-1$ gives
the assertion for $-\sigma$.  The connection set of
$\mathcal I_{n,\chi}(X,-\sigma)$ is the negative of the connection set of
$\mathcal I_{n,\chi}(X,\sigma)$, so its Hermitian adjacency matrix is
$-A(X,\sigma)$.  The identity for the transition matrices follows.  If
$U_{X,\sigma}(t)=P_s$, then
$U_{X,-\sigma}(t)=P_{-s}$.  Let $k$ be the order of $s$ in $\Z_n$.
Since $U_{X,\sigma}(kt)=P_s^k=I$, we also have
\[
   U_{X,-\sigma}((k-1)t)
   =U_{X,\sigma}(-(k-1)t)
   =U_{X,\sigma}(t)
   =P_s.
\]
Thus every translation attained by one graph is attained by the other;
the least times for the two opposite translations may be interchanged.
The assertions about PST pairs, periods, and MST now follow.
\end{remark}

\begin{proposition}\label{prop:connectedness}
Let $\Gamma=\mathcal I_{n,\chi}(X,\sigma)$ be a nonempty oriented circulant
graph, and put
\[
   \kappa=\gcd\{d:d\in X\},\qquad
   n'=\frac n\kappa,\qquad
   X'=\left\{\frac d\kappa:d\in X\right\}.
\]
Define $\sigma'(d/\kappa)=\sigma(d)$.  Then $\Gamma$ has exactly $\kappa$
connected components, each isomorphic to the connected oriented circulant
graph
\[
   \Gamma'=\mathcal I_{n',\chi}(X',\sigma').
\]
Consequently, $\Gamma$ is connected if and only if $\kappa=1$, and the
characteristic polynomial of $\Gamma$ is the $\kappa$-th power of that of
$\Gamma'$.
\end{proposition}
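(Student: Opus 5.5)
The plan is to argue entirely through the connection set $C=C(X,\sigma)$ and the general fact that a Cayley graph $\Cay(\Z_n,C)$, undirected or oriented, has its components equal to the cosets of the subgroup $H=\langle C\rangle$. Connectivity of an oriented graph is defined through the underlying undirected graph, whose connection set is $C\cup(-C)$, and this generates the same subgroup $H$. Hence there are exactly $[\Z_n:H]$ components. Each component is isomorphic, as an oriented graph, to $\Cay(H,C)$, since translation by a coset representative carries $H$ onto the coset and preserves the arcs.

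The first step is to show $H=\kappa\Z_n$, so that there are $\kappa$ components.
\begin{itemize}[leftmargin=*]
\item Every element of $C$ lies in some $\Theta_n(d)$ with $d\in X$, so it is a multiple of $d$ and hence of $\kappa$. This gives $H\subseteq\kappa\Z_n$.
\item For the reverse inclusion, it suffices to show that $d\in H$ for every $d\in X$. Then $H\supseteq\langle X\rangle=\gcd(X\cup\{n\})\Z_n=\kappa\Z_n$, because $\kappa\mid n$.
\item Fix $d\in X$. The class $\Theta^{\sigma(d)}_{n,\chi}(d)$ is nonempty, because $\chi$ takes both values $\pm1$ on $\Z_{n/d}^\times$ (the character is nontrivial and $\Delta\mid n/d$). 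So pick $dk$ with $\chi(k)=\sigma(d)$ and $k\in\Z_{n/d}^\times$.
\item The subgroup generated by $dk$ is generated by an element of order $n/d$ inside $d\Z_n\cong\Z_{n/d}$, so it equals $d\Z_n$ and contains $d$.
\item Since $H$ has index $\kappa$, the component count follows.
\end{itemize}

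The second step identifies $\Cay(\kappa\Z_n,C)$ with $\Gamma'$ via the isomorphism $\kappa\Z_n\to\Z_{n'}$, $\kappa x\mapsto x$. We check that this map sends $\Theta^{\sigma(d)}_{n,\chi}(d)$ onto $\Theta^{\sigma'(d')}_{n',\chi}(d')$, where $d'=d/\kappa$.
\begin{itemize}[leftmargin=*]
\item An element $dk$ with $k\in\Z_{n/d}^\times$ maps to $d'k$, and $n'/d'=n/d$. So the unit condition and the value $\chi(k)$ are literally the same.
\item The admissibility condition $\Delta\mid n'/d'$ is also unchanged.
\item Note that $d'$ divides $n'$, and the elements of $X'$ remain distinct.
\end{itemize}
This gives $\Gamma\cong\kappa\cdot\Gamma'$ as a disjoint union. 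It follows that $\Gamma$ is connected iff $\kappa=1$. It also follows that $A(\Gamma)$ is permutation-similar to a block diagonal matrix with $\kappa$ copies of $A(\Gamma')$, which gives the characteristic polynomial statement.

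To check that $\Gamma'$ is connected, apply the first step to it. The gcd of $X'$ together with $n'$ is $1$, since $\gcd(X')=1$.

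The only point needing care is the nonemptiness of each selected character class, which I expect to be the main (though mild) obstacle. I plan to justify it by noting that $\chi$ is a nontrivial character of $\Z_\Delta^\times$, and the reduction map $\Z_{n/d}^\times\to\Z_\Delta^\times$ is surjective. Therefore both values $\pm1$ occur. Alternatively, nonemptiness follows at once from Proposition~\ref{prop:parity}: the map $x\mapsto -x$ gives a bijection between the two classes, and their union is nonempty.
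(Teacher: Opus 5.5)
Your proposal is correct and follows essentially the same route as the paper: both identify $C(X,\sigma)$ with $\kappa C(X',\sigma')$ under $\Z_{n'}\cong\kappa\Z_n$, and both use that each selected class generates $d\Z_n$ (equivalently $d'\Z_{n'}$), so that $\gcd(X')=1$ gives the connectedness of $\Gamma'$. Your explicit check that each class $\Theta^{\sigma(d)}_{n,\chi}(d)$ is nonempty (for instance because $\chi(1)=1$ and $\chi(-1)=-1$) fills in a point that the paper uses only tacitly.
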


\begin{proof}
We first compare the gcd-classes defining $\Gamma$ and $\Gamma'$.  For
$d=\kappa d'$, we have
\[
   \frac{n'}{d'}=\frac nd,
\]
so the divisors in $X'$ satisfy the condition in Definition~\ref{def:main-graph}.  Moreover,
\[
   \Theta_{n,\chi}^{\sigma(d)}(d)
   =\kappa\Theta_{n',\chi}^{\sigma'(d')}(d'),
\]
where multiplication by $\kappa$ identifies $\Z_{n'}$ with the subgroup
$\kappa\Z_n$ of $\Z_n$.  Thus
\[
   C(X,\sigma)=\kappa C(X',\sigma').
\]
It follows that every edge of $\Gamma$ joins two vertices in the same residue
class modulo $\kappa$.  To identify the subgraphs on these classes, fix
$c\in\{0,\ldots,\kappa-1\}$.  The map
\[
   c+\kappa y\longmapsto y
\]
identifies the subgraph on $c+\kappa\Z_n$ with $\Gamma'$.  Since
every element of $\Theta_{n',\chi}^{\sigma'(d')}(d')$ has gcd $d'$ with
$n'$, each selected gcd-class generates the additive subgroup $d'\Z_{n'}$.
Since
$\gcd\{d':d'\in X'\}=1$, the full connection set of $\Gamma'$ generates
$\Z_{n'}$, so $\Gamma'$ is connected.  These $\kappa$ cosets are therefore
the connected components of $\Gamma$.  Finally, the Hermitian adjacency
matrix of $\Gamma$ is permutation similar to the direct sum of $\kappa$
copies of the Hermitian adjacency matrix of $\Gamma'$, which gives the
assertion about the characteristic polynomial.
\end{proof}

\subsection{Relation with integral oriented circulant graphs}

The conductor-$4$ case connects the present notation with the integral
oriented circulant graphs.  Let $\Delta=4$.  For $r\in\{1,3\}$ and every divisor $d$ of $n$ satisfying
$4\mid n/d$, define
\[
   \Theta_n^r(d)
   =
   \{\,dk\in\Z_n:\ k\in\Z_{n/d}^{\times},\ k\equiv r\pmod4\,\}.
\]
The character $\chi_4$ separates the residue classes $1$ and $3$ modulo $4$,
so
\[
   \Theta_{n,\chi_4}^{+1}(d)=\Theta_n^1(d),\qquad
   \Theta_{n,\chi_4}^{-1}(d)=\Theta_n^3(d).
\]
Hence the connection set in Definition~\ref{def:main-graph} becomes
\[
   \bigcup_{d\in X}
   \begin{cases}
      \Theta_n^1(d), & \sigma(d)=1,\\
      \Theta_n^3(d), & \sigma(d)=-1.
   \end{cases}
\]
Thus the graphs $\mathcal I_{n,\chi_4}(X,\sigma)$ are precisely the integral
oriented circulant graphs studied in
\cite{KadyanBhattacharjya2021,KadyanBhattacharjya2023,Song2024IOC}.  The
cyclic connection-set description and Ramanujan-sum formula are given
explicitly in \cite{KadyanBhattacharjya2023}.  PST and MST in this integral
oriented family were characterized in \cite{Song2024IOC}, and the integral
mixed circulant case was characterized in \cite{SongLin2024Mixed}.  We retain
the conductor-$4$ specialization of the spectral formula below because it is
equivalent to, but arranged differently from, the piecewise $2$-adic formula
in \cite{Song2024IOC}.

\subsection{Gauss sums, Ramanujan sums, and the spectrum}

We next evaluate the Fourier contribution of a single selected character
class.  For $r\ge1$, let
\[
   c_r(j)=\sum_{a\in \Z_r^\times}\zeta_r^{ja}
\]
be the Ramanujan sum.  We use its standard properties; see, for example,
\cite{Apostol1976}.  Let $\varphi$ denote Euler's totient function and let
$\mu$ denote the M\"obius function.  If $g=(r,j)$,
then
\begin{equation}\label{eq:ramanujan-divisor}
   c_r(j)=\sum_{e\mid g}e\mu(r/e)
   =\mu(r/g)\frac{\varphi(r)}{\varphi(r/g)},
\end{equation}
and $c_1(j)=1$.  The next definition records the contribution from either of
the two classes determined by $\chi$.

\begin{lemma}\label{lem:ramanujan-independent}
Let $\ell$ be a prime with $\ell\nmid m$.  The functions
$j\mapsto c_q(j)$ on $\Z_m$, indexed by the divisors $q\mid m$, are linearly
independent over the field $\mathbb F_\ell$ with $\ell$ elements.
\end{lemma}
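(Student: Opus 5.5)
Since $c_q(j)$ depends only on $(j,m)$ (indeed only on $(j,q)$), each function $j\mapsto c_q(j)$ is constant on the gcd-classes $\Theta_m(g)=\{j\in\Z_m:\gcd(j,m)=g\}$, $g\mid m$. I will therefore view each $c_q$ as a vector indexed by the divisors $g$ of $m$, with entries $c_q(g)$. This gives a square matrix $M=(c_q(g))_{q,g\mid m}$ with integer entries. Linear independence over $\mathbb F_\ell$ of the functions on $\Z_m$ is then equivalent to $M$ being invertible modulo $\ell$: a relation $\sum_q a_q c_q\equiv 0$ on $\Z_m$ is the same as $\sum_q a_q c_q(g)\equiv0$ for every $g\mid m$. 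So the task reduces to showing $\ell\nmid\det M$.

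To compute $\det M$, I use the divisor formula \eqref{eq:ramanujan-divisor}. For $g\mid m$ we have $(q,g)$ as the gcd, so
\[
  c_q(g)=\sum_{e\mid (q,g)} e\,\mu(q/e)=\sum_{e\mid m}\bigl[e\mid q\bigr]\mu(q/e)\,e\,\bigl[e\mid g\bigr].
\]
This exhibits $M=P\,D\,Z$, where $P_{q,e}=[e\mid q]\mu(q/e)$, $D=\operatorname{diag}(e)_{e\mid m}$, and $Z_{e,g}=[e\mid g]$. If the divisors are ordered by size, both $P$ and $Z$ are triangular with ones on the diagonal, since $\mu(1)=1$ and $[e\mid e]=1$. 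Hence they are unimodular over $\Z$, and
\[
  \det M=\pm\prod_{e\mid m} e,
\]
where the sign is in fact $+1$.

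Every prime factor of $\det M$ then divides $m$. Since $\ell\nmid m$, the determinant is nonzero modulo $\ell$. Therefore $M$ is invertible over $\mathbb F_\ell$ and the functions $c_q$ are linearly independent. (The reduction of $c_q$ modulo $\ell$ makes sense because the $c_q$ are integer valued.)

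The only delicate step is the factorization. One must check that $e\mid(q,g)$ splits as $[e\mid q][e\mid g]$ with $e$ ranging over divisors of $m$, which holds because $q,g\mid m$. One must also check that the triangularity is with respect to a common ordering of the index set: the rows of $M$ are indexed by $q$ and the columns by $g$, so I order both index sets the same way by size. I expect no further obstacle.
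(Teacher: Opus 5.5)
Your proof is correct and is essentially the paper's argument in matrix form. The paper's two M\"obius inversions correspond to inverting your unitriangular factors $Z$ and $P$, and its step $eB_e=0\Rightarrow B_e=0$ corresponds to the invertibility of $D$ modulo $\ell$; your factorization additionally gives the explicit determinant $\prod_{e\mid m}e$.
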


\begin{proof}
Let
\[
   F(j)=\sum_{q\mid m}a_q c_q(j)
\]
vanish in $\mathbb F_\ell$ for every $j$.  To express the coefficients in a
form suitable for M\"obius inversion, the divisor formula for Ramanujan sums
gives
\[
   F(j)=\sum_{e\mid(j,m)}eB_e,
   \qquad
   B_e=\sum_{\substack{q:\ e\mid q\mid m}}a_q\mu(q/e).
\]
Now let $(j,m)$ range over the divisors of $m$.  M\"obius inversion gives
$eB_e=0$ for every $e\mid m$.  Since $\ell\nmid m$, each such $e$ is
invertible in $\mathbb F_\ell$, and hence $B_e=0$.  Applying M\"obius inversion
once more to the definition of $B_e$ yields $a_q=0$ for every $q\mid m$.
\end{proof}

\begin{lemma}\label{lem:unit-reduction}
If $N\mid n$, then the natural reduction map
\[
   \Z_n^\times\longrightarrow\Z_N^\times
\]
is surjective.
\end{lemma}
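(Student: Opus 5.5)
The plan is to reduce, via the Chinese remainder theorem, to the case where $n$ and $N$ are prime powers, and then handle prime powers directly. The key point is that a residue class modulo $N$ lifts to a residue class modulo $n$ in which we can choose a representative avoiding every prime divisor of $n$.

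\emph{Reduction.} Fix $a\in\Z_N^\times$, so $\gcd(a,N)=1$. We look for an integer $x\equiv a\pmod N$ with $\gcd(x,n)=1$. Let $P$ be the set of primes dividing $n$. Primes $\ell\in P$ with $\ell\mid N$ already satisfy $\ell\nmid x$ for every $x\equiv a\pmod N$, because $\ell\nmid a$. For the remaining primes, let $M$ be the product of the primes in $P$ that do not divide $N$; then $\gcd(M,N)=1$. By the Chinese remainder theorem choose $x$ with $x\equiv a\pmod N$ and $x\equiv1\pmod M$. Such an $x$ is divisible by no prime of $P$, so $\gcd(x,n)=1$. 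Its class in $\Z_n$ is therefore a unit, and it reduces to $a$ modulo $N$. This proves surjectivity.

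Alternatively, one can argue by counting. The map is a group homomorphism, and each fiber is a coset of the kernel. The kernel consists of the units $x\equiv1\pmod N$, and it has exactly $\varphi(n)/\varphi(N)$ elements, which one checks prime by prime using $\varphi(p^k)=p^{k-1}(p-1)$. Hence the image has size $\varphi(N)$. I would present the CRT construction, since it avoids this verification.

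No step is a real obstacle. The only care needed is in separating the primes of $n$ that divide $N$ from those that do not, so that the CRT moduli $N$ and $M$ are coprime. This is automatic by the definition of $M$.
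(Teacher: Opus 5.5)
Your proof is correct and is essentially the paper's argument: both lift $a$ by the Chinese remainder theorem, keeping its residue at the primes dividing $N$ (where it is automatically a unit) and forcing residue $1$ at the primes of $n$ that do not divide $N$. The only difference is packaging. The paper applies the Chinese remainder theorem over all prime-power factors $p^{a_p}$ of $n$, whereas you use the two coprime moduli $N$ and $M=\prod_{\ell\mid n,\ \ell\nmid N}\ell$.
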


\begin{proof}
Write
\[
   n=\prod_p p^{a_p},
   \qquad
   N=\prod_p p^{b_p},
   \qquad 0\le b_p\le a_p.
\]
Let $x$ be a unit modulo $N$.  For each prime $p$ with $b_p>0$, choose a
unit modulo $p^{a_p}$ that reduces to $x$ modulo $p^{b_p}$; for
$b_p=0$, choose the residue $1$ modulo $p^{a_p}$.  The Chinese remainder
theorem combines these choices into a unit modulo $n$ that reduces to $x$
modulo $N$.
\end{proof}

\begin{definition}
Let $N$ be a multiple of $\Delta$.  For $\varepsilon\in\{\pm1\}$ and
$j\in\Z$, define
\[
   s_{N,\chi}^{\varepsilon}(j)
   =
   \ii\sum_{\substack{a\in\Z_N^\times\\ \chi(a)=\varepsilon}}
      \left(\zeta_N^{ja}-\zeta_N^{-ja}\right).
\]
\end{definition}

\begin{lemma}\label{lem:half-full}
If $\chi$ is odd, then
\[
   s_{N,\chi}^{\varepsilon}(j)
   =
   \ii\,\varepsilon\sum_{a\in\Z_N^\times}\chi(a)\zeta_N^{ja}.
\]
\end{lemma}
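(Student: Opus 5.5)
The plan is to split the sum defining $s_{N,\chi}^{\varepsilon}(j)$ into its two pieces and rewrite each piece as a sum over all of $\Z_N^\times$ weighted by $\chi$. The main tool is the reflection $a\mapsto -a$ from Proposition~\ref{prop:parity}.

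First, write
\[
   s_{N,\chi}^{\varepsilon}(j)=\ii\sum_{a\in C_\chi^{\varepsilon}(N)}\zeta_N^{ja}-\ii\sum_{a\in C_\chi^{\varepsilon}(N)}\zeta_N^{-ja}.
\]
In the second sum, substitute $b=-a$. By Proposition~\ref{prop:parity}, $-C_\chi^{\varepsilon}(N)=C_\chi^{-\varepsilon}(N)$, because the proposition's argument applies symmetrically to both classes. The second sum therefore becomes $\sum_{b\in C_\chi^{-\varepsilon}(N)}\zeta_N^{jb}$, and
\[
   s_{N,\chi}^{\varepsilon}(j)=\ii\Bigl(\sum_{a\in C_\chi^{\varepsilon}(N)}\zeta_N^{ja}-\sum_{a\in C_\chi^{-\varepsilon}(N)}\zeta_N^{ja}\Bigr).
\]

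Next, note that $\Delta\mid N$. Hence every $a\in\Z_N^\times$ reduces to a unit modulo $\Delta$, so $\chi(a)\in\{\pm1\}$. This means $\Z_N^\times$ is the disjoint union of $C_\chi^{+}(N)$ and $C_\chi^{-}(N)$. On $C_\chi^{\varepsilon}(N)$ we have $\chi(a)=\varepsilon$, so $1=\varepsilon\chi(a)$. On $C_\chi^{-\varepsilon}(N)$ we have $-1=\varepsilon\chi(a)$. The bracketed difference is therefore $\sum_{a\in\Z_N^\times}\varepsilon\chi(a)\zeta_N^{ja}$, which gives the stated identity.

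There is no real obstacle here. The only point needing care is that $\chi(a)$ is well defined and nonzero on $\Z_N^\times$, and this follows from $\Delta\mid N$. The rest is bookkeeping with the reindexing $a\mapsto -a$.
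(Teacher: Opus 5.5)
Your proposal is correct and is essentially the paper's argument: both use the oddness of $\chi$ to get $-C_\varepsilon=C_{-\varepsilon}$ and reindex by $a\mapsto -a$. The only difference is direction. The paper starts from the full character sum and folds it onto $C_\varepsilon$, while you start from $s_{N,\chi}^{\varepsilon}(j)$ and unfold it into the full sum. Your explicit remark that $\chi(a)=\pm1$ on $\Z_N^\times$ because $\Delta\mid N$ is a point the paper leaves implicit.
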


\begin{proof}
Put $C_\varepsilon=\{a\in\Z_N^\times:\chi(a)=\varepsilon\}$.  The oddness
of $\chi$ gives $-C_\varepsilon=C_{-\varepsilon}$.  Consequently,
\[
\begin{aligned}
\sum_{a\in\Z_N^\times}\chi(a)\zeta_N^{ja}
  &=\varepsilon\sum_{a\in C_\varepsilon}\zeta_N^{ja}
    -\varepsilon\sum_{a\in C_\varepsilon}\zeta_N^{-ja}  \\
  &=\varepsilon\sum_{a\in C_\varepsilon}\left(\zeta_N^{ja}-\zeta_N^{-ja}\right).
\end{aligned}
\]
The right-hand side is $\varepsilon^{-1}s_{N,\chi}^{\varepsilon}(j)/\ii$.
Since $\varepsilon^2=1$, multiplication by $\ii\varepsilon$ gives the
claimed identity.
\end{proof}

For $q\ge1$, let $v_p(q)$ denote the exponent of the prime $p$ in $q$, and put
\[
   Q_\Delta(q)=\prod_{p\mid\Delta}p^{v_p(q)},
   \qquad
   M_\Delta(q)=\frac{q}{Q_\Delta(q)}.
\]
Thus $Q_\Delta(q)$ is the part of $q$ supported on the primes of $\Delta$,
whereas $(M_\Delta(q),\Delta)=1$.  This decomposition separates the ramified
and coprime parts of the character sum.

\begin{theorem}\label{thm:general-factorization}
Let $N=\Delta q$, with no coprimality assumption on $\Delta$ and $q$.  Then,
for every $j\in\Z$ and $\varepsilon\in\{\pm1\}$,
\[
   s_{N,\chi}^{\varepsilon}(j)
   =
   \begin{cases}
   -\varepsilon Q\chi(M)\chi(j/Q)c_M(j)\sqrt{\Delta},
      & Q\mid j,\\[4pt]
   0, & Q\nmid j,
   \end{cases}
\]
where $Q=Q_\Delta(q)$ and $M=M_\Delta(q)$.
\end{theorem}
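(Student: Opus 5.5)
The plan is to first strip off the sign $\varepsilon$ and the imaginary unit using Lemma~\ref{lem:half-full}, and then to factor the resulting complete character sum via the Chinese remainder theorem into a ramified part and a coprime part. By Lemma~\ref{lem:half-full}, $s_{N,\chi}^{\varepsilon}(j)=\ii\varepsilon S(j)$ with
\[
   S(j)=\sum_{a\in\Z_N^\times}\chi(a)\zeta_N^{ja}.
\]
Since $\ii\varepsilon\cdot\ii=-\varepsilon$, it suffices to prove that $S(j)=\ii\,Q\chi(M)\chi(j/Q)c_M(j)\sqrt\Delta$ when $Q\mid j$, and that $S(j)=0$ otherwise.

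Next I split $N=\Delta q=(\Delta Q)\cdot M$. Here $\Delta Q$ is supported on the primes of $\Delta$ and $(M,\Delta Q)=1$. By the Chinese remainder theorem, the map $(b,c)\mapsto a=bM+c\Delta Q$ is a bijection from $\Z_{\Delta Q}^\times\times\Z_M^\times$ onto $\Z_N^\times$. For this parametrization I will check three things:
\begin{itemize}[nosep]
\item $\zeta_N^{ja}=\zeta_{\Delta Q}^{jb}\,\zeta_M^{jc}$;
\item $\chi(a)=\chi(bM)=\chi(b)\chi(M)$, because $\Delta\mid\Delta Q$;
\item the double sum therefore factors as
\[
   S(j)=\chi(M)\,T(j)\,c_M(j),\qquad T(j)=\sum_{b\in\Z_{\Delta Q}^\times}\chi(b)\zeta_{\Delta Q}^{jb}.
\]
\end{itemize}

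For $T(j)$ I use that a residue modulo $\Delta Q$ is a unit exactly when it is a unit modulo $\Delta$, since both moduli have the same prime support. Hence the restriction to units is already encoded by $\chi$ vanishing on nonunits, and $T(j)$ can be summed over all $b\in\Z_{\Delta Q}$. I then write $b=b_0+\Delta t$ with $b_0\in\Z_\Delta$ and $t\in\Z_Q$. This gives $\chi(b)=\chi(b_0)$ and $\zeta_{\Delta Q}^{jb}=\zeta_{\Delta Q}^{jb_0}\zeta_Q^{jt}$. The inner sum over $t$ equals $Q$ if $Q\mid j$ and $0$ otherwise, which already yields the vanishing case. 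If $j=Qj'$, then $\zeta_{\Delta Q}^{jb_0}=\zeta_\Delta^{j'b_0}$, and therefore
\[
   T(j)=Q\sum_{b_0\in\Z_\Delta}\chi(b_0)\zeta_\Delta^{j'b_0}.
\]

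The main obstacle is evaluating this last sum as $\chi(j')\tau(\chi)$ for every integer $j'$, including those with $(j',\Delta)>1$. When $(j',\Delta)=1$, the substitution $b_0\mapsto \bar{j'}b_0$ gives the identity directly. When $(j',\Delta)>1$, I must show the sum vanishes; this is the separability of Gauss sums for \emph{primitive} characters. I would cite it from \cite[Chapter~9]{IrelandRosen}. Alternatively, with $g=(j',\Delta)>1$, I would pick $u\equiv1\pmod{\Delta/g}$ with $u$ a unit and $\chi(u)=-1$; such a $u$ exists by primitivity. Substituting $b_0\mapsto ub_0$ then shows the sum equals its own negative. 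Finally, Proposition~\ref{prop:delta} gives $\tau(\chi)=\ii\sqrt\Delta$, so $S(j)=\ii Q\chi(M)\chi(j')c_M(j)\sqrt\Delta$, which is the required formula.
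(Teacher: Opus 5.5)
Your proposal is correct and follows essentially the same route as the paper. Both use a CRT factorization of the complete character sum into a ramified Gauss-type sum times $c_M(j)$, lift units modulo $\Delta Q$ as $b_0+\Delta t$ to get the geometric sum that forces $Q\mid j$, and apply the identity $\sum_{b_0}\chi(b_0)\zeta_\Delta^{j'b_0}=\chi(j')\tau(\chi)$ for primitive $\chi$. The differences are cosmetic: your parametrization $a=bM+c\Delta Q$ extracts $\chi(M)$ directly rather than through $\chi(M_1)=\chi(M)$, and you prove the vanishing case $(j',\Delta)>1$, which the paper only cites.
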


\begin{proof}
Write $q=QM$.  By construction $(\Delta Q,M)=1$, so the Chinese remainder
theorem applies.  Choose integers $M_1$ and $D_1$ such that
\[
   MM_1\equiv1\pmod{\Delta Q},
   \qquad
   \Delta QD_1\equiv1\pmod M.
\]
Every $a\in\Z_N^\times$ has a unique representation
\[
   a\equiv uMM_1+v\Delta QD_1\pmod N,
   \qquad
   u\in\Z_{\Delta Q}^{\times},\quad v\in\Z_M^\times.
\]
Since $\chi(a)=\chi(u)$, the complete character sum in
Lemma~\ref{lem:half-full} factors as
\[
\begin{aligned}
   \sum_{a\in\Z_N^\times}\chi(a)\zeta_N^{ja}
   &=
   \left(
      \sum_{u\in\Z_{\Delta Q}^\times}
         \chi(u)\zeta_{\Delta Q}^{jM_1u}
   \right)
   \left(
      \sum_{v\in\Z_M^\times}
         \zeta_M^{jD_1v}
   \right)\\
   &=
   \left(
      \sum_{u\in\Z_{\Delta Q}^\times}
         \chi(u)\zeta_{\Delta Q}^{jM_1u}
   \right)c_M(j).
\end{aligned}
\]
The last equality follows because $D_1$ is a unit modulo $M$ and therefore
does not change the Ramanujan sum.  Every prime divisor of $Q$ divides
$\Delta$.  Consequently, the units modulo
$\Delta Q$ are precisely the residues $u=x+\Delta t$, where
$x\in\Z_\Delta^\times$ and $0\le t<Q$.  Therefore
\[
\sum_{u\in\Z_{\Delta Q}^\times}
   \chi(u)\zeta_{\Delta Q}^{jM_1u}
=
\sum_{x\in\Z_\Delta^\times}
   \chi(x)\zeta_{\Delta Q}^{jM_1x}
\sum_{t=0}^{Q-1}\zeta_Q^{jM_1t}.
\]
The second factor is a complete geometric sum, and $M_1$ is a unit modulo
$Q$.

\smallskip
\noindent\textbf{Case 1.} $Q\nmid j$.
The geometric sum vanishes, so the complete character sum and
$s_{N,\chi}^{\varepsilon}(j)$ are both zero.

\smallskip
\noindent\textbf{Case 2.} $Q\mid j$.
Write $j=Qh$.  The preceding display then becomes
\[
   Q\sum_{x\in\Z_\Delta^\times}
      \chi(x)\zeta_\Delta^{hM_1x}
   =Q\chi(h)\chi(M_1)\tau(\chi).
\]
Here we used the standard identity
\[
   \sum_{x\in\Z_\Delta^\times}\chi(x)\zeta_\Delta^{vx}
   =\chi(v)\tau(\chi)
\]
for a primitive quadratic character.  The identity also holds when
$(v,\Delta)>1$, in which case both sides are zero.  The congruence
$M M_1\equiv1\pmod\Delta$ and the quadratic nature of $\chi$ give
$\chi(M_1)=\chi(M)$.  Substituting this evaluation into the CRT
factorization, and then applying Lemma~\ref{lem:half-full} together with
$\tau(\chi)=\ii\sqrt{\Delta}$, gives the stated formula.
\end{proof}

\begin{example}\label{ex:basic-spectra}
Take $N=\Delta$ and $\varepsilon=1$ in
Theorem~\ref{thm:general-factorization}.  Then
\[
   s_{\Delta,\chi}^{+1}(j)=-\chi(j)\sqrt{\Delta}.
\]
For $\Delta=3$, $4$, and $8$, respectively, this gives
\[
\begin{aligned}
\Delta=3:\quad&
 C_\chi^+(3)=\{1\},&
(\lambda_0,\lambda_1,\lambda_2)&=(0,-\sqrt3,\sqrt3),\\
\Delta=4:\quad&
 C_\chi^+(4)=\{1\},&
(\lambda_0,\ldots,\lambda_3)&=(0,-2,0,2),\\
\Delta=8:\quad&
 C_\chi^+(8)=\{1,3\},&
(\lambda_0,\ldots,\lambda_7)&=(0,-\sqrt8,0,-\sqrt8,0,\sqrt8,0,\sqrt8).
\end{aligned}
\]
These direct Fourier evaluations also verify the sign in
Theorem~\ref{thm:general-factorization} for the three conductors that occur in the
state-transfer classification.
\end{example}

\begin{theorem}\label{thm:general-spectrum}
Let $\Gamma=\mathcal I_{n,\chi}(X,\sigma)$ be an oriented circulant graph.
For $d\in X$, put
\[
   q_d=\frac{n}{d\Delta},\qquad
   Q_d=Q_\Delta(q_d),\qquad M_d=M_\Delta(q_d).
\]
Then the Hermitian eigenvalues of $\Gamma$ are
\begin{equation}\label{eq:general-spectrum}
   \lambda_j
   =
   -\left(
      \sum_{\substack{d\in X\\Q_d\mid j}}
         \sigma(d)Q_d\chi(M_d)\chi(j/Q_d)c_{M_d}(j)
   \right)
      \sqrt{\Delta},
   \qquad 0\le j<n.
\end{equation}
In particular, every eigenvalue belongs to $\sqrt{\Delta}\Z$.
\end{theorem}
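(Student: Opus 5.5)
The plan is to reduce the eigenvalue of $\Gamma$ to a sum of the single-class contributions $s^{\varepsilon}_{N,\chi}$ and then apply Theorem~\ref{thm:general-factorization} to each term. First, by \eqref{eq:hermitian-eigenvalue} and the disjointness of the sets $\Theta_{n,\chi}^{\sigma(d)}(d)$ noted after Definition~\ref{def:main-graph}, the eigenvalue splits as
\[
   \lambda_j=\sum_{d\in X}\ii\sum_{c\in\Theta_{n,\chi}^{\sigma(d)}(d)}\left(\zeta_n^{jc}-\zeta_n^{-jc}\right).
\]
Hence it suffices to treat one gcd-class at a time.

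Fix $d\in X$ and put $N_d=n/d=\Delta q_d$, which is a multiple of $\Delta$ by the admissibility condition. The map $k\mapsto dk$ is a bijection from $\{k\in\Z_{N_d}^\times:\chi(k)=\sigma(d)\}$ onto $\Theta_{n,\chi}^{\sigma(d)}(d)$, since distinct $k$ modulo $n/d$ give distinct $dk$ modulo $n$. Moreover $\zeta_n^{jdk}=\zeta_{N_d}^{jk}$. Therefore the inner sum for $d$ equals exactly $s^{\sigma(d)}_{N_d,\chi}(j)$.

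Next, apply Theorem~\ref{thm:general-factorization} with $N=N_d$, $q=q_d$ and $\varepsilon=\sigma(d)$. This gives $-\sigma(d)Q_d\chi(M_d)\chi(j/Q_d)c_{M_d}(j)\sqrt\Delta$ when $Q_d\mid j$, and $0$ otherwise. Summing over $d\in X$ and factoring out $-\sqrt\Delta$ yields \eqref{eq:general-spectrum}. The index $j$ ranges over $0\le j<n$, while the formula for class $d$ is naturally periodic modulo $N_d$. This causes no trouble, because the theorem is stated for all $j\in\Z$ and $\zeta_{N_d}^{jk}$ depends only on $j$ modulo $N_d$.

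For integrality, each coefficient is an integer. Indeed $\sigma(d)$, $Q_d$, $\chi(M_d)$ and $\chi(j/Q_d)$ are integers, and $c_{M_d}(j)\in\Z$ by \eqref{eq:ramanujan-divisor}. So $\lambda_j\in\sqrt\Delta\,\Z$.

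The main obstacle is the bookkeeping in the reduction step. One must check that the parametrization of $\Theta^{\varepsilon}_{n,\chi}(d)$ by units modulo $n/d$ is bijective, that $\chi(k)$ is evaluated consistently via reduction modulo $\Delta$ (this is well defined since $\Delta\mid n/d$), and that no coprimality between $q_d$ and $\Delta$ is needed. The last point is already absorbed by Theorem~\ref{thm:general-factorization}. Beyond this, the argument is direct substitution.
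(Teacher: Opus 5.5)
Your proposal is correct and follows the paper's own argument. You split $\lambda_j$ over the disjoint gcd-classes, identify each class contribution as $s^{\sigma(d)}_{n/d,\chi}(j)$, apply Theorem~\ref{thm:general-factorization}, and sum. Your explicit bijection $k\mapsto dk$ and the integrality remark via \eqref{eq:ramanujan-divisor} fill in details that the paper leaves implicit.
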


\begin{proof}
Fix $d\in X$.  The contribution of the gcd-class
$\Theta_{n,\chi}^{\sigma(d)}(d)$ is $s_{n/d,\chi}^{\sigma(d)}(j)$.  Since
$n/d=\Delta q_d$, Theorem~\ref{thm:general-factorization} evaluates this
contribution as the summand in \eqref{eq:general-spectrum}; it is zero when
$Q_d\nmid j$.  Summing the contributions of the disjoint gcd-classes indexed
by $d\in X$ gives \eqref{eq:general-spectrum}.
\end{proof}

Together, \eqref{eq:general-spectrum} and \eqref{eq:ramanujan-divisor}
express each eigenvalue through gcds, divisibility conditions, and values of
$\chi$.  The formula also gives the following spectral symmetry.

\begin{corollary}\label{cor:spectral-symmetry}
Under the hypotheses of Theorem~\ref{thm:general-spectrum},
\[
   \lambda_0=0,
   \qquad
   \lambda_{n-j}=-\lambda_j
   \quad (1\le j<n).
\]
Consequently, the multiplicities of $\theta$ and $-\theta$ are equal.  If
$n$ is even, then $\lambda_{n/2}=0$ as well.
\end{corollary}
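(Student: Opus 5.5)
The plan is to read the three assertions off the Fourier description of the spectrum and the oddness of $\chi$; the explicit formula of Theorem~\ref{thm:general-spectrum} is needed only for the last one.

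\emph{Step 1: $\lambda_0=0$ and $\lambda_{n-j}=-\lambda_j$.} I would argue directly from \eqref{eq:hermitian-eigenvalue}. For $j=0$, each summand is $\zeta_n^{0}-\zeta_n^{0}=0$, so $\lambda_0=0$. For the reflection, replacing $j$ by $n-j$ replaces $\zeta_n^{jc}$ by $\zeta_n^{-jc}$, so
\[
\lambda_{n-j}=\ii\sum_{c\in C}\left(\zeta_n^{-jc}-\zeta_n^{jc}\right)=-\lambda_j .
\]
This needs no property of $\chi$ beyond $C\cap(-C)=\varnothing$. Alternatively, one can check it from \eqref{eq:general-spectrum}. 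The divisibility $Q_d\mid j$ is equivalent to $Q_d\mid n-j$, because $Q_d\mid q_d\mid n$. Also $c_{M_d}(n-j)=c_{M_d}(j)$, since $M_d\mid n$ and Ramanujan sums are even. Finally, $\chi((n-j)/Q_d)=\chi(-j/Q_d)=-\chi(j/Q_d)$, since $\Delta\mid n/Q_d$ and $\chi$ is odd. I would present the first argument because it is cleaner.

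\emph{Step 2: multiplicities.} The map $j\mapsto n-j$ is an involution of $\{1,\dots,n-1\}$ that sends the eigenvalue $\theta$ to $-\theta$, and $\lambda_0=0$ is fixed. The Fourier vectors form a basis, so the multiplicity of $\theta$ is the number of $j$ with $\lambda_j=\theta$. Hence the multiplicities of $\theta$ and $-\theta$ agree. For $\theta=0$ there is nothing to prove.

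\emph{Step 3: $\lambda_{n/2}=0$ when $n$ is even.} Take $j=n/2$. Then $n-j=j$, and Step 1 gives $\lambda_{n/2}=-\lambda_{n/2}$, so $\lambda_{n/2}=0$. Equivalently, $\zeta_n^{(n/2)c}=\pm1$ is real, so each summand $\zeta_n^{jc}-\zeta_n^{-jc}$ vanishes. This is the only point needing care, and the reflection argument makes it immediate.
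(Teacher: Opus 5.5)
Your proof is correct, but your main argument follows a different route from the paper's. The paper reads all three assertions off the explicit formula \eqref{eq:general-spectrum}. There, the condition $Q_d\mid j$ and the Ramanujan sum $c_{M_d}(j)$ are unchanged under $j\mapsto -j$, while $\chi(-j/Q_d)=-\chi(j/Q_d)$ because $\chi$ is odd, and $\lambda_0=0$ because $\chi(0)=0$. That is exactly the alternative you sketch, and you handle the reduction modulo $n$ a little more carefully by noting $\Delta\mid n/Q_d$.

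Your primary argument uses only \eqref{eq:hermitian-eigenvalue}. It therefore needs neither Theorem~\ref{thm:general-spectrum} nor the oddness of $\chi$, and it gives the symmetry for every oriented circulant graph. In effect, $A$ is $\ii$ times a real skew-symmetric matrix, so complex conjugation sends the Fourier eigenvector $\varphi_j$ (eigenvalue $\lambda_j$) to $\varphi_{-j}$ (eigenvalue $-\lambda_j$).

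The paper's route has its own merits. It works directly with the integer coefficients, giving $\eta_{n-j}=-\eta_j$ as the case $u=-1$ of Lemma~\ref{lem:unit-covariance}. It also doubles as a consistency check on the signs in \eqref{eq:general-spectrum}.

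Your involution count for the multiplicities and your reflection argument for $\lambda_{n/2}$ match the paper; the paper leaves the multiplicity count implicit. One wording point: your opening sentence says the explicit formula is needed for the last assertion, but your Step~3 does not use it. In fact none of the three assertions requires it.
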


\begin{proof}
In \eqref{eq:general-spectrum}, the divisibility condition $Q_d\mid j$ and
the Ramanujan sum are unchanged when $j$ is replaced by $-j$, whereas
$\chi(-j/Q_d)=-\chi(j/Q_d)$.  The assertion at $j=0$ follows because
$\chi(0)=0$.  When $n$ is even, the indices $n/2$ and $-n/2$ agree modulo
$n$, so the same symmetry gives $\lambda_{n/2}=0$.
\end{proof}

Let $\delta$ be the square-free part of $\Delta$.  Then $\Delta=\delta$ when
$\delta\equiv3\pmod4$, whereas $\Delta=4\delta$ when
$\delta\equiv1,2\pmod4$.  Theorem~8.1 of Godsil and Zhang
\cite{GodsilZhang2024} describes the admissible connection sets as unions of
orbits of an index-two subgroup of the unit group.  The following theorem
specializes that result to cyclic groups under the Hermitian adjacency
convention used here.

\begin{theorem}\label{thm:cyclic-classification}
Let $\chi$ be an odd primitive quadratic character of conductor $\Delta$,
and let $\Gamma=\Cay(\Z_n,C)$ be a nonempty oriented circulant graph.  All
Hermitian eigenvalues of $\Gamma$ belong to $\sqrt{\delta}\Z$ if and only if
$\Delta\mid n$ and
\[
   C=C(X,\sigma)
\]
for some nonempty set $X$ of divisors $d$ satisfying $\Delta\mid n/d$ and a
map $\sigma:X\to\{\pm1\}$.  In this case, the eigenvalues of $\Gamma$ belong
to $\sqrt{\Delta}\Z$.  If $\Delta=4\delta$, their coefficients relative to
$\sqrt\delta$ are even.
\end{theorem}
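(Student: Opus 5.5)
The "if" direction and the final two sentences follow from Theorem~\ref{thm:general-spectrum}. If $C=C(X,\sigma)$, then \eqref{eq:general-spectrum} writes every $\lambda_j$ as an integer times $\sqrt\Delta$. When $\Delta=\delta$, this is already $\sqrt\delta\Z$. When $\Delta=4\delta$, we have $\sqrt\Delta=2\sqrt\delta$, so each coefficient relative to $\sqrt\delta$ is twice an integer, hence even. So the work is in the converse, which I would reduce to a Galois-theoretic description of $C$.

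\textbf{Converse.} Assume every $\lambda_j$ lies in $\sqrt\delta\Z$. Let $K=\Q(\zeta_n)$ and $G=\operatorname{Gal}(K/\Q)\cong\Z_n^\times$, where $u$ acts by $\zeta_n\mapsto\zeta_n^u$. By \eqref{eq:hermitian-eigenvalue}, $\lambda_j/\ii=\sum_{c\in C}(\zeta_n^{jc}-\zeta_n^{-jc})$ lies in $K$. Since $\lambda_j\in\sqrt\delta\,\Q$, it follows that $\sqrt{-\delta}\in K$ whenever some $\lambda_j\neq0$. This holds here, because $\Gamma$ is nonempty and so $A\neq0$.

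The subfield $\Q(\sqrt{-\delta})$ has discriminant $-\Delta$ and conductor $\Delta$. Hence $\Q(\sqrt{-\delta})\subseteq\Q(\zeta_n)$ forces $\Delta\mid n$. The automorphism $u$ acts on $\sqrt{-\delta}$ by the sign $\chi(u)$, where $\chi$ is the odd character of conductor $\Delta$ (its Gauss sum is $\ii\sqrt\Delta$, by Proposition~\ref{prop:delta}). Applying $u$ to $\lambda_j/\ii$ gives $\lambda_{uj}/\ii$. Since $\lambda_j/\ii=r_j\sqrt{-\delta}$ with $r_j\in\Q$, we obtain
\[
\lambda_{uj}=\chi(u)\lambda_j \qquad\text{for all } u\in\Z_n^\times \text{ and all } j.
\]

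\textbf{Recovering $C$.} Write the connection set as the antisymmetric indicator $f=1_C-1_{-C}$. Then $\lambda_j=\ii\hat f(j)$, where $\hat f(j)=\sum_x f(x)\zeta_n^{jx}$. Fourier inversion, together with the relation $\lambda_{uj}=\chi(u)\lambda_j$, gives $f(u^{-1}x)=\chi(u)f(x)$, equivalently
\[
f(ux)=\chi(u)f(x).
\]
Now fix a gcd-class $\Theta_n(d)$ that meets $C$, and pick $x=dk$ in it.

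\begin{enumerate}
\item \emph{Admissibility of $d$.} Every $u\equiv1\pmod{n/d}$ fixes $x$, so $f(x)=\chi(u)f(x)$. Since $f(x)\ne0$, we get $\chi(u)=1$ for all such $u$. Such $u$ form the kernel of the reduction $\Z_n^\times\to\Z_{n/d}^\times$, so $\chi$ is trivial on that kernel. Because $\chi$ is primitive of conductor $\Delta$, this means $\chi$ factors through $\Z_{n/d}^\times$ (using Lemma~\ref{lem:unit-reduction}), and hence $\Delta\mid n/d$.
\item \emph{Shape of $C$ on the class.} Lemma~\ref{lem:unit-reduction} shows that $\Z_n^\times$ acts transitively on $\Theta_n(d)$. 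Together with $f(ux)=\chi(u)f(x)$, this shows that $C\cap\Theta_n(d)$ is exactly one of $\Theta^{\pm1}_{n,\chi}(d)$. We set $\sigma(d)$ to be that sign, and let $X$ consist of all such $d$.
\end{enumerate}

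\textbf{Expected obstacle.} The main difficulty is the admissibility step. One has to argue cleanly that triviality of $\chi$ on the kernel of reduction mod $n/d$ forces $\Delta\mid n/d$. The standard fact to use is that the conductor of a character modulo $n$ is the least modulus through which it factors, and the factoring moduli are closed under $\gcd$. Another point needing care is that $\lambda_j\in\sqrt\delta\Z$ alone yields the Galois action only after identifying $\Q(\sqrt{-\delta})$ with the fixed field of $\ker\chi$.
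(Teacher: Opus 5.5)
Your proof is correct, but it takes a different route from the paper's after the shared first step. That first step, deducing $\Delta\mid n$ from $\mathbb{Q}(\sqrt{-\delta})\subseteq\mathbb{Q}(\zeta_n)$, is the same in both.

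\textbf{How the paper proceeds.} The paper obtains the orbit structure of $C$ by citing \cite[Theorem~8.1]{GodsilZhang2024}: admissible connection sets are unions of orbits of $H=\ker\chi\le\Z_n^\times$. An index-two argument then shows that when $\Delta\nmid n/d$, the subgroup $H$ maps onto $\Z_{n/d}^\times$. So such a gcd-class is a single $H$-orbit closed under negation, and it cannot occur.

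\textbf{How you proceed.} You derive the orbit structure yourself. The Galois automorphism $\zeta_n\mapsto\zeta_n^u$ sends $\lambda_j/\ii$ to $\lambda_{uj}/\ii$. Because $\tau(\chi)=\ii\sqrt\Delta$, it multiplies $\sqrt{-\delta}$ by $\chi(u)$. This gives $\lambda_{uj}=\chi(u)\lambda_j$, which is the converse of Lemma~\ref{lem:unit-covariance}. Fourier inversion turns this into $f(ux)=\chi(u)f(x)$. That is exactly the statement that $C$ is a union of $H$-orbits with $uC=-C$ for $u\notin H$.

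Your admissibility step reads off $\ker(\Z_n^\times\to\Z_{n/d}^\times)\subseteq H$ directly from the stabilizer of $x$, so you avoid the index computation. It then closes with the same conductor fact the paper uses: a character trivial on that kernel is induced from modulus $n/d$, hence $\Delta\mid n/d$. For sufficiency you use only Theorem~\ref{thm:general-spectrum}, together with $\sqrt\Delta\Z\subseteq\sqrt\delta\Z$. This makes the paper's second appeal to Godsil--Zhang unnecessary.

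\textbf{What each route buys.} Yours is a self-contained proof of the cyclic case of the cited connection-set theorem. It needs no translation between the real skew-adjacency and Hermitian conventions. The paper's proof is shorter on the page because it outsources exactly that step.
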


\begin{proof}
\noindent\textbf{Necessity.}
Suppose that all Hermitian eigenvalues of $\Gamma$ belong to
$\sqrt{\delta}\Z$.  Let $A$ be the Hermitian adjacency matrix of $\Gamma$ and put
$S=-\ii A$.  Since $C$ is nonempty, $S$ is a nonzero real skew-symmetric
circulant matrix.  It therefore has a nonzero eigenvalue of the form
$k\sqrt{-\delta}$ for some $k\in\Z$.  The eigenvalues of a circulant matrix
with integer entries and order $n$ belong to $\Q(\zeta_n)$, and hence
\[
   \Q(\sqrt{-\delta})\subseteq\Q(\zeta_n).
\]
The conductor of the quadratic field on the left is $\Delta$.  The
conductor-divisibility theorem for subfields of cyclotomic fields therefore
gives $\Delta\mid n$; see, for
example, \cite[Chapter~9]{IrelandRosen}.

We next describe the permitted gcd-classes.  Let
\[
   H=\{u\in\Z_n^\times:\chi(u)=1\}.
\]
Godsil and Zhang use the negative square-free radicand $-\delta$ for an
oriented graph.  After multiplication of their real skew adjacency matrix by
$\ii$, their characterization \cite[Theorem~8.1]{GodsilZhang2024} says that
the required connection sets are exactly the unions of $H$-orbits.  Consider
the gcd-class indexed by $d$ and put
$N=n/d$.  Let
\[
    \pi_N:\Z_n^\times\longrightarrow\Z_N^\times
\]
be the reduction map.  By Lemma~\ref{lem:unit-reduction}, this map is onto.
If $\Delta\mid N$, the character $\chi$ is defined modulo $N$ and
\[
   \pi_N(H)=\ker\!\left(\chi|_{\Z_N^\times}\right).
\]
Hence the gcd-class splits into the two orbits
$\Theta_{n,\chi}^{+1}(d)$ and $\Theta_{n,\chi}^{-1}(d)$.

Suppose now that $\Delta\nmid N$.  If $\pi_N(H)$ were proper, then it would
have index two in $\Z_N^\times$.  To see this, note that $\pi_N$ is onto and
$H$ has index two in $\Z_n^\times$.  Put
$K=\pi_N^{-1}(\pi_N(H))$.  Then
\[
   H\subseteq K,
\]
so $[\Z_n^\times:K]$ divides $[\Z_n^\times:H]=2$.  Since
$\pi_N(H)$ is proper and $\pi_N$ is onto, $K$ is proper.  Hence
$[\Z_n^\times:K]=2$, and
\[
   [\Z_n^\times:K]
   =
   [\Z_N^\times:\pi_N(H)]
   =2
   =
   [\Z_n^\times:H].
\]
It follows that $K=H$, and in particular $\ker\pi_N\subseteq H$.  Therefore
\[
   \psi_N(\pi_N(u))=\chi(u)\qquad (u\in\Z_n^\times)
\]
defines a quadratic character $\psi_N$ modulo $N$, with
$\chi=\psi_N\circ\pi_N$.  Thus the primitive character $\chi$ would be
induced by a character whose modulus divides $N$.  Its conductor $\Delta$
would then divide $N$, a contradiction.  Hence
$\pi_N(H)=\Z_N^\times$.  The gcd-class is a single $H$-orbit and is closed
under negation, so it cannot occur in an oriented connection set.  Thus
every allowable $C$ has the form
$C(X,\sigma)$.

\smallskip
\noindent\textbf{Sufficiency.}
Suppose that $\Delta\mid n$ and
$C=C(X,\sigma)$.  The same characterization of the $H$-orbits gives the
required square-root spectrum.  Finally, Theorem~\ref{thm:general-spectrum}
shows that the eigenvalues belong to $\sqrt{\Delta}\Z$ and gives the asserted
parity of the coefficients when $\Delta=4\delta$.
\end{proof}

\begin{remark}
The nonempty hypothesis in Theorem~\ref{thm:cyclic-classification} excludes
$\Cay(\Z_n,\varnothing)$.  This empty oriented circulant graph has zero
Hermitian spectrum, which lies in $\sqrt{\delta}\Z$ for every $\delta$, but it
does not admit PST between distinct vertices.
\end{remark}

\begin{corollary}\label{cor:coprime-spectrum}
Let $\Gamma=\mathcal I_{n,\chi}(X,\sigma)$ be an oriented circulant graph,
where $n=\Delta m$, $(\Delta,m)=1$, and $X$ is a set of divisors of $m$.
Then the Hermitian eigenvalues of $\Gamma$ have the form
\[
   \lambda_j=\eta_j\sqrt{\Delta},\qquad 0\le j\le n-1,
\]
where
\begin{equation}\label{eq:eta}
   \eta_j
   =
   -\chi(j)\sum_{d\in X}\sigma(d)\chi(m/d)c_{m/d}(j)
   \in\Z.
\end{equation}
\end{corollary}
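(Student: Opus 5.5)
This is a direct specialization of Theorem~\ref{thm:general-spectrum}. Write $n=\Delta m$ with $(\Delta,m)=1$ and take $d\in X$, so $d\mid m$. Then
\[
   q_d=\frac{n}{d\Delta}=\frac md,
\]
which is coprime to $\Delta$. Every prime of $\Delta$ therefore has exponent $0$ in $q_d$, so
\[
   Q_d=Q_\Delta(q_d)=1,\qquad M_d=M_\Delta(q_d)=m/d.
\]
We also have $\Delta\mid n/d=\Delta m/d$, so every divisor $d$ of $m$ is admissible in Definition~\ref{def:main-graph} and $\Gamma$ is well defined.

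Substituting $Q_d=1$ into \eqref{eq:general-spectrum}, the divisibility condition $Q_d\mid j$ holds for every $j$, so no summand is discarded. The factor $\chi(j/Q_d)$ becomes $\chi(j)$, where $j$ is read modulo $\Delta$, which is legitimate because $\Delta\mid n$. This factor is independent of $d$ and can be pulled out of the sum, giving
\[
   \lambda_j=-\chi(j)\Bigl(\sum_{d\in X}\sigma(d)\chi(m/d)c_{m/d}(j)\Bigr)\sqrt\Delta ,
\]
which is exactly $\eta_j\sqrt\Delta$ with $\eta_j$ as in \eqref{eq:eta}.

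It remains to check that $\eta_j\in\Z$. Each factor is an integer: $\chi(j),\chi(m/d)\in\{0,\pm1\}$, $\sigma(d)=\pm1$, and $c_{m/d}(j)\in\Z$ by the divisor formula \eqref{eq:ramanujan-divisor}. Note that $\chi(m/d)=\pm1$, since $(m/d,\Delta)=1$.

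The only point needing care is the interpretation of $c_{M_d}(j)$. In \eqref{eq:general-spectrum} it is evaluated at $j\in\Z_n$, while $c_{m/d}$ depends only on $j$ modulo $m/d$; this is consistent because $m/d\mid n$. There is no real obstacle here: the corollary is the case $Q_d=1$ of the general formula.
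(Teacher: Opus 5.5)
Your proposal is correct and follows the paper's proof: since $q_d=m/d$ is coprime to $\Delta$, Theorem~\ref{thm:general-spectrum} applies with $Q_d=1$ and $M_d=m/d$. The sum then collapses to \eqref{eq:eta}. Your extra checks (admissibility of every $d\mid m$, pulling out $\chi(j)$, integrality of the coefficients) are routine details that the paper leaves implicit.
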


\begin{proof}
For every $d\in X$, the integer $q_d=m/d$ is coprime to $\Delta$.  Hence
$Q_d=1$ and $M_d=m/d$ in Theorem~\ref{thm:general-spectrum}.  Substitution
in \eqref{eq:general-spectrum} gives \eqref{eq:eta}.
\end{proof}

\begin{corollary}\label{cor:primitive-spectrum}
Let $\chi$ be an odd primitive quadratic character of conductor $\Delta$, and
let
\[
   \Gamma_\chi=\Cay(\Z_\Delta,C_\chi^+(\Delta)).
\]
Then $\Gamma_\chi$ is a connected oriented circulant graph.  Its Hermitian eigenvalues are
\[
   \lambda_j=-\chi(j)\sqrt{\Delta}\qquad (0\le j<\Delta).
\]
Its characteristic polynomial is
\[
   x^{\Delta-\varphi(\Delta)}
   (x^2-\Delta)^{\varphi(\Delta)/2}.
\]
In particular, if $\delta$ is the square-free part of $\Delta$, then all
eigenvalues of $\Gamma_\chi$ are integral multiples of $\sqrt{\delta}$.
\end{corollary}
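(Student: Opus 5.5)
The plan is to realize $\Gamma_\chi$ as a special case of Definition~\ref{def:main-graph} and then read everything off the general spectral formula. Take $n=\Delta$, $X=\{1\}$ and $\sigma(1)=+1$. The condition $\Delta\mid n/d$ holds for $d=1$, and
\[
   \Theta_{\Delta,\chi}^{+1}(1)=\{k\in\Z_\Delta^\times:\chi(k)=1\}=C_\chi^+(\Delta),
\]
so $\Gamma_\chi=\mathcal I_{\Delta,\chi}(\{1\},+1)$. By Proposition~\ref{prop:parity} this is an oriented circulant graph. Connectedness follows from Proposition~\ref{prop:connectedness}, because $\kappa=\gcd X=1$.

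For the eigenvalues, I apply Theorem~\ref{thm:general-spectrum} with $d=1$. Then $q_1=1$, so $Q_1=1$ and $M_1=1$, with $\chi(1)=1$ and $c_1(j)=1$. Formula \eqref{eq:general-spectrum} therefore collapses to $\lambda_j=-\chi(j)\sqrt\Delta$, which is exactly the case $N=\Delta$, $\varepsilon=1$ recorded in Example~\ref{ex:basic-spectra}. Alternatively, Corollary~\ref{cor:coprime-spectrum} with $m=1$ gives the same result directly.

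The characteristic polynomial comes from counting values of $\chi$ on $\Z_\Delta$.
\begin{itemize}
\item If $(j,\Delta)>1$, then $\chi(j)=0$; there are $\Delta-\varphi(\Delta)$ such $j$, and they contribute $x^{\Delta-\varphi(\Delta)}$.
\item On the units, $\chi$ is a nontrivial quadratic character (it is odd), so it takes the values $+1$ and $-1$ equally often, namely $\varphi(\Delta)/2$ times each. The eigenvalues $\mp\sqrt\Delta$ therefore each occur $\varphi(\Delta)/2$ times. The equal multiplicities also follow from Corollary~\ref{cor:spectral-symmetry}.
\end{itemize}
Multiplying gives $x^{\Delta-\varphi(\Delta)}(x^2-\Delta)^{\varphi(\Delta)/2}$.

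For the final claim, write $\Delta=\delta$ or $\Delta=4\delta$, as in the discussion after Proposition~\ref{prop:delta}. Then $\sqrt\Delta$ equals $\sqrt\delta$ or $2\sqrt\delta$, so every eigenvalue $\pm\sqrt\Delta$ or $0$ is an integral multiple of $\sqrt\delta$.

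No step is a real obstacle. The only point needing care is that $\chi$ is nontrivial on $\Z_\Delta^\times$, which follows from its oddness, $\chi(-1)=-1$.
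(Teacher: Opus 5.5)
Your proposal is correct and follows essentially the same route as the paper: you identify $\Gamma_\chi$ with $\mathcal I_{\Delta,\chi}(\{1\},+1)$, read off $\lambda_j=-\chi(j)\sqrt\Delta$ from the general formula (the paper cites Corollary~\ref{cor:coprime-spectrum} with $m=1$), and obtain the characteristic polynomial by counting the $\Delta-\varphi(\Delta)$ nonunits and the two equal character classes on the units. The differences are cosmetic. The paper gets connectedness directly from $1\in C_\chi^+(\Delta)$ rather than via Proposition~\ref{prop:connectedness}. Your explicit check that $\sqrt\Delta\in\{\sqrt\delta,2\sqrt\delta\}$ fills in the final claim, which the paper leaves implicit.
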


\begin{proof}
Proposition~\ref{prop:parity} shows that the connection set of $\Gamma_\chi$
is oriented.  Since $1\in C_\chi^+(\Delta)$, it generates the additive group
$\Z_\Delta$, so $\Gamma_\chi$ is connected.  The eigenvalue formula follows
from Corollary~\ref{cor:coprime-spectrum} with $m=1$, $X=\{1\}$, and
$\sigma(1)=1$.
The value $0$ occurs at the $\Delta-\varphi(\Delta)$ nonunits.  On the units,
$\chi$ takes each of the values $1$ and $-1$ exactly
$\varphi(\Delta)/2$ times, which gives the characteristic polynomial.
\end{proof}

For every positive square-free integer $\delta$, let $\chi$ be the odd
primitive quadratic character of conductor $\Delta(\delta)$.  Corollary~\ref{cor:primitive-spectrum} then supplies a connected oriented circulant
graph $\Gamma_\chi$ whose Hermitian eigenvalues are integral multiples of
$\sqrt{\delta}$.  Thus every positive square-free radicand occurs, although
the conductor and the radicand need not coincide.

\section{PST and MST for square-root spectra}\label{sec:transfer}

We first develop conditions for state transfer in an arbitrary oriented circulant graph
whose Hermitian eigenvalues lie in $\sqrt{D}\Z$, where $D>0$.  The symbol $D$
is reserved for this general spectral hypothesis; $\Delta$ continues to
denote the conductor of $\chi$ for the graphs in Definition~\ref{def:main-graph}.  These conditions reduce state transfer to congruences in
integer coefficients.  Combined with the spectral computations in the
preceding section, they yield the classification in
Section~\ref{sec:all-orders}.

Let $\Gamma$ be an oriented circulant graph on $\Z_n$ with Hermitian adjacency
matrix $A$.  Its transition matrix at $t\in\mathbb R$ is
\[
   U(t)=\exp(\ii tA).
\]
We use this sign convention throughout.  Replacing $U(t)$ by
$\exp(-\ii tA)$ replaces $U(t)=P_s$ by $U(-t)=P_{-s}$, but does not change
any existence or classification result.
For $a,b\in\Z_n$, the graph $\Gamma$ has perfect state transfer (PST) from
$a$ to $b$ at time $t$ if
$|U(t)_{b,a}|=1$.  The case $a=b$ is periodicity at $a$.  Unless stated
otherwise, the vertices in a PST pair are distinct.  If $n$ is even, PST
between $a$ and $a+n/2$ is called antipodal PST.  The graph $\Gamma$ has
pretty good state transfer (PGST) from
$a$ to $b$ if, for every $\varepsilon>0$, there is a time $t$ such that
$|U(t)_{b,a}|>1-\varepsilon$.  Since $A$ is circulant, its spectral
idempotents are the rank-one Fourier idempotents, and
\[
   U(t)_{b,a}
   =
   \frac1n\sum_{j=0}^{n-1}
      \exp(\ii t\lambda_j)\zeta_n^{j(b-a)}.
\]
Let $e_x$ denote the standard basis vector indexed by $x\in\Z_n$, and let
$P_s$ be the permutation matrix defined by $P_s e_x=e_{x+s}$ for
$s\in\Z_n$.  A subset $S\subseteq\Z_n$ supports multiple state transfer (MST)
in $\Gamma$ if PST occurs between every pair of vertices in $S$, possibly at
different times.  This agrees with the definition for mixed graphs in
\cite{SongLin2026Mixed}.  If $|S|=k$, then $S$ supports $k$-vertex MST.  The
classification below will show that every maximum MST set is an equally spaced
subset of $\Z_n$.

\begin{proposition}\label{prop:pst-translation}
Let $\Gamma$ be an oriented circulant graph on $\Z_n$.  If $\Gamma$ has PST
from $a$ to $a+s$ at time $t$, then
\[
   U(t)=P_s.
\]
Consequently, PST on an oriented circulant graph determines the entire
transition matrix.
\end{proposition}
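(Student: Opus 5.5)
The plan is to exploit two facts: $U(t)$ is unitary, and it commutes with every translation $P_x$ because $A$ is circulant. Commutation holds since $A$ is a polynomial-free function of the translation-invariant structure, or more directly since $U(t)$ is diagonalized by the same Fourier vectors as every $P_x$. So $U(t)$ is itself a circulant matrix: there is a function $f_t:\Z_n\to\mathbb C$ with
\[
   U(t)_{y,x}=f_t(y-x)\qquad(x,y\in\Z_n),
\]
which one can also read off from the displayed Fourier formula for $U(t)_{b,a}$, since it depends only on $b-a$.

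Next, use the PST hypothesis. It says $|U(t)_{a+s,a}|=1$, that is, $|f_t(s)|=1$. Because $U(t)$ is unitary, column $a$ is a unit vector:
\[
   \sum_{y\in\Z_n}|U(t)_{y,a}|^2=\sum_{z\in\Z_n}|f_t(z)|^2=1 .
\]
Since the $z=s$ term already contributes $1$, we get $f_t(z)=0$ for every $z\neq s$. Hence $U(t)=\gamma P_s$, where $\gamma=f_t(s)$ is a phase with $|\gamma|=1$. This is consistent with the convention $P_se_x=e_{x+s}$, because the $(x+s,x)$ entry of $U(t)$ equals $f_t(s)=\gamma$ and all other entries vanish.

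The remaining step, which I expect to be the main obstacle, is to remove the phase and show $\gamma=1$. This must use the oriented structure, not just circularity. The plan is to use Corollary~\ref{cor:spectral-symmetry}, or more generally the identity $\lambda_0=0$. This identity holds for every oriented circulant graph directly from \eqref{eq:hermitian-eigenvalue}: at $j=0$ each term is $\zeta^0-\zeta^0=0$. The all-ones vector $\mathbf 1$ is then an eigenvector of $A$ with eigenvalue $0$, so $U(t)\mathbf 1=\mathbf 1$. On the other hand $U(t)\mathbf 1=\gamma P_s\mathbf 1=\gamma\mathbf 1$, which forces $\gamma=1$ and gives $U(t)=P_s$.

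The final assertion then follows at once: the translation $s$ is fixed by the PST pair, and PST at time $t$ determines $U(t)$ completely as $P_s$. In particular, PST from $a$ to $a+s$ at time $t$ gives PST from every vertex $x$ to $x+s$ at the same time.
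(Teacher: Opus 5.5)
Your proof is correct, but it runs the key step in the vertex basis rather than the Fourier basis.

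The paper's route is spectral. It writes $U(t)_{a+s,a}$ as the average of the $n$ unimodular numbers $\exp(\ii t\lambda_j)\zeta_n^{js}$. The equality case of the triangle inequality then forces all of them to equal one phase $\gamma$. It gets $\gamma=1$ from $\lambda_0=0$, and identifies $U(t)$ with $P_s$ by comparing eigenvalues on the Fourier basis.

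Your route is in the vertex basis. Translation invariance makes $U(t)$ circulant, and unitarity means a column with one unimodular entry has no other nonzero entries. This gives $U(t)=\gamma P_s$ at once. You then remove the phase using the same fact $\lambda_0=0$, in the form $U(t)\mathbf 1=\mathbf 1$.

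Your version needs no spectral decomposition for the first step. It also shows clearly that, in any Hermitian circulant, PST at time $t$ forces $U(t)=\exp(\ii t\lambda_0)P_s$; the orientation enters only through the vanishing row sums. The paper's version has the advantage of producing directly the phase equations $\exp(\ii t\lambda_j)\zeta_n^{js}=1$ for all $j$. These are the starting point of Theorem~\ref{thm:transfer-structure}(i), though they also follow immediately from $U(t)=P_s$.

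One small point: your first justification of the commutation (``polynomial-free function of the translation-invariant structure'') does not mean anything. Your second justification, the common Fourier eigenbasis, is valid, as is simply noting $A_{u+x,v+x}=A_{u,v}$; either one carries that step.
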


\begin{proof}
The Fourier expression for the transition amplitude is an average of $n$
complex numbers of modulus one.  If its modulus is one, all of these summands
must be equal.  Hence PST forces the Fourier phases to align:
\[
   \exp(\ii t\lambda_j)\zeta_n^{js}=\gamma
   \qquad (0\le j<n)
\]
for a unit complex number $\gamma$.  Since $\lambda_0=0$, the equation at
$j=0$ gives $\gamma=1$.  Thus, on the $j$-th Fourier vector, $U(t)$ has
eigenvalue $\zeta_n^{-js}$, which is also the eigenvalue of $P_s$.  The common
Fourier eigenbasis now gives $U(t)=P_s$.
\end{proof}

The next lemma specializes the periodicity theorem of Godsil and Lato to
oriented circulant graphs.  In the real skew-adjacency convention,
\cite[Theorem~6.1]{GodsilLato2020} characterizes periodicity at a vertex of a
connected oriented graph by the condition that the eigenvalues in its support
belong to $\Z\sqrt{-\delta}$ for a positive square-free integer $\delta$.
The corresponding statement for root-of-unity-weighted Cayley graphs also
appears in \cite[Lemma~10.1]{GodsilZhang2024}.

\begin{lemma}\label{lem:periodic-square-root}
Let $\Gamma$ be a nonempty oriented circulant graph with Hermitian adjacency
matrix $A$.  If
\[
   \exp(\ii T A)=I
\]
for some $T>0$, then there is a positive square-free integer $\delta$ such
that every eigenvalue of $A$ belongs to $\sqrt{\delta}\Z$.
\end{lemma}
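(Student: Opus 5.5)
The plan is a direct arithmetic argument on the eigenvalues, with no appeal to connectivity.

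\textbf{Step 1: eigenvalues as algebraic integers.} Write $U(T)=\exp(\ii TA)$. Since $A$ is Hermitian and diagonalized by the Fourier basis, the condition $U(T)=I$ says $\exp(\ii T\lambda_j)=1$ for every $j$. Hence
\[
   T\lambda_j\in 2\pi\Z\qquad(0\le j<n).
\]
By \eqref{eq:hermitian-eigenvalue}, each $\lambda_j=\ii\sum_{c\in C}(\zeta_n^{jc}-\zeta_n^{-jc})$ is an algebraic integer in $\Q(\zeta_{4n})$, since $\ii=\zeta_4$. It is also real, and the $\lambda_j$ together form a Galois-stable multiset. Indeed, the characteristic polynomial of $A$ has integer coefficients, because $A$ has entries in $\{0,\pm\ii\}$ and is similar over $\Q(\ii)$ to the integral skew matrix $S=-\ii A$ scaled by $\ii$; equivalently, $\det(xI-A)=\det(xI-\ii S)$ has rational coefficients, as complex conjugation fixes it since $A$ is Hermitian.

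\textbf{Step 2: ratios of nonzero eigenvalues are rational.} Since $C$ is nonempty, $A\ne0$, so some $\lambda_k\ne0$. For every $j$ the ratio $\lambda_j/\lambda_k=(T\lambda_j)/(T\lambda_k)$ is a quotient of two integer multiples of $2\pi$, so it lies in $\Q$. Thus every eigenvalue is a rational multiple of one fixed nonzero real algebraic integer $\theta=\lambda_k$.

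\textbf{Step 3: identify $\theta$ (the main point).} Let $\tau$ be any Galois automorphism of the normal closure. Because the spectrum is Galois stable, $\tau(\theta)$ is again an eigenvalue, so $\tau(\theta)=r\theta$ with $r\in\Q$.
\begin{itemize}
   \item Applying $\tau$ repeatedly gives $\tau^k(\theta)=r^k\theta$. The group is finite, so $r^k=1$ for some $k\ge1$, and hence $r=\pm1$.
   \item Consequently every conjugate of $\theta$ lies in $\{\pm\theta\}$, so $\theta^2$ is fixed by all automorphisms and is rational.
   \item Since $\theta$ is an algebraic integer, $\theta^2$ is a rational algebraic integer, so $\theta^2\in\Z$, and $\theta^2>0$ because $\theta$ is real and nonzero.
\end{itemize}
Write $\theta^2=f^2\delta$ with $\delta$ square-free, so that $\theta=\pm f\sqrt\delta$.

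\textbf{Step 4: every eigenvalue lies in $\sqrt\delta\Z$.} Each $\lambda_j=r_j\theta$ with $r_j\in\Q$, so $\lambda_j=s_j\sqrt\delta$ with $s_j\in\Q$. Since $\lambda_j$ is an algebraic integer, $\lambda_j^2=s_j^2\delta\in\Z$. Because $\delta$ is square-free, this forces $s_j\in\Z$: write $s_j=u/v$ in lowest terms; then $v^2\mid\delta$, which gives $v=1$.

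The only delicate point is the Galois-stability used in Step 3. It follows because $\det(xI-A)\in\Z[x]$: its coefficients lie in $\Z[\ii]$ and are real, since $A$ is Hermitian. All roots of an integer monic polynomial form a stable multiset under every automorphism of the splitting field, so $\tau(\theta)$ is indeed among the $\lambda_j$. This argument avoids the connectivity hypothesis of \cite[Theorem~6.1]{GodsilLato2020} entirely, since periodicity of the whole unitary is assumed rather than periodicity at a single vertex.
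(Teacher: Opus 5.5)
Your proof is correct, and it takes a genuinely different route from the paper's.

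\textbf{The paper's route.} The paper does not argue arithmetically at all. It first reduces to one connected component, because Godsil and Lato's Theorem~6.1 is stated for connected oriented graphs. It then uses translation invariance to show that every spectral idempotent has positive diagonal entries, so every eigenvalue lies in the eigenvalue support of every vertex. Finally it quotes Godsil--Lato to place the eigenvalues of the real skew matrix in $\Z\sqrt{-\delta}$ and converts this to the Hermitian convention.

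\textbf{Your route.} You prove the needed fact directly, in four steps:
\begin{enumerate}
\item Exact periodicity $\exp(\ii TA)=I$, with no global phase, gives the ratio condition $\lambda_j/\lambda_k\in\Q$.
\item The coefficients of $\det(xI-A)$ lie in $\Z[\ii]\cap\mathbb R=\Z$. This gives algebraic integrality of the eigenvalues and Galois stability of the spectrum.
\item The finite-order argument forces every conjugate of $\theta$ to be $\pm\theta$, so $\theta^2\in\Z$.
\item The square-free denominator argument then places every eigenvalue in $\sqrt\delta\,\Z$.
\end{enumerate}

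\textbf{Comparison.} Your argument is self-contained and more general. It uses neither connectivity nor the circulant structure, beyond unitary diagonalizability, which every Hermitian matrix has. So it applies verbatim to any oriented graph whose transition matrix returns exactly to $I$. The paper's route is shorter on the page and connects to the vertex-local periodicity theory. That theory is stronger in general, but it buys nothing extra here. For an oriented circulant graph, translation invariance together with $\lambda_0=0$ makes periodicity at one vertex equivalent to $U(T)=I$.

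\textbf{One wording point.} In Step~1, $A=\ii S$ holds literally, not merely up to similarity. The only fact you actually use is that the coefficients of $\det(xI-A)$ lie in $\Z[\ii]$ and are real.
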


\begin{proof}
Write $\Gamma=\Cay(\Z_n,C)$, and let $H$ be the subgroup of $\Z_n$ generated
by $C$.  Its connected components are the cosets of $H$, and each component
is a connected oriented circulant graph.  Since $\exp(\ii T A)=I$ on every
component, it suffices to consider one of them.

Let $E_\theta$ be a nonzero spectral idempotent of this connected component.
Translation invariance makes the diagonal entries of $E_\theta$ constant,
while their sum is $\operatorname{tr}(E_\theta)>0$.  Each diagonal entry is
therefore positive, and every eigenvalue belongs to the eigenvalue support of
every vertex.  Theorem~6.1 of \cite{GodsilLato2020} consequently implies that
the eigenvalues of the real skew-adjacency matrix belong to
$\Z\sqrt{-\delta}$ for some positive square-free integer $\delta$.
Multiplication by $\ii$ converts that matrix to the Hermitian adjacency
matrix used here, and consequently its eigenvalues belong to
$\sqrt{\delta}\Z$.  All components have the same spectrum, which proves the
claim for $\Gamma$.
\end{proof}

We can now combine the translation property of PST, the preceding periodicity
lemma, and the cyclic form of \cite[Theorem~8.1]{GodsilZhang2024}.

\begin{corollary}\label{cor:pst-character-family}
Let $\Gamma=\Cay(\Z_n,C)$ be a nonempty oriented circulant graph with PST
between distinct vertices.  Then there are a positive
square-free integer $\delta$, the associated odd primitive quadratic character
$\chi$ of conductor $\Delta(\delta)$, a nonempty admissible divisor set $X$,
and a map $\sigma:X\to\{\pm1\}$ such that
\[
   \Gamma=\mathcal I_{n,\chi}(X,\sigma).
\]
\end{corollary}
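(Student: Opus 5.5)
The plan is to chain three earlier results: Proposition~\ref{prop:pst-translation}, then Lemma~\ref{lem:periodic-square-root}, and finally the necessity half of Theorem~\ref{thm:cyclic-classification}.

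\textbf{Step 1: PST gives periodicity.} Suppose PST occurs from $a$ to $a+s$ at time $t$, with $s\neq 0$. By Proposition~\ref{prop:pst-translation} we have $U(t)=P_s$. Let $k$ be the order of $s$ in $\Z_n$. Then
\[
   U(kt)=P_s^k=I .
\]
Since $s\ne0$, $P_s\ne I$, so $t\ne0$. Replacing $t$ by $-t$ if necessary (using $U(-t)=U(t)^{-1}=P_{-s}$), we may take $T=k|t|>0$ with $\exp(\ii TA)=I$.

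\textbf{Step 2: square-root spectrum and the character family.} Lemma~\ref{lem:periodic-square-root} now supplies a positive square-free $\delta$ with all Hermitian eigenvalues in $\sqrt{\delta}\Z$. Let $\chi$ be the odd primitive quadratic character of conductor $\Delta(\delta)$; it exists by Proposition~\ref{prop:delta}, since $-\Delta(\delta)$ is a negative fundamental discriminant. The graph is nonempty, so the necessity part of Theorem~\ref{thm:cyclic-classification} applies. It gives $\Delta(\delta)\mid n$ and $C=C(X,\sigma)$ for some nonempty set $X$ of divisors $d$ with $\Delta\mid n/d$ and some $\sigma:X\to\{\pm1\}$. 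By Definition~\ref{def:main-graph}, this says exactly that $\Gamma=\mathcal I_{n,\chi}(X,\sigma)$.

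\textbf{Main obstacle.} The only delicate point is the compatibility of $\delta$ with $\chi$. Theorem~\ref{thm:cyclic-classification} is stated for a fixed $\chi$ whose conductor's square-free part is $\delta$. I will check that $\delta$ is the square-free part of $\Delta(\delta)$ by definition, so the theorem applies with this $\chi$. I will also check that $\delta$ is uniquely determined: the spectrum is nonzero because $A\ne0$ is Hermitian, so a nonzero eigenvalue $k\sqrt\delta$ fixes $\delta$. Everything else is direct citation.
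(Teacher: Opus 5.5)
Your proposal is correct and follows the paper's proof: Proposition~\ref{prop:pst-translation} gives $U(t)=P_s$, then $U(k|t|)=I$ with $k|t|>0$, then Lemma~\ref{lem:periodic-square-root}, then the necessity half of Theorem~\ref{thm:cyclic-classification}. Your additional checks (that $\delta$ is the square-free part of $\Delta(\delta)$, that $\chi$ exists by Proposition~\ref{prop:delta}, and that $\delta$ is unique) are correct details the paper leaves implicit; only the first is actually needed.
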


\begin{proof}
By Proposition~\ref{prop:pst-translation}, PST at time $t$ has the form
$U(t)=P_s$ for some nonzero $s\in\Z_n$.  If $k$ is the order of $s$ in
$\Z_n$, then $U(kt)=I$.  Consequently, $U(k|t|)=I$, and $k|t|>0$ because
the transfer vertices are distinct.  Lemma~\ref{lem:periodic-square-root}
therefore gives a positive square-free integer $\delta$ for which the
Hermitian spectrum lies in $\sqrt{\delta}\Z$.  Applying
Theorem~\ref{thm:cyclic-classification} to this spectral condition gives the
displayed representation.
\end{proof}

Thus every oriented circulant graph with PST between distinct vertices
satisfies the square-root spectral hypothesis used in the remainder of the
paper.

\subsection{Congruence conditions for PST}

For a graph whose eigenvalues belong to $\sqrt D\Z$, Fourier phase alignment
turns PST into a system of linear congruences.  The next theorem solves that
system after dividing the integer eigenvalue coefficients by their common
divisor.  In particular, it determines the transfer vertices without first
prescribing a transfer time.

Let $\Gamma$ be a nonempty oriented circulant graph with eigenvalues
\[
    \lambda_j=\eta_j\sqrt{D},\qquad \eta_j\in\Z,
\]
and suppose that not all $\eta_j$ are zero.  Define
\[
   g=\gcd\{|\eta_j|:0\le j<n\},\qquad \widehat\eta_j=\eta_j/g.
\]
We shall repeatedly use the following consequence of B\'ezout's identity.
Since
\[
   \gcd\{\widehat\eta_j:0\le j<n\}=1,
\]
there are integers $b_0,\ldots,b_{n-1}$ such that
\begin{equation}\label{eq:bezout-normalized}
   \sum_{j=0}^{n-1}b_j\widehat\eta_j=1.
\end{equation}
Consequently, if $x\widehat\eta_j\in\Z$ for every $j$, then $x\in\Z$; and
if $q\widehat\eta_j\equiv0\pmod N$ for every $j$, then
$q\equiv0\pmod N$.  Both assertions follow by multiplying by the
coefficients in \eqref{eq:bezout-normalized} and summing.

\begin{theorem}\label{thm:transfer-structure}
Let $\Gamma$ be a nonempty oriented circulant graph on $\Z_n$ with
$\lambda_j=\eta_j\sqrt{D}$, where $D>0$, $\eta_j\in\Z$, and not all
$\eta_j$ are zero.  Put
\[
   g=\gcd\{|\eta_j|:0\le j<n\},
   \qquad
   \widehat\eta_j=\eta_j/g,
\]
and define
\[
   \mathcal T
   =
   \{0\}\cup
   \{\,s\in\Z_n\setminus\{0\}:
       \text{PST occurs from }0\text{ to }s\,\}
\]
and
\begin{equation}\label{eq:transfer-gcd}
   h=\gcd\!\left(
      n,\{j\widehat\eta_k-k\widehat\eta_j:0\le j,k<n\}
   \right).
\end{equation}
Then the following statements hold.

\begin{enumerate}[label=(\roman*)]
\item A vertex $s\in\Z_n$ belongs to $\mathcal T$ if and only if
there is an integer $r$ such that
\begin{equation}\label{eq:universal-pst}
   r\widehat\eta_j+sj\equiv0\pmod n
   \qquad (0\le j<n).
\end{equation}
The residue of $r$ modulo $n$ is unique.  If $r_s$ is any representative,
then all real times $t$ for which $U(t)=P_s$ are
\[
   t=\frac{2\pi(r_s+qn)}{gn\sqrt{D}},
   \qquad q\in\Z.
\]

\item The set $\mathcal T$ is the subgroup
\[
   \mathcal T=\frac{n}{h}\Z_n,
   \qquad |\mathcal T|=h.
\]
For any distinct vertices $a,b\in\Z_n$,
\begin{equation}\label{eq:all-pst-pairs}
   \text{PST occurs from $a$ to $b$}
   \quad\Longleftrightarrow\quad
   b-a\in\mathcal T.
\end{equation}
If $\sum_{j=0}^{n-1}a_j\widehat\eta_j=1$, then the residue in part~(i)
is given by
\begin{equation}\label{eq:r-from-bezout}
   r_s\equiv-s\sum_{j=0}^{n-1}ja_j\pmod n.
\end{equation}

\item The minimum positive period of $U(t)$ is
\[
   T_0=\frac{2\pi}{g\sqrt{D}}.
\]
There is a unique isomorphism
\[
   \rho:\mathcal T\longrightarrow\frac{n}{h}\Z_n
\]
such that
\[
   \rho(s)\widehat\eta_j+sj\equiv0\pmod n
   \qquad(0\le j<n).
\]
If $\rho(s_0)=n/h$, then
\begin{equation}\label{eq:transfer-cycle}
   U\!\left(\frac{qT_0}{h}\right)=P_{qs_0}
   \qquad(q\in\Z).
\end{equation}
These are all translation matrices attained by $U(t)$.  If $h>1$, the
minimum positive time for PST between distinct vertices is $T_0/h$.

\item PGST from $a$ to $b$ occurs if and only if PST from $a$ to $b$
occurs.  Every coset of $\mathcal T$ supports MST, every set supporting MST
is contained in one such coset, and the maximum number of vertices in an
MST set is $h$.
\end{enumerate}
\end{theorem}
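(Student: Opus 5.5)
The plan is to reduce everything to the phase-alignment equations from Proposition~\ref{prop:pst-translation}. By that proposition, PST from $0$ to $s$ at time $t$ is equivalent to $U(t)=P_s$, which on Fourier vectors reads
\[
\exp(\ii t\eta_j\sqrt D)=\zeta_n^{-js}\qquad(0\le j<n).
\]
I will write $x=tg\sqrt D\,n/(2\pi)$. Then the condition becomes $x\widehat\eta_j+sj\in n\Z$ for every $j$. In particular $x\widehat\eta_j\in\Z$ for all $j$, so the B\'ezout consequence of \eqref{eq:bezout-normalized} forces $x=r\in\Z$. This gives \eqref{eq:universal-pst}. If $r$ and $r'$ both work, then $(r-r')\widehat\eta_j\equiv0\pmod n$ for all $j$, so $r\equiv r'\pmod n$, again by B\'ezout. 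Unwinding $x=r_s+qn$ gives the list of times in (i).

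For (ii), I will first check that $\mathcal T$ is closed under addition and negation, since the solutions $r$ add. Next I will characterize $s$ directly. Multiplying \eqref{eq:universal-pst} by $a_j$ and summing gives \eqref{eq:r-from-bezout}. Substituting this $r$ back shows that $s\in\mathcal T$ if and only if
\[
s\Bigl(j-\widehat\eta_j\sum_k ka_k\Bigr)\equiv0\pmod n\quad\text{for all } j.
\]
Here I note that $j-\widehat\eta_j\sum_k ka_k=\sum_k a_k(j\widehat\eta_k-k\widehat\eta_j)$. Conversely, each $j\widehat\eta_k-k\widehat\eta_j$ is an integer combination of these quantities. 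Indeed, if
\[
w_j:=j-\widehat\eta_j R,\qquad R=\sum_k ka_k,
\]
then $j\widehat\eta_k-k\widehat\eta_j=w_j\widehat\eta_k-w_k\widehat\eta_j$. So the condition is exactly $s\cdot\gcd(n,\{j\widehat\eta_k-k\widehat\eta_j\})\equiv 0 \pmod n$, that is, $hs\equiv0\pmod n$, which means $s\in (n/h)\Z_n$. The pair statement \eqref{eq:all-pst-pairs} follows from translation invariance: $U(t)_{b,a}=U(t)_{b-a,0}$.

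For (iii), the periods are the times with $U(t)=I$, i.e.\ $s=0$. Uniqueness in (i) gives $r\equiv0$, so the periods are exactly the multiples of $T_0$. The map $s\mapsto r_s$ is an injective homomorphism $\mathcal T\to\Z_n$: it is a homomorphism because solutions add, and it is injective because $r=0$ forces $sj\equiv0\pmod n$ for $j=1$, hence $s=0$. To identify its image I will find which $r$ admit an $s$. Taking $j$ with the relevant congruences, $r$ must satisfy $r(j\widehat\eta_k-k\widehat\eta_j)\equiv0$, whence $hr\equiv0\pmod n$. Since $|\mathcal T|=h$, the image is exactly $(n/h)\Z_n$. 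Then \eqref{eq:transfer-cycle} follows from the time formula with $r=q n/h$. The least positive PST time is $T_0/h$, because $r\ge n/h$ for a nonzero residue in that subgroup.

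For (iv), MST follows from (ii): vertices pairwise joined by PST must lie in a common coset of $\mathcal T$, and every coset works, giving maximum size $h$. The PGST step is the main obstacle. I would argue by Kronecker-type closure. Let $\theta(t)=(\exp(\ii t\widehat\eta_jg\sqrt D))_j$; because all $\widehat\eta_j$ are integers, this curve is periodic, and its image is the closed one-parameter set $\{(z^{\widehat\eta_j})_j: |z|=1\}$, which is compact. PGST means that $(\zeta_n^{-js}\gamma)_j$, up to an overall phase $\gamma$, lies in the closure of this image. Since the image is already closed and $\lambda_0=0$ forces $\gamma=1$ in the limit, exact equality is attained at some $t$, giving PST.
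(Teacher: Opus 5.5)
Your proposal is correct and takes essentially the same route as the paper: phase alignment with the B\'ezout normalization for (i), elimination of $r$ to bring in $h$ for (ii), the injective homomorphism $s\mapsto r_s$ for (iii), and compactness from periodicity for the PGST claim in (iv). The differences are cosmetic. You derive \eqref{eq:r-from-bezout} first and turn (ii) into a single chain of equivalences, and you bound the image of $\rho$ by eliminating $s$ instead of invoking uniqueness of subgroups of $\Z_n$; your ``Kronecker-type'' PGST step needs no Kronecker theorem, only the compactness argument the paper uses.
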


\begin{proof}
\noindent\textbf{Part (i).}
Suppose first that $s\ne0$.  Since $\lambda_0=0$, PST from $0$ to $s$ at
time $t$ has phase $1$, and the circulant symmetry gives $U(t)=P_s$.
Conversely, $U(t)=P_s$ implies PST from every vertex $a$ to $a+s$.
For $s=0$, we have $U(0)=I=P_0$.  Hence $s\in\mathcal T$ if and only if
$U(t)=P_s$ for some $t\in\mathbb R$.  By the Fourier decomposition, this
matrix identity is equivalent to
\[
   \exp(\ii t\lambda_j)\zeta_n^{js}=1
   \qquad (0\le j<n).
\]
Put $\tau=t\sqrt{D}/(2\pi)$.  After multiplication of the phase congruence
by $n$, we obtain
\[
   n\tau g\,\widehat\eta_j+sj\in n\Z
   \qquad(0\le j<n).
\]
Thus $n\tau g\,\widehat\eta_j$ is an integer for every $j$.
Equation~\eqref{eq:bezout-normalized} implies that $r=n\tau g$ is an
integer, and \eqref{eq:universal-pst} follows.  Conversely, if
\eqref{eq:universal-pst} holds, then
\[
   \frac{t\lambda_j}{2\pi}+\frac{sj}{n}
   =
   \frac{r\widehat\eta_j+sj}{n}\in\Z
\]
at $t=2\pi r/(gn\sqrt{D})$.  Hence all Fourier phases agree and
$U(t)=P_s$.  If two residues $r_1$ and $r_2$ satisfy
\eqref{eq:universal-pst}, then
$(r_1-r_2)\widehat\eta_j\equiv0\pmod n$ for every $j$.
Multiplication by the coefficients in \eqref{eq:bezout-normalized} and
summation give $r_1\equiv r_2\pmod n$.  This proves part~(i), including the
description of all representatives and hence of all transfer times.

\smallskip
\noindent\textbf{Part (ii).}
If $s\in\mathcal T$, elimination of $r$
from the congruences with indices $j$ and $k$ gives
\[
   s(j\widehat\eta_k-k\widehat\eta_j)\equiv0\pmod n
   \qquad(0\le j,k<n).
\]
By B\'ezout's identity and \eqref{eq:transfer-gcd}, these congruences imply
$n\mid sh$, or equivalently $n/h\mid s$.  Conversely, suppose that
$n/h\mid s$ and choose integers $a_j$ with
$\sum_j a_j\widehat\eta_j=1$.  Define $r_s$ by
\eqref{eq:r-from-bezout}.  For every $k$,
\[
\begin{aligned}
   r_s\widehat\eta_k+sk
   &\equiv
   s\sum_{j=0}^{n-1}a_j
      (k\widehat\eta_j-j\widehat\eta_k)
      \pmod n\\
   &\equiv0\pmod n.
\end{aligned}
\]
Therefore \eqref{eq:universal-pst} holds, proving part~(ii).
The equivalence \eqref{eq:all-pst-pairs} follows because every circulant
translation is a graph automorphism: PST from $a$ to $b$ is equivalent to
PST from $0$ to $b-a$.

\smallskip
\noindent\textbf{Part (iii).}
At time $T_0=2\pi/(g\sqrt{D})$, all Fourier eigenvalues of $U(T_0)$ are
equal to $1$.  If $U(T)=I$ and $\alpha=T\sqrt{D}/(2\pi)$, then
$\alpha\eta_j\in\Z$ for every $j$.  Equation
\eqref{eq:bezout-normalized} gives $\alpha g\in\Z$, so $T_0$ is the
minimum positive period.  For $s\in\mathcal T$, let $\rho(s)$ be the unique
residue supplied by part~(i).  The defining congruences show that $\rho$ is
a homomorphism.  If $\rho(s)=0$, the congruence at $j=1$ gives $s=0$;
hence $\rho$ is injective.  Its image is a subgroup of $\Z_n$ of order
$h$ and is therefore $(n/h)\Z_n$.  Taking $\rho(s_0)=n/h$ now gives
\eqref{eq:transfer-cycle}.  The uniqueness in part~(i) shows that no other
translation matrices occur and also gives the minimum nonzero transfer
time.  This proves part~(iii).

\smallskip
\noindent\textbf{Part (iv).}
If PGST occurs, reduce a sequence of
approximating times modulo $T_0$.  A convergent subsequence exists, and
continuity of $U(t)$ gives PST at its limit.  The converse is immediate.
Finally, the difference of two vertices in a coset of $\mathcal T$ belongs
to $\mathcal T$, so each coset supports MST.  Conversely, translate an MST
set so that it contains $0$; all its vertices must then belong to
$\mathcal T$.  The maximum cardinality is therefore $|\mathcal T|=h$.
\end{proof}

\noindent
In view of part~(ii), we call $\mathcal T$ the transfer subgroup of
$\Gamma$.

\begin{remark}
For integral oriented circulant graphs, \cite[Lemma~3.4]{Song2024IOC}
reduces the test for PST to the condition that
\[
   v_2(\lambda_{j+1}-\lambda_j)=m
   \qquad (0\le j<n)
\]
for some $m\in\mathbb N$.  Equation~\eqref{eq:universal-pst} extends that
test to square-root spectra.  After the common factor $\sqrt D$ and the gcd
of the integer eigenvalue coefficients are removed, the congruence retains
the transfer vertex and determines both the transfer subgroup and every PST
time.  For the conductor-$4$ family, it specializes to the $2$-adic
condition in \cite[Lemma~3.4]{Song2024IOC}.
\end{remark}

\begin{corollary}\label{cor:component-transfer}
Let $\Gamma=\mathcal I_{n,\chi}(X,\sigma)$ be a nonempty oriented circulant
graph.  Let
$\kappa$, $n'$, and $\Gamma'$ be as in
Proposition~\ref{prop:connectedness}, and let $\mathcal T'$ be the transfer
subgroup of $\Gamma'$.  Then
for $s\in\Z_n\setminus\{0\}$, PST occurs in $\Gamma$ from $0$ to $s$ at time
$t$ if and only if $s=\kappa s'$ for some $s'\in\Z_{n'}$ and PST occurs in
$\Gamma'$ from $0$ to $s'$ at the same time.  If $\mathcal T$ is the
transfer subgroup of $\Gamma$, then
\[
   \mathcal T=\kappa\mathcal T'
   =\{\kappa s':s'\in\mathcal T'\}\subseteq\Z_n.
\]
Moreover, $\Gamma$ and $\Gamma'$ have the same positive periods and the same
maximum size of a set supporting MST.
\end{corollary}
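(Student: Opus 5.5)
The argument works directly at the level of transition matrices, using the block decomposition from Proposition~\ref{prop:connectedness}. Order the vertices of $\Z_n$ by residue class modulo $\kappa$. Since $C(X,\sigma)=\kappa C(X',\sigma')$, the Hermitian adjacency matrix $A$ of $\Gamma$ is, after this reordering, $\bigoplus_{c=0}^{\kappa-1}A'$. Here $A'$ is the Hermitian adjacency matrix of $\Gamma'$, and each block is transported to the coset $c+\kappa\Z_n$ by the identification $c+\kappa y\mapsto y$. Exponentiating gives
\[
   U(t)=\bigoplus_{c=0}^{\kappa-1}U'(t),
\]
where $U'(t)=\exp(\ii tA')$. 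In particular $U(t)_{b,a}=0$ whenever $a\not\equiv b\pmod\kappa$. Moreover, for $a=c+\kappa y$ and $b=c+\kappa y'$ we have $U(t)_{b,a}=U'(t)_{y',y}$.

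For the first assertion, PST from $0$ to $s\ne0$ at time $t$ means $|U(t)_{s,0}|=1$. This forces $s\in\kappa\Z_n$, so $s=\kappa s'$ with $s'\in\Z_{n'}$. The block identity with $c=0$ then gives $U(t)_{s,0}=U'(t)_{s',0}$, and $s'\neq0$ because $\kappa s'=s\neq0$ in $\Z_n$. The converse uses the same identity. Hence $\mathcal T=\kappa\mathcal T'$, where the map $s'\mapsto\kappa s'$ is the injective embedding $\Z_{n'}\to\Z_n$. The point $0$ is handled by definition, and injectivity gives $|\mathcal T|=|\mathcal T'|$.

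For periods, $U(T)=I$ holds if and only if every block satisfies $U'(T)=I$, and all blocks are equal. So the positive periods coincide. An alternative route is through the spectra: by Proposition~\ref{prop:connectedness} the characteristic polynomial of $\Gamma$ is the $\kappa$-th power of that of $\Gamma'$, so both graphs have the same eigenvalue coefficients $\eta_j$ up to multiplicity. They therefore have the same $g$, and Theorem~\ref{thm:transfer-structure}(iii) gives the same $T_0$. Here we need $\Gamma$ nonempty, so that not all $\eta_j$ vanish.

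For the MST size, Theorem~\ref{thm:transfer-structure}(iv) applied to both graphs gives maxima $|\mathcal T|$ and $|\mathcal T'|$, which are equal by injectivity. To apply it we need the square-root spectral hypothesis; Theorem~\ref{thm:general-spectrum} supplies this for both $\Gamma$ and $\Gamma'$, since $\Gamma'=\mathcal I_{n',\chi}(X',\sigma')$.

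I expect no real obstacle. The only delicate points are bookkeeping: checking that the coset identification carries $A$ exactly onto $A'$ with the same arc orientations, which follows from $C(X,\sigma)=\kappa C(X',\sigma')$ and from $x-y\in\kappa C'$ being equivalent to the corresponding difference of preimages lying in $C'$, and keeping track of $s'\ne0$.
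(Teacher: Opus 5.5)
Your proof is correct and follows the paper's argument. Both reorder the vertices by residue class modulo $\kappa$, write $A(\Gamma)$, and hence $U(t)$, as a direct sum of $\kappa$ copies of the corresponding matrix for $\Gamma'$, and read off the PST equivalence, $\mathcal T=\kappa\mathcal T'$, and the periods from this decomposition. Your use of Theorem~\ref{thm:transfer-structure}(iv) for the MST size is valid, since both graphs are nonempty with spectra in $\sqrt{\Delta}\Z$, but the paper gets that assertion directly from the direct-sum decomposition as well.
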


\begin{proof}
Order the vertices of $\Gamma$ by their residue classes modulo $\kappa$.
Proposition~\ref{prop:connectedness} shows that $A(\Gamma)$ is then the direct
sum of $\kappa$ copies of $A(\Gamma')$.  The same is true of the transition
matrix at every time.  Hence no state transfer can occur between distinct
components.  Within each component, the identification
$c+\kappa y\mapsto y$ preserves the transition matrix, and translation by
$s=\kappa s'$ corresponds to translation by $s'$ in $\Gamma'$.  This proves
the equivalence and the formula for $\mathcal T$.  The assertions about
periodicity and MST follow from the same direct-sum decomposition.
\end{proof}

The following proposition gives a convenient test for a prescribed equally
spaced $k$-vertex MST.

\begin{proposition}\label{prop:shift}
Let $\Gamma$ be an oriented circulant graph on $\Z_n$ whose Hermitian
eigenvalues are written as
\[
    \lambda_j=\eta_j\sqrt{D},\qquad \eta_j\in\Z.
\]
Let $k\mid n$, put $s=n/k$, and let $\rho\in\Z$.  Then
\[
    U\!\left(\frac{2\pi\rho}{k\sqrt{D}}\right)=P_s
\]
if and only if
\begin{equation}\label{eq:shift-congruence}
   \rho\eta_j+j\equiv0\pmod k
   \qquad\text{for all }j\in\Z_n.
\end{equation}
\end{proposition}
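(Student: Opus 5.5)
The plan is to argue directly in the Fourier basis, as in the proof of Proposition~\ref{prop:pst-translation}. Both $U(t)$ and $P_s$ are diagonalized by the Fourier vectors $\varphi_j$. On $\varphi_j$ the matrix $U(t)$ acts by $\exp(\ii t\lambda_j)$. As recorded in that proof, $P_s$ acts by $\zeta_n^{-js}$. Hence $U(t)=P_s$ holds if and only if
\[
   \exp(\ii t\lambda_j)\,\zeta_n^{js}=1\qquad(0\le j<n).
\]
I will then substitute $t=2\pi\rho/(k\sqrt D)$ and $\lambda_j=\eta_j\sqrt D$, which give $t\lambda_j=2\pi\rho\eta_j/k$.

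The key simplification uses $s=n/k$. It gives $\zeta_n^{js}=\zeta_n^{jn/k}=\zeta_k^{j}$, so the $j$-th phase equals $\exp\!\bigl(2\pi\ii(\rho\eta_j+j)/k\bigr)$. This is $1$ exactly when $(\rho\eta_j+j)/k\in\Z$, that is, when $\rho\eta_j+j\equiv0\pmod k$. Each Fourier index thus yields precisely one congruence of \eqref{eq:shift-congruence}, and both directions follow at once, since every step is an equivalence.

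The remaining point is the phrase ``for all $j\in\Z_n$''. The congruence is only needed for representatives $0\le j<n$. However, changing $j$ by a multiple of $n$ leaves $\eta_j$ unchanged, because the eigenvalues are indexed by $\Z_n$. It also changes $j$ only by a multiple of $k$, because $k\mid n$. The condition is therefore well defined on $\Z_n$, and I will note this explicitly.

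I expect no real obstacle here. The only care needed is with the sign convention: one must check that $P_s$ has eigenvalue $\zeta_n^{-js}$ on $\varphi_j$, consistent with $P_se_x=e_{x+s}$ and $U(t)=\exp(\ii tA)$. I will take that sign from the earlier computation rather than rederive it.
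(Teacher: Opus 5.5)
Your proposal is correct and follows essentially the same argument as the paper: compare, index by index, the Fourier eigenvalue $\exp(2\pi\ii\rho\eta_j/k)$ of $U(t)$ with the eigenvalue $\zeta_n^{-js}=\zeta_k^{-j}$ of $P_s$. Your added remark that the congruence is well defined on $\Z_n$ because $k\mid n$ is a harmless detail that the paper leaves implicit.
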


\begin{proof}
The eigenvalue of $P_s$ on the $j$-th Fourier vector is
$\zeta_n^{-js}=\zeta_k^{-j}$.  At the time in the statement, the eigenvalue
of $U(t)$ on the same Fourier vector is
\[
   \exp\!\left(\frac{2\pi\ii\rho}{k}\eta_j\right).
\]
The two eigenvalues agree for every $j$ if and only if
$\rho\eta_j+j\equiv0\pmod k$, which is
\eqref{eq:shift-congruence}.
\end{proof}

\section{Complete classification}\label{sec:classification}

The following lemma records how the eigenvalue coefficients transform under
multiplication of the Fourier index by a unit.

\begin{lemma}\label{lem:unit-covariance}
Let $\Gamma=\mathcal I_{n,\chi}(X,\sigma)$ be an oriented circulant graph and write
$\lambda_j=\eta_j\sqrt{\Delta}$.  If $u\in\Z_n^\times$, then
\[
   \eta_{uj}=\chi(u)\eta_j
   \qquad (j\in\Z_n).
\]
\end{lemma}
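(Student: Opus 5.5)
The cleanest route avoids the explicit formula \eqref{eq:general-spectrum} and works directly from the defining sum \eqref{eq:hermitian-eigenvalue}. Since $C=C(X,\sigma)$ is a union of the sets $\Theta_{n,\chi}^{\sigma(d)}(d)$, we have
\[
   \lambda_j=\ii\sum_{c\in C}\left(\zeta_n^{jc}-\zeta_n^{-jc}\right).
\]
Replacing $j$ by $uj$ amounts to summing over $uC$ instead of $C$. By Remark~\ref{rem:sign-reversal}, multiplication by $u$ maps $\Theta_{n,\chi}^{\sigma(d)}(d)$ onto $\Theta_{n,\chi}^{\chi(u)\sigma(d)}(d)$. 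It therefore suffices to check the following. Let $dk\in\Theta_{n,\chi}^{\sigma(d)}(d)$ with $k\in\Z_{n/d}^\times$. Then $u\,dk=d(uk)$, where $uk$ is read modulo $n/d$ and is a unit there. Since $\Delta\mid n/d$, we get $\chi(uk)=\chi(u)\chi(k)$. Bijectivity holds because $u$ is invertible modulo $n$.

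Next split into the two values of $\chi(u)$. If $\chi(u)=1$, then $uC=C$ and $\lambda_{uj}=\lambda_j$ directly. If $\chi(u)=-1$, then $uC=C(X,-\sigma)=-C$, using $-\Theta^{\varepsilon}=\Theta^{-\varepsilon}$ from Proposition~\ref{prop:parity}. Summing over $-C$ reverses the sign of each antisymmetric term $\zeta_n^{jc}-\zeta_n^{-jc}$, so $\lambda_{uj}=-\lambda_j$. In both cases $\lambda_{uj}=\chi(u)\lambda_j$, and dividing by $\sqrt\Delta\neq0$ gives $\eta_{uj}=\chi(u)\eta_j$.

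As a cross-check, one could also read the identity off \eqref{eq:general-spectrum}. For every $d$, the condition $Q_d\mid j$ and the Ramanujan sum $c_{M_d}(j)$ are unchanged under $j\mapsto uj$, because $u$ is a unit modulo every divisor of $n$. Meanwhile $\chi(uj/Q_d)=\chi(u)\chi(j/Q_d)$. The one delicate point in this version is that $j/Q_d$ must be interpreted consistently modulo $\Delta$ when $j$ is taken modulo $n$. I expect this bookkeeping to be the only possible obstacle, which is exactly why I prefer the connection-set argument above: it needs no division by $Q_d$ at all.
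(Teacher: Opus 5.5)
Your argument is correct, but your main route is not the paper's. The paper's proof is exactly your ``cross-check'': it reads the identity off \eqref{eq:general-spectrum}. There the condition $Q_d\mid j$ and the Ramanujan sum $c_{M_d}(j)$ are unchanged under $j\mapsto uj$, while $\chi(uj/Q_d)=\chi(u)\chi(j/Q_d)$, so every summand acquires the factor $\chi(u)$.

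Your main argument instead works on the connection set. The substitution $c\mapsto uc$ turns $\lambda_{uj}$ into the eigenvalue sum over $uC=C(X,\chi(u)\sigma)$, which is either $C$ or $-C$, and the antisymmetry of $\zeta_n^{jc}-\zeta_n^{-jc}$ finishes the proof. Your route does not use the Gauss-sum and Ramanujan-sum evaluation in Theorem~\ref{thm:general-factorization}, and it explains the covariance structurally. The identity is the spectral form of the isomorphism in Remark~\ref{rem:sign-reversal}, so it holds for $\lambda_j$ without first knowing that $\lambda_j\in\sqrt{\Delta}\Z$. The paper's route is a one-line consequence of a formula it has already established and uses again immediately afterwards.

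The bookkeeping issue you flag is harmless. Since $Q_d\mid q_d$, we have $\Delta Q_d\mid n/d$, hence $\Delta Q_d\mid n$. So for $j\in\Z_n$ with $Q_d\mid j$, the residue of $j/Q_d$ modulo $\Delta$ does not depend on the chosen representative of $j$.

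One small precision in your first step: to get that multiplication by $u$ maps $\Theta_{n,\chi}^{\sigma(d)}(d)$ \emph{onto} $\Theta_{n,\chi}^{\chi(u)\sigma(d)}(d)$, note that multiplication by $u^{-1}$ maps back and that $\chi(u^{-1})=\chi(u)$.
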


\begin{proof}
In \eqref{eq:general-spectrum}, multiplication by $u$ preserves the condition
$Q_d\mid j$.  Moreover,
\[
   \chi(uj/Q_d)=\chi(u)\chi(j/Q_d),
   \qquad c_{M_d}(uj)=c_{M_d}(j),
\]
because $u$ is a unit modulo both $\Delta$ and $M_d$.  Each summand in
\eqref{eq:general-spectrum} therefore acquires the factor $\chi(u)$, which
proves the result.
\end{proof}

The covariance in Lemma~\ref{lem:unit-covariance} imposes a strong
restriction on the vertices that can be joined to $0$ by PST, even when
$n/\Delta$ is not coprime to $\Delta$.

\begin{lemma}\label{lem:annihilator-order}
Let $s\in\Z_n$ have order $k$.  For every integer $x$,
\[
   sx\equiv0\pmod n
   \quad\Longleftrightarrow\quad
   x\equiv0\pmod k.
\]
\end{lemma}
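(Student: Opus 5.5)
The statement is Lemma~\ref{lem:annihilator-order}: for $s\in\Z_n$ of order $k$, we have $sx\equiv0\pmod n$ exactly when $k\mid x$. This is an elementary fact about cyclic groups. The plan is to identify the set of annihilating integers as a subgroup of $\Z$ and then show that its generator is $k$.

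First, fix an integer representative of $s$ and put $I=\{x\in\Z: sx\equiv0\pmod n\}$. The set $I$ is closed under addition and negation, so it is an ideal of $\Z$, and hence $I=k'\Z$ for a unique $k'\ge1$; note that $n\in I$, so $I\ne0$. By definition the order $k$ of $s$ in $\Z_n$ is the least positive integer $x$ with $xs=0$ in $\Z_n$. This is precisely the least positive element of $I$, so $k=k'$ and $I=k\Z$, which is the claimed equivalence.

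Alternatively, the proof can be made explicit through the formula for the order. With $e=\gcd(s,n)$, the order of $s$ is $k=n/e$. The congruence $sx\equiv0\pmod n$ is equivalent to $(n/e)\mid (s/e)x$, and since $\gcd(s/e,n/e)=1$, this holds if and only if $(n/e)\mid x$, that is, $k\mid x$. This route also covers $s=0$: there $e=n$ and $k=1$, and both sides of the equivalence hold for every $x$.

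Neither argument has a real obstacle. The only point needing care is the degenerate case $s=0$, which either version handles.
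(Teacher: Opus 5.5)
Your proposal is correct, and your second argument is exactly the paper's proof: with $d=\gcd(n,s)$ write $n=dk$ and $s=dc$ with $\gcd(c,k)=1$, so $n\mid sx$ if and only if $k\mid cx$, which holds if and only if $k\mid x$. Your first argument is also valid and a little more conceptual: it identifies the annihilator $\{x\in\Z: sx\equiv0\pmod n\}$ as the ideal of $\Z$ generated by its least positive element, and so it does not need the formula $k=n/\gcd(n,s)$.
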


\begin{proof}
Put $d=\gcd(n,s)$.  Then $n=dk$ and $s=dc$ for some integer $c$ coprime
to $k$.  Hence $n\mid sx$ if and only if $k\mid cx$, which is equivalent
to $k\mid x$.
\end{proof}

\begin{lemma}\label{lem:primitive-conductor}
Let $\psi_1$ and $\psi_2$ be primitive Dirichlet characters of conductors
$f_1$ and $f_2$, respectively, and let $N$ be a common multiple of
$f_1$ and $f_2$.  If the characters induced by $\psi_1$ and $\psi_2$ on
$\Z_N^\times$ are equal, then $f_1=f_2$ and $\psi_1=\psi_2$.
\end{lemma}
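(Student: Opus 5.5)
The plan is to reduce everything to the common primitive character that both $\psi_1$ and $\psi_2$ induce. We use only the standard definition: a Dirichlet character modulo $N$ is induced by a character modulo $f\mid N$ when it equals that character composed with the reduction map $\Z_N^\times\to\Z_f^\times$. Write $\omega$ for the common induced character on $\Z_N^\times$, and let $f$ be its conductor, the least $f\mid N$ such that $\omega$ is trivial on $\ker(\Z_N^\times\to\Z_f^\times)$.

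\textbf{First step: $f_1=f_2$.} The key claim is that $f_1=f$ for a character $\psi_1$ of conductor $f_1$ inducing $\omega$. Since $\omega=\psi_1\circ\pi_{f_1}$, where $\pi_{f_1}$ is reduction as in Lemma~\ref{lem:unit-reduction}, the character $\omega$ is trivial on $\ker\pi_{f_1}$. Hence $f\mid f_1$, using the standard fact that the moduli $e\mid N$ with $\omega$ trivial on $\ker\pi_e$ are closed under gcd. Conversely, $\omega$ factors through some character $\omega_f$ modulo $f$. Lemma~\ref{lem:unit-reduction} says $\pi_{f_1}$ is surjective, and $f\mid f_1$. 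Therefore $\psi_1=\omega_f\circ\pi^{f_1}_{f}$, where $\pi^{f_1}_f:\Z_{f_1}^\times\to\Z_f^\times$ is reduction; so $\psi_1$ is induced from modulus $f$. Primitivity of $\psi_1$ then forces $f=f_1$. The same argument gives $f_2=f$.

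\textbf{Second step: $\psi_1=\psi_2$.} Now $\psi_1\circ\pi_f=\omega=\psi_2\circ\pi_f$ on $\Z_N^\times$. Surjectivity of $\pi_f$ (Lemma~\ref{lem:unit-reduction}) gives $\psi_1=\psi_2$ on $\Z_f^\times$. Both vanish on nonunits by convention, so they are equal.

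\textbf{Main obstacle.} The delicate point is the gcd-closure fact used in the first step. Suppose $\omega$ is trivial on $\ker\pi_{e_1}$ and on $\ker\pi_{e_2}$; we need it trivial on $\ker\pi_{\gcd(e_1,e_2)}$. For this I would show
\[
   \ker\pi_{\gcd(e_1,e_2)}=\ker\pi_{e_1}\cdot\ker\pi_{e_2}
\]
inside $\Z_N^\times$. The plan is to argue prime by prime via the Chinese remainder theorem. Locally at $p$, the kernels are the congruence subgroups $1+p^{b}\Z_{p^{a}}$. The larger of $1+p^{b_1}\Z$ and $1+p^{b_2}\Z$ contains the smaller, so the product is $1+p^{\min(b_1,b_2)}\Z$.

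A cleaner alternative would avoid the gcd fact entirely. One can note that $\omega$ is trivial on $\ker\pi_{f_1}\cdot\ker\pi_{f_2}=\ker\pi_{\gcd(f_1,f_2)}$. Then $\psi_1$ is induced from modulus $\gcd(f_1,f_2)$, so primitivity gives $f_1\mid f_2$, and by symmetry $f_2\mid f_1$. I would adopt this version, since it only needs the kernel-product identity once.
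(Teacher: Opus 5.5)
Your proof is correct, and it takes a more self-contained route than the paper, which proves nothing directly. The paper notes that the common character on $\Z_N^\times$ is induced by a unique primitive character and cites this standard fact (Ireland--Rosen); that forces $\psi_1=\psi_2$.

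Your ``cleaner alternative'' uses the same top-level reduction but actually proves the cited uniqueness. You use the identity $\ker\pi_{\gcd(e_1,e_2)}=\ker\pi_{e_1}\cdot\ker\pi_{e_2}$ in $\Z_N^\times$, verified componentwise under the Chinese remainder theorem, where the local kernels are nested. Combined with the surjectivity in Lemma~\ref{lem:unit-reduction}, this lets you descend the common character to modulus $\gcd(f_1,f_2)$. You then pull that factorization back to $\psi_1$ on $\Z_{f_1}^\times$, and primitivity gives $f_1\mid f_2$; symmetry finishes the argument.

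What your version buys is independence from an outside reference: its only inputs are the Chinese remainder theorem and a lemma already in the paper. It also makes visible why conductors are well defined, namely that the moduli through which a character factors are closed under gcd. The paper's version buys brevity at the cost of the citation. The alternative is sufficient on its own, so your first step, with the auxiliary conductor $f$ of the common character, can be dropped.

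One small notational correction. At a prime $p$ with $v_p(e)=0$, the local component of $\ker\pi_e$ is the whole unit group $\Z_{p^{a}}^\times$, not $1+p^{0}\Z_{p^{a}}=\Z_{p^{a}}$, which contains nonunits. With that reading, your componentwise product computation is exactly right.
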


\begin{proof}
Every Dirichlet character is induced by a unique primitive Dirichlet
character.  Applying this standard uniqueness result to the common
character modulo $N$ proves the assertion; see, for example,
\cite{IrelandRosen}.
\end{proof}

\begin{theorem}\label{thm:global-conductor}
Let $\chi$ be an odd primitive quadratic character of conductor $\Delta$,
and let $\Gamma=\mathcal I_{\Delta m,\chi}(X,\sigma)$ be a nonempty oriented
circulant graph.  If $\Gamma$ has PST between distinct vertices, then
\[
   \Delta\in\{3,4,8\}.
\]
More precisely, if PST occurs from $0$ to $s\ne0$, then the order of $s$ in
$\Z_{\Delta m}$ is $3$ when $\Delta=3$, belongs to $\{2,4\}$ when
$\Delta=4$, and is $2$ when $\Delta=8$.
\end{theorem}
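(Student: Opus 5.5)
The plan is to combine the transfer congruence of Theorem~\ref{thm:transfer-structure}(i) with the unit covariance of Lemma~\ref{lem:unit-covariance}. Write $n=\Delta m$, $\lambda_j=\eta_j\sqrt\Delta$ with $\eta_j\in\Z$ as in Theorem~\ref{thm:general-spectrum}, and let PST occur from $0$ to $s\ne0$, where $s$ has order $k\ge2$. By Theorem~\ref{thm:transfer-structure}(i) there is an integer $r$ with $r\widehat\eta_j+sj\equiv0\pmod n$ for all $j$. Since $\widehat\eta_j=\eta_j/g$, Lemma~\ref{lem:unit-covariance} gives $\widehat\eta_{uj}=\chi(u)\widehat\eta_j$ for every $u\in\Z_n^\times$.

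\textbf{Step 1: units reduce to $\pm1$.} Take the congruences at $j=1$ and $j=u$, and multiply the first by $\chi(u)$. Subtracting eliminates $r$ and yields $s\,(u-\chi(u))\equiv0\pmod n$. Lemma~\ref{lem:annihilator-order} then gives $u\equiv\chi(u)\pmod k$ for every $u\in\Z_n^\times$. By Lemma~\ref{lem:unit-reduction}, every unit modulo $k$ lifts to $\Z_n^\times$, so every unit modulo $k$ is $\pm1$. Hence $\varphi(k)\le2$ and $k\in\{2,3,4,6\}$.

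\textbf{Step 2: the conductor when $k\ge3$.} For $k\ge3$ the residues $1$ and $-1$ are distinct modulo $k$. Then $\psi(u)=u \bmod k\in\{\pm1\}$ is a Dirichlet character modulo $k$, and Step~1 says $\chi$ on $\Z_n^\times$ is induced by $\psi$. By Lemma~\ref{lem:primitive-conductor} (uniqueness of the inducing primitive character), $\Delta$ divides $k$. The odd quadratic conductors from Proposition~\ref{prop:delta} are $3,4,7,8,11,\dots$. So $k=3$ forces $\Delta=3$, and $k=4$ forces $\Delta=4$. For $k=6$, $\psi$ factors through $\Z_3^\times$, which forces $\Delta=3$. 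In particular an order-$4$ transfer is impossible for $\Delta=8$.

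\textbf{Step 3: the case $k=2$, and the exclusion of $k=6$.} This is the main obstacle, because Step~1 gives no information when $k=2$. Here $n$ is even, $s=n/2$, and for odd $j$ the congruence reads $r\widehat\eta_j\equiv n/2\pmod n$, so $\eta_j\ne0$ for every odd $j$. We need to show that $\Delta\in\{4,8\}$. Suppose not. Then $\Delta$ has an odd prime divisor: either $\Delta$ is odd, or $\Delta=4q$ with $q>2$. Put $j=\prod_{p\mid\Delta,\ p\text{ odd}}p^{v_p(n)}$, which is odd and nonzero in $\Z_n$ since $n$ is even. We evaluate $\eta_j$ with \eqref{eq:general-spectrum}:
\begin{itemize}
\item a summand with $Q_d\nmid j$ vanishes by definition;
\item if $Q_d\mid j$, then for any odd prime $p\mid\Delta$ we have $v_p(j/Q_d)=v_p(n)-v_p(q_d)\ge v_p(\Delta)>0$, so $\chi(j/Q_d)=0$.
\end{itemize}
Thus $\eta_j=0$ for this odd $j$, a contradiction. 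Therefore $k=2$ forces $\Delta\in\{4,8\}$.

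Finally, if $k=6$ then Step~2 gives $\Delta=3$. But $3s$ lies in the subgroup $\mathcal T$ of Theorem~\ref{thm:transfer-structure}(ii) and has order $2$, which the $k=2$ case forbids when $\Delta=3$.

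Collecting the cases: $\Delta=3$ allows only $k=3$; $\Delta=4$ allows $k\in\{2,4\}$; $\Delta=8$ allows only $k=2$; and no other conductor admits PST. This gives both assertions of the theorem.
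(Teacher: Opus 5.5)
Your proof is correct and follows the paper's argument: you eliminate $r$ via unit covariance to get $u\equiv\chi(u)\pmod k$, use uniqueness of the inducing primitive character to pin down $\Delta$ when $k\ge3$, and handle $k=2$ with a vanishing coefficient $\eta_j=0$ at an odd index built from the odd primes of $\Delta$. Your departures are only streamlinings: you obtain $k\in\{2,3,4,6\}$ at once from $\varphi(k)\le2$ rather than prime by prime, and you exclude $k=6$ by noting that $3s\in\mathcal T$ has order $2$, instead of repeating the computation at the index $3^{a+1}$.
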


\begin{proof}
Suppose that PST occurs from $0$ to $s\ne0$, and let
\[
   k=\frac{n}{\gcd(n,s)}
\]
be the order of $s$ in $\Z_n$, where $n=\Delta m$.  By
Theorem~\ref{thm:general-spectrum}, write
$\lambda_j=\eta_j\sqrt{\Delta}$, and set
$g=\gcd\{|\eta_j|:0\le j<n\}$ and $\widehat\eta_j=\eta_j/g$.  By Theorem~\ref{thm:transfer-structure}, there is an integer $r$ such that
\[
   r\widehat\eta_j+sj\equiv0\pmod n
   \qquad (j\in\Z_n).
\]
Apply the congruence at $uj$, use Lemma~\ref{lem:unit-covariance}, and
subtract $\chi(u)$ times the congruence at $j$.  At $j=1$ this yields
\[
   s\bigl(u-\chi(u)\bigr)\equiv0\pmod n.
\]
Lemma~\ref{lem:annihilator-order} therefore gives
\begin{equation}\label{eq:unit-order-restriction}
   u\equiv\chi(u)\pmod k
   \qquad (u\in\Z_n^\times).
\end{equation}

By Lemma~\ref{lem:unit-reduction}, the map
$\Z_n^\times\to\Z_\ell^\times$ is onto whenever $\ell\mid n$.  We now analyze
the possible prime divisors of $k$.  Suppose first that an odd prime $p$
divides $k$.  Surjectivity of the reduction map to $\Z_p^\times$, together with
\eqref{eq:unit-order-restriction}, shows that every nonzero residue modulo
$p$ is $1$ or $-1$.  Hence $p=3$.  If $9\mid k$, surjectivity of the
reduction map to $\Z_9^\times$ gives a unit $u\equiv2\pmod9$, whereas
\eqref{eq:unit-order-restriction} would give $2\equiv\chi(u)\equiv\pm1\pmod9$,
a contradiction.  Thus $v_3(k)\le1$.  If $3\mid k$, then
\eqref{eq:unit-order-restriction} says that the character induced by $\chi$
on $\Z_n^\times$ equals the character induced by the non-principal
character modulo $3$.  Lemma~\ref{lem:primitive-conductor} gives
$\Delta=3$.  Similarly, if $4\mid k$, then
$\chi(u)\equiv u\pmod4$ for every unit $u$.  The character induced by
$\chi$ on $\Z_n^\times$ is therefore the character induced by $\chi_4$, and
Lemma~\ref{lem:primitive-conductor} gives $\Delta=4$.  Finally, $8\nmid k$: a unit
$u\equiv3\pmod8$ cannot satisfy $u\equiv\pm1\pmod8$.  Also, $12\nmid k$,
because $3\mid k$ and $4\mid k$ would force both $\Delta=3$ and $\Delta=4$.
Since $s\ne0$, we have $k>1$.  The preceding restrictions therefore leave only
\[
   k\in\{2,3,4,6\}.
\]
The cases $k=3$ and $k=4$ give the orders listed in the theorem,
while $k=6$ can occur only when $\Delta=3$.

It remains to exclude $k=2$ when $\Delta\notin\{4,8\}$ and $k=6$ when
$\Delta=3$.

\smallskip
\noindent\textbf{Case 1.} $k=2$.
Then $n$ is even and $s=n/2$.
By Proposition~\ref{prop:delta}, the only conductors of odd primitive
quadratic characters whose prime divisors are all equal to $2$ are $4$ and
$8$.  Hence the fundamental discriminant $-\Delta$ has an odd prime divisor
$p$.  Put $a=v_p(m)$ and $j=p^{a+1}$.  We have $0<j<n$: the only possible
equality $j=n$ would require $\Delta=p$ and $m=p^a$, contrary to the parity of
$n$.  For every term in \eqref{eq:general-spectrum}, we have $q_d=m/d$ and
hence $v_p(Q_d)\le a$.  Therefore either $Q_d\nmid j$, or
$p\mid j/Q_d$; in the latter case $\chi(j/Q_d)=0$.  Thus $\eta_j=0$.
Since $j$ is odd, the congruence at this index would require
\[
   0\equiv sj=\frac n2j\not\equiv0\pmod n,
\]
a contradiction.

\smallskip
\noindent\textbf{Case 2.} $k=6$ and $\Delta=3$.
Again $n$ is even.
With $a=v_3(m)$, the preceding calculation at the Fourier index
$3^{a+1}$ gives $\eta_{3^{a+1}}=0$ and $3^{a+1}<n$.  Writing
$s=(n/6)c$ with $(c,6)=1$, the congruence at this index would imply
$6\mid c3^{a+1}$, which is impossible.  This contradiction leaves only the
conductors and orders in the statement.
\end{proof}

Theorem~4.1 of Chan et al.\
\cite{ChanPantangiRazafimahatratraSin2026} already implies that an oriented
normal Cayley graph on a solvable group has no set supporting pairwise PST
with more than four vertices.  Since cyclic groups are solvable, the next
corollary records this bound in the present setting and refines it according
to the conductor.

\begin{corollary}\label{cor:global-mst-bound}
Let $\Gamma$ be a nonempty oriented circulant graph.  Every set supporting
pairwise PST in $\Gamma$ has at most four vertices.  If $\Gamma$ has PST
between distinct vertices, let $\Delta$ be the conductor in the
representation supplied by Corollary~\ref{cor:pst-character-family}.  A
four-vertex MST set can occur only when $\Delta=4$.  If $\Delta=3$, every MST
set has at most three vertices; if $\Delta=8$, every MST set has at most two
vertices.
\end{corollary}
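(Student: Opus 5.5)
The plan is to reduce everything to the transfer subgroup $\mathcal T$ of Theorem~\ref{thm:transfer-structure} and then read off its order from Theorem~\ref{thm:global-conductor}.

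\textbf{Setup.} Suppose $\Gamma$ is nonempty and that $S$ is a set supporting pairwise PST with $|S|\ge2$; a set with one vertex satisfies the bound trivially.

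\begin{enumerate}[label=(\arabic*)]
\item Since $\Gamma$ has PST between distinct vertices, Corollary~\ref{cor:pst-character-family} gives a representation $\Gamma=\mathcal I_{n,\chi}(X,\sigma)$. Here $\chi$ has conductor $\Delta$ and $n=\Delta m$.
\item By Theorem~\ref{thm:general-spectrum}, the eigenvalues lie in $\sqrt\Delta\Z$. They are not all zero, because a zero spectrum gives $U(t)=I$ and excludes PST.
\item Theorem~\ref{thm:transfer-structure} therefore applies with $D=\Delta$. By part~(iv), $S$ lies in a single coset of $\mathcal T$, so $|S|\le|\mathcal T|=h$. By part~(ii), $\mathcal T=(n/h)\Z_n$ is cyclic of order $h$.
\end{enumerate}

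\textbf{Bounding $h$.} Choose the generator $s_0=n/h$ of $\mathcal T$. If $h>1$, then $s_0\ne0$ and PST occurs from $0$ to $s_0$. The order of $s_0$ in $\Z_n$ is exactly $h$. Theorem~\ref{thm:global-conductor} then gives $\Delta\in\{3,4,8\}$, with
\begin{itemize}
\item $h=3$ when $\Delta=3$,
\item $h\in\{2,4\}$ when $\Delta=4$,
\item $h=2$ when $\Delta=8$.
\end{itemize}
Hence $h\le4$ in every case, and $h=4$ is possible only when $\Delta=4$. The bound $|S|\le h$ then gives all the stated assertions at once.

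\textbf{Points to check.} The only subtle step is that the conductor $\Delta$ is the one attached to the representation from Corollary~\ref{cor:pst-character-family}. If $\Gamma$ admitted two such representations, they would come from different square-free $\delta$. A nonzero eigenvalue lying in both $\sqrt{\delta}\Z$ and $\sqrt{\delta'}\Z$ forces $\delta=\delta'$, so the conductor is well defined. I expect this to be the main point to state carefully. One should also confirm that Theorem~\ref{thm:global-conductor} applies at every order $n=\Delta m$, and it does, since it imposes no coprimality condition.
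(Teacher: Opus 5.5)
Your proposal is correct and follows the paper's proof. Both arguments confine every MST set to a coset of the transfer subgroup $\mathcal T$ via Theorem~\ref{thm:transfer-structure}, then bound $|\mathcal T|$ using the cyclicity of $\mathcal T$ and the element orders listed in Theorem~\ref{thm:global-conductor}. Your extra check that the conductor is well defined is valid but not needed, since the argument works for any representation supplied by Corollary~\ref{cor:pst-character-family}.
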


\begin{proof}
If $\Gamma$ has no PST between distinct vertices, the assertion is immediate.
Otherwise, Corollary~\ref{cor:pst-character-family} represents $\Gamma$ as
$\mathcal I_{\Delta m,\chi}(X,\sigma)$.  Let $\mathcal T$ be its transfer
subgroup.
Theorem~\ref{thm:global-conductor} lists the possible orders of its nonzero
elements.  Since every subgroup of $\Z_n$ is cyclic, its order is at most
$4$, with the sharper bounds for $\Delta=3$ and $\Delta=8$.
\end{proof}

\begin{remark}
The general four-vertex bound in
\cite[Theorem~4.1]{ChanPantangiRazafimahatratraSin2026} does not determine
which cyclic connection sets attain orders $2$, $3$, or $4$.
Corollary~\ref{cor:global-mst-bound} identifies the conductor required for
each of these orders, and Theorem~\ref{thm:all-orders-classification} below
determines exactly when they occur.
\end{remark}

\subsection{Classification at arbitrary orders}\label{sec:all-orders}

We now complete the classification at arbitrary orders.  By Corollary~\ref{cor:pst-character-family}, which combines
\cite[Theorem~6.1]{GodsilLato2020} with
\cite[Theorem~8.1]{GodsilZhang2024}, every nonempty oriented circulant graph
with PST has the form $\mathcal I_{n,\chi}(X,\sigma)$.  Theorem~\ref{thm:global-conductor} restricts its conductor to $3$, $4$, or $8$.
It therefore remains to determine the admissible divisors in $X$ at arbitrary
multiples of these conductors.  The integer $m$ below need neither be coprime
to the conductor nor be a prime power.  Since $n=\Delta m$, we have
\[
   \Delta\mid n/d\quad\Longleftrightarrow\quad d\mid m.
\]
Thus the admissible gcd-class indices are precisely the divisors of $m$.  For
$\Delta\in\{3,4,8\}$, define
\[
   p=
   \begin{cases}
      3,&\Delta=3,\\
      2,&\Delta\in\{4,8\},
   \end{cases}
   \qquad
   m=p^a u,\quad (p,u)=1.
\]
For $0\le r\le a$, define
\[
   X_r=X\cap\{p^{a-r}e:e\mid u\}.
\]
Thus $X_0$ consists of the selected divisors having the same $p$-adic
valuation as $m$.  The remaining sets $X_r$ are ordered by decreasing
$p$-adic valuation.

\begin{theorem}\label{thm:all-orders-classification}
Let $\Delta\in\{3,4,8\}$, let $\chi$ be the odd primitive quadratic character
of conductor $\Delta$, and let
$\Gamma=\mathcal I_{\Delta m,\chi}(X,\sigma)$ be a nonempty oriented
circulant graph.  Put $n=\Delta m$, and let $\mathcal T$ be its transfer
subgroup.  Write
\[
   \lambda_j=\eta_j\sqrt{\Delta},\qquad
   g=\gcd\{|\eta_j|:0\le j<n\},\qquad
   \widehat\eta_j=\eta_j/g.
\]

If $\Delta=3$, then
\[
   \mathcal T=
   \begin{cases}
      m\Z_{3m},&X_0=\{m\},\\
      \{0\},&X_0\ne\{m\}.
   \end{cases}
\]
If $X_0=\{m\}$, then
\[
   U\!\left(\frac{2\pi}{3\sqrt3}\right)=P_{\sigma(m)m}.
\]

If $\Delta=4$, then
\[
\mathcal T=
\begin{cases}
\{0\},&X_0\ne\{m\},\\
\{0,2m\},&a=0\text{ and }X_0=\{m\},\\
\{0,2m\},&a\ge1,\ X_0=\{m\},\ X_1\ne\{m/2\},\\
m\Z_{4m},&a\ge1,\ X_0=\{m\},\ X_1=\{m/2\}.
\end{cases}
\]
In the second and third cases, $U(\pi/2)=P_{2m}$.  In the fourth case,
\[
   U(\pi/4)=P_{\sigma(m)m}.
\]

If $\Delta=8$, then
\[
   \mathcal T=
   \begin{cases}
      \{0,4m\},&X_0=\{m\},\\
      \{0\},&X_0\ne\{m\},
   \end{cases}
\]
and, in the first case, $U(\pi/\sqrt8)=P_{4m}$.  In every case above with
$\mathcal T\ne\{0\}$, we have $g=1$.
\end{theorem}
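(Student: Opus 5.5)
By Theorem~\ref{thm:global-conductor}, every nonzero element of $\mathcal T$ has order $k=3$ when $\Delta=3$, order $k\in\{2,4\}$ when $\Delta=4$, and order $k=2$ when $\Delta=8$. Since $\mathcal T$ is a cyclic subgroup of $\Z_n$ by Theorem~\ref{thm:transfer-structure}(ii), it is either $\{0\}$ or the unique subgroup of order $3$, $2$ or $4$. It is therefore enough to decide, for $s=n/k$, whether $U(t)=P_{s'}$ for some generator $s'$ of $(n/k)\Z_n$. I will use Proposition~\ref{prop:shift} with the unnormalized coefficients $\eta_j$. The reason this suffices is that in every positive case I will check directly that $g=1$. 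For instance, only $d\in X_0$ has $Q_d\mid 1$, so
\[
   \eta_1=-\sum_{d\in X_0}\sigma(d)\chi(m/d)\mu(m/d),
\]
which equals $-\sigma(m)$ when $X_0=\{m\}$. Thus the test is: there is a $\rho$ with $\rho\eta_j+j\equiv0\pmod k$ for all $j$. Both the times and the exact translation $P_{\sigma(m)m}$ (respectively $P_{2m}$, $P_{4m}$) come out of the value of $\rho$ that is forced.

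The core step is to reduce \eqref{eq:general-spectrum} modulo $k$. Every summand with $Q_d\ge p$ carries the factor $Q_d$. Such a summand therefore vanishes modulo $p$, and when $Q_d\ge4$ it vanishes modulo $4$ as well. Since $Q_d=p^{r}$ exactly when $d\in X_r$, only $X_0$ survives modulo $p$, and only $X_0$ and $X_1$ survive modulo $4$. We have $\chi(j)\equiv j\pmod 3$ for $3\nmid j$ and $\chi(j)\equiv j\pmod 2$. Hence, for $k=p\in\{2,3\}$, the condition at indices $j$ with $p\mid j$ is automatic, since every surviving summand vanishes there. At indices with $p\nmid j$ the condition becomes
\[
   \sum_{d\in X_0}\sigma(d)\chi(m/d)\,c_{m/d}(j)\equiv c\pmod p,
\]
where $c\not\equiv0$ is a constant independent of $j$. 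Here $m/d$ ranges over the divisors of $u$, and $c_{m/d}(j)$ depends only on $\gcd(j,u)$. As $j$ ranges over residues prime to $p$, the CRT shows that $\gcd(j,u)$ takes every divisor of $u$. So this is an identity of functions on $\Z_u$. Because $c_1\equiv1$, Lemma~\ref{lem:ramanujan-independent} over $\mathbb F_p$ (with $p\nmid u$) forces $X_0=\{m\}$, with coefficient $\sigma(m)$. This gives the cases $\Delta=3$ (with $k=3$), $\Delta=8$ (with $k=2$), and $\Delta=4$ (with $k=2$). It also gives $\rho\equiv\pm\sigma(m)$, from which I will read off $U(2\pi/(3\sqrt3))=P_{\sigma(m)m}$, $U(\pi/\sqrt8)=P_{4m}$ and $U(\pi/2)=P_{2m}$.

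For $\Delta=4$ and $k=4$, I assume $X_0=\{m\}$, which is necessary because the order-$2$ element lies in the subgroup. I then split the indices modulo $4$. At odd $j$ we get $\eta_j\equiv-\sigma(m)\chi(j)\equiv-\sigma(m)j\pmod4$, which fixes $\rho$. At $j\equiv0\pmod4$ the condition holds automatically. At $j\equiv2\pmod4$ the $X_0$-term vanishes because $\chi(j)=0$. If $a=0$, then $\eta_j\equiv0$ there, so the condition fails and $\mathcal T=\{0,2m\}$. If $a\ge1$, only $X_1$ contributes, giving
\[
   \eta_j\equiv 2\chi(j/2)\sum_{d\in X_1}\sigma(d)\chi(M_d)c_{M_d}(j)\pmod 4,
\]
and we need this to be $\equiv 2$. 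So the sum must be odd for every such $j$, and the same $\mathbb F_2$-independence argument forces $X_1=\{m/2\}$. The time is then $U(\pi/4)=P_{\sigma(m)m}$.

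I expect the main obstacle to be this bookkeeping modulo $4$: verifying that the terms with $Q_d=2$ and $Q_d\ge4$ behave as claimed, and that the sign of $\rho$ is consistent between the odd indices and the indices $j\equiv2\pmod 4$. I would check this carefully against Example~\ref{ex:basic-spectra}.
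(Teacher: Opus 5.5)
Your strategy matches the paper's. You reduce \eqref{eq:general-spectrum} modulo $k$ block by block, use $\chi(j)\equiv j$, apply Lemma~\ref{lem:ramanujan-independent} over $\mathbb F_p$ to $F_0-c\,c_1$, and treat $j\equiv2\pmod4$ separately for order~$4$. The sufficiency directions, the formula $\eta_1=-\sigma(m)$ giving $g=1$, and the mod-$4$ bookkeeping are all correct.

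The gap is in the necessity direction for $k=p$. Proposition~\ref{prop:shift} with the unnormalized $\eta_j$ only detects transfer at times in $\frac{2\pi}{k\sqrt D}\Z$. A PST time $t$ to an element $s=cn/k$ with $(c,k)=1$ satisfies $U(kt)=I$. So Theorem~\ref{thm:transfer-structure}(iii) only places $t$ in $\frac{2\pi}{gk\sqrt D}\Z$, and the phase condition is the normalized congruence $q\widehat\eta_j+cj\equiv0\pmod k$. Your justification, that $g=1$ in every positive case, is circular here. To prove that $X_0\neq\{m\}$ excludes PST, you must exclude it at all times, including for graphs whose $g$ is divisible by $p$. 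Such graphs exist: for nonempty $X$, $p\mid g$ happens exactly when $X_0=\varnothing$. As written, you have only shown that no nontrivial translation is attained at times in $\frac{2\pi}{k\sqrt D}\Z$.

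The missing step is the one the paper supplies.
\begin{itemize}
\item From the normalized congruence at $j=1$ you get $\widehat\eta_1\not\equiv0\pmod p$. Hence $\eta_1\neq0$, and your own formula for $\eta_1$ gives $X_0\neq\varnothing$.
\item Then $F_0$ is a nonzero $\mathbb F_p$-combination of Ramanujan sums. Lemma~\ref{lem:ramanujan-independent} and the CRT produce an index $j$ with $p\nmid j$ and $p\nmid\eta_j$, so $p\nmid g$.
\item Multiply the normalized congruence by $g$, then by an inverse of $gc$ modulo $k$. This yields your test $\rho\eta_j+j\equiv0\pmod k$, and your argument runs unchanged from there.
\end{itemize}
For $k=4$ no further issue arises, since there you already know $X_0=\{m\}$ and $g=1$ from the order-$2$ case.
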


\begin{proof}
We begin by separating the coefficient sequence according to the $p$-adic
valuation of the Fourier index.  Let $1\le j<n$ and suppose that
$v_p(j)=r\le a$.  For a divisor $d=p^{a-s}e$, where $e\mid u$, the parameter
$Q_d$ in Theorem~\ref{thm:general-spectrum} is $p^s$.  If $s>r$, then
$Q_d\nmid j$; if $s<r$, then $\chi(j/Q_d)=0$.  Thus only the divisors in
$X_r$ contribute, and Theorem~\ref{thm:general-spectrum} gives
\begin{equation}\label{eq:mixed-block-spectrum}
   \eta_j=-p^r\chi(j/p^r)F_r(j),
   \qquad
   F_r(j)=
   \sum_{\substack{e\mid u\\p^{a-r}e\in X}}
      \sigma(p^{a-r}e)\chi(u/e)c_{u/e}(j).
\end{equation}
Also, $\eta_0=0$, and $\eta_j=0$ when $v_p(j)>a$.  Formula
\eqref{eq:mixed-block-spectrum} will be used in each of the three conductor
cases.

\smallskip
\noindent\textbf{Case 1.} $\Delta=3$.
Suppose that $\mathcal T\ne\{0\}$.  By
Theorem~\ref{thm:global-conductor}, every nonzero element of $\mathcal T$ has order
$3$.  Since $\mathcal T$ is cyclic, it is the unique subgroup of order $3$
in $\Z_{3m}$; hence $m\in\mathcal T$.  Theorem~\ref{thm:transfer-structure} gives an integer $R$ such
that
\[
   R\widehat\eta_j+mj\equiv0\pmod{3m}
   \qquad(0\le j<3m).
\]
Reduction modulo $m$ gives $R\widehat\eta_j\equiv0\pmod m$ for every $j$.
Multiplying by the coefficients in \eqref{eq:bezout-normalized} and summing
shows that $m\mid R$.  Write $R=m\rho$.  We obtain
\begin{equation}\label{eq:mixed-three-congruence}
   \rho\widehat\eta_j+j\equiv0\pmod3.
\end{equation}
At $j=1$, the congruence \eqref{eq:mixed-three-congruence} forces
$\widehat\eta_1\not\equiv0\pmod3$; equation~\eqref{eq:mixed-block-spectrum}
therefore shows that $X_0$ is nonempty.  We next show that the normalizing
factor $g$ is not divisible by $3$.  If $3\mid g$, then
$F_0(j)\equiv0\pmod3$ for every $j$ coprime to $3$.  Every residue modulo
$u$ has such a representative, so Lemma~\ref{lem:ramanujan-independent}
would force all coefficients in $F_0$ to vanish modulo $3$, a contradiction.
Thus $3\nmid g$.  Multiplying \eqref{eq:mixed-three-congruence} by $g$ and
using $\chi_3(j)\equiv j\pmod3$ shows that
$F_0(j)\equiv c\pmod3$ for a fixed $c\in\mathbb F_3^\times$.  Write
\[
   F_0(j)=\sum_{q\mid u}a_qc_q(j),
\]
where the coefficients are now regarded in $\mathbb F_3$ and $a_q$ is
nonzero precisely when $m/q\in X_0$.  Then
$F_0-cc_1$ vanishes modulo $3$.  Lemma~\ref{lem:ramanujan-independent}
gives $a_q=0$ for $q>1$ and $a_1=c\ne0$.  Hence $X_0=\{m\}$.
Conversely, assume that $X_0=\{m\}$.  Then $g=1$ and
\[
   \eta_j=-\sigma(m)\chi_3(j)\quad(3\nmid j),
   \qquad 3\mid\eta_j\quad(3\mid j).
\]
Since $\chi_3(j)\equiv j\pmod3$, these relations give
\[
   \eta_j+\sigma(m)j\equiv0\pmod3
   \qquad(j\in\Z_{3m}).
\]
Comparison of the Fourier eigenvalues gives
$U(2\pi/(3\sqrt3))=P_{\sigma(m)m}$.
Theorem~\ref{thm:global-conductor} excludes PST from $0$ to any other vertex.

\smallskip
\noindent\textbf{Case 2.} $\Delta\in\{4,8\}$ and antipodal transfer.
Suppose that $\Delta\in\{4,8\}$ and
$\mathcal T\ne\{0\}$.  Theorem~\ref{thm:global-conductor} implies that
$n/2\in\mathcal T$: it is the unique element of order $2$, and it is the
double of every element of order $4$.  Applying
Theorem~\ref{thm:transfer-structure} with
$s=n/2$ gives an integer $R$ such that
\[
   R\widehat\eta_j+\frac n2j\equiv0\pmod n
   \qquad(0\le j<n).
\]
Reduction modulo $n/2$, followed by
\eqref{eq:bezout-normalized}, shows that $n/2\mid R$.  Writing
$R=(n/2)\rho$ and dividing the congruence by $n/2$ gives
\[
   \rho\widehat\eta_j+j\equiv0\pmod2
\]
for some integer $\rho$.  At the index $j=1$, this congruence forces
$\widehat\eta_1$ to be odd, so equation~\eqref{eq:mixed-block-spectrum}
shows that $X_0$ is nonempty.  If
$2\mid g$, Lemma~\ref{lem:ramanujan-independent}, applied to $F_0$ modulo
$2$, gives a contradiction.  Hence $g$ is odd, and the last congruence is
equivalent to $\eta_j\equiv j\pmod2$.  Equation
\eqref{eq:mixed-block-spectrum} now gives $F_0(j)\equiv1\pmod2$ at every odd
index.  Write $F_0(j)=\sum_{q\mid u}a_qc_q(j)$, where $a_q$ is nonzero
precisely when $m/q\in X_0$, with the coefficients regarded in
$\mathbb F_2$.  Applying
Lemma~\ref{lem:ramanujan-independent} to $F_0-c_1$ gives
$a_q=0$ for $q>1$ and $a_1=1$.  Thus the parity condition is equivalent to
$X_0=\{m\}$.  Under this condition $g=1$, and
Proposition~\ref{prop:shift} gives the stated antipodal times.
This establishes the classification for $\Delta=8$ and the first two
alternatives for $\Delta=4$.

\smallskip
\noindent\textbf{Case 3.} $\Delta=4$ and transfer to a vertex of order $4$.
It remains to determine when the oriented circulant graph
$\Gamma=\mathcal I_{4m,\chi_4}(X,\sigma)$ has PST from $0$ to a vertex of
order $4$.  Suppose first that $\mathcal T$ contains such a vertex.  Then it
contains the unique subgroup $m\Z_{4m}$ of order $4$, and in particular
$m\in\mathcal T$.  The preceding argument gives $X_0=\{m\}$ and $g=1$.
Applying
Theorem~\ref{thm:transfer-structure} with $s=n/4=m$ gives an integer $R$
satisfying the
corresponding congruences modulo $n$.  Reduction modulo $n/4$, followed by
\eqref{eq:bezout-normalized}, shows that $n/4\mid R$.  Since $g=1$, writing
$R=(n/4)\rho$ and dividing by $n/4$ gives
\[
   \rho\eta_j+j\equiv0\pmod4.
\]
At odd indices this fixes $\rho\equiv\sigma(m)\pmod4$.  If $a=0$, the index
$j=2$ rules out the congruence.  If $a\ge1$, put $j=2w$ with $w$ odd and use
\eqref{eq:mixed-block-spectrum}.  After division by $2$, the congruence is
equivalent to $F_1(2w)\equiv1\pmod2$ for all such $w$.  Since $u$ is odd,
these arguments cover every residue modulo $u$.  Lemma~\ref{lem:ramanujan-independent} shows that this occurs exactly when
$X_1=\{m/2\}$: in the expansion
$F_1(2w)=\sum_{q\mid u}a_qc_q(2w)$ over $\mathbb F_2$, only the coefficient
with $q=1$ can be nonzero, and this coefficient corresponds to the divisor
$m/2$.  Thus $a\ge1$ and $X_1=\{m/2\}$ are necessary.

Conversely, suppose that $a\ge1$, $X_0=\{m\}$, and $X_1=\{m/2\}$.  Then
$g=1$.  We verify the required congruence according to the $2$-adic
valuation of $j$.  If $j$ is odd, only the divisor $m\in X_0$ contributes
to \eqref{eq:mixed-block-spectrum}, and hence
\[
   \eta_j=-\sigma(m)\chi_4(j)
   \equiv-\sigma(m)j\pmod4.
\]
If $v_2(j)=1$, write $j=2w$ with $w$ odd.  Only the divisor
$m/2\in X_1$ contributes, so
\[
   \eta_j=-2\sigma(m/2)\chi_4(w).
\]
Both $\sigma(m/2)\chi_4(w)$ and $\sigma(m)w$ are odd.  Therefore
\[
   \eta_j+\sigma(m)j
   =2\bigl(-\sigma(m/2)\chi_4(w)+\sigma(m)w\bigr)
   \equiv0\pmod4.
\]
Finally, if $v_2(j)\ge2$, then $4\mid j$, and
\eqref{eq:mixed-block-spectrum} shows that $4\mid\eta_j$ as well.  Thus
\[
   \eta_j+\sigma(m)j\equiv0\pmod4
   \qquad(j\in\Z_{4m}).
\]
Comparison of the Fourier eigenvalues gives
$U(\pi/4)=P_{\sigma(m)m}$.
Theorem~\ref{thm:global-conductor} excludes PST from $0$ to vertices outside
the displayed subgroup.  This completes the three conductor cases.
\end{proof}

\begin{proof}[Proof of Theorem~\ref{thm:introduction-classification}]
Suppose that $\Gamma$ has PST between distinct vertices.  By
Corollary~\ref{cor:pst-character-family}, it has a representation
$\Gamma=\mathcal I_{n,\chi}(X,\sigma)$.  Theorem~\ref{thm:global-conductor}
then gives $\Delta\in\{3,4,8\}$; the primitive character $\chi$ is
$\chi_\Delta$.  Theorem~\ref{thm:all-orders-classification} gives precisely
the cases for which $\mathcal T\ne\{0\}$.  The converse follows from
Theorem~\ref{thm:all-orders-classification}.
\end{proof}

For the graphs in Theorem~\ref{thm:all-orders-classification},
Theorem~\ref{thm:transfer-structure} shows that PGST between distinct
vertices occurs exactly in the PST cases.  Thus approximation introduces no
additional cases.
Since $g=1$ whenever $\mathcal T\ne\{0\}$, part~(iii) of the same theorem
shows that the displayed times in
Theorem~\ref{thm:all-orders-classification} are the minimum positive PST
times.

Table~\ref{tab:all-orders-classification} summarizes the cases in
Theorem~\ref{thm:all-orders-classification} for which
$\mathcal T\ne\{0\}$.  In the fourth column, $t_{\min}$ is the minimum
positive time at which PST occurs between distinct vertices.

\begin{table}[htbp]
\centering
\small
\renewcommand{\arraystretch}{1.18}
\caption{PST cases in Theorem~\ref{thm:all-orders-classification}}
\label{tab:all-orders-classification}
\vspace{0.5em}
\begin{tabular}{ccccc}
\toprule
$\Delta$ & condition & $\mathcal T$ & $U(t_{\min})$ and $t_{\min}$ &
largest MST size \\
\midrule
$3$ & $X_0=\{m\}$ & $m\Z_{3m}$ &
$P_{\sigma(m)m},\ \frac{2\pi}{3\sqrt3}$ & $3$ \\[3pt]
$4$ & $\begin{gathered}a=0,\ X_0=\{m\},\ \text{or}\\
   a\ge1,\ X_0=\{m\},\ X_1\ne\{m/2\}\end{gathered}$ & $\{0,2m\}$ &
$P_{2m},\ \frac{\pi}{2}$ & $2$ \\[3pt]
$4$ & $a\ge1$, $X_0=\{m\}$, $X_1=\{m/2\}$ & $m\Z_{4m}$ &
$P_{\sigma(m)m},\ \frac{\pi}{4}$ & $4$ \\[3pt]
$8$ & $X_0=\{m\}$ & $\{0,4m\}$ &
$P_{4m},\ \frac{\pi}{\sqrt8}$ & $2$ \\
\bottomrule
\end{tabular}
\end{table}

For every graph in the table, nonzero transfer requires $X_0=\{m\}$; among
the selected divisors having the same $p$-adic valuation as $m$, only $m$ may
occur.  At conductor $4$, the additional condition $X_1=\{m/2\}$ changes the
cardinality of $\mathcal T$ from $2$ to $4$.

\section{Consequences, examples, and enumeration}

\subsection{The coprime case}

When $(\Delta,m)=1$, the conductor restricts both the order in $\Z_n$ and the
possible vertices joined to $0$ by PST.

\begin{theorem}\label{thm:coprime-pst-pairs}
Let $\chi$ be an odd primitive quadratic character of conductor $\Delta$,
and let $\Gamma=\mathcal I_{\Delta m,\chi}(X,\sigma)$ be an oriented
circulant graph with $(\Delta,m)=1$.  If $\Gamma$ has PST from $a$ to
$b$, then
\[
   b-a\equiv0\pmod m.
\]
\end{theorem}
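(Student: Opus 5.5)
We reduce to PST from $0$ to $s=b-a$ and show $m\mid s$. By \eqref{eq:all-pst-pairs} in Theorem~\ref{thm:transfer-structure}, PST from $a$ to $b$ means $s:=b-a\in\mathcal T$, and $s\ne0$ since the vertices are distinct. By Theorem~\ref{thm:transfer-structure}(i) there is an integer $r$ with
\[
   r\widehat\eta_j+sj\equiv0\pmod{n},\qquad n=\Delta m,
\]
for all $j$. We may discard the degenerate case in which every $\eta_j$ vanishes, since then $U(t)=I$ for all $t$ and no PST between distinct vertices occurs.

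The main route is through Theorem~\ref{thm:global-conductor}. It shows that $\Delta\in\{3,4,8\}$ and that the order $k$ of $s$ in $\Z_{\Delta m}$ lies in $\{2,3,4\}$. More precisely, $k=3$ when $\Delta=3$, $k\in\{2,4\}$ when $\Delta=4$, and $k=2$ when $\Delta=8$. In each case $k$ divides $\Delta$, so $ks\equiv0\pmod n$, and therefore $s$ lies in the unique subgroup of order $k$, namely $(n/k)\Z_n$. Since $k\mid\Delta$, we have $m\mid n/k$, hence $m\mid s$. This already proves the theorem.

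The fallback route, if one prefers not to invoke the order restriction, uses unit covariance directly. Lemma~\ref{lem:unit-covariance} gives $\eta_{uj}=\chi(u)\eta_j$. Substituting this into the congruence at $j=1$ yields $s(u-\chi(u))\equiv0\pmod n$ for every $u\in\Z_n^\times$. Because $(\Delta,m)=1$, the Chinese remainder theorem (with Lemma~\ref{lem:unit-reduction}) lets us choose a unit $u$ with $u\equiv1\pmod\Delta$ and $u$ an arbitrary unit modulo $m$; for such $u$ we have $\chi(u)=1$. Then $s(u-1)\equiv0\pmod n$. Reducing modulo $m$ shows that $s$ is killed by every $u-1$ with $u\in\Z_m^\times$, and taking the gcd of these values gives $m\mid s$ except when $\Z_m^\times$ is too small, which happens for $m\mid 24$.

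\textbf{Main obstacle.} In the fallback route, the small-$m$ cases are exactly where the argument fails, and they would have to be handled by the order argument anyway. So the proof is simply the first route: Theorem~\ref{thm:global-conductor} gives $k\mid\Delta$, and a one-line subgroup argument gives $m\mid s$.
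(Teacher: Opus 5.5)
Your main route is correct, but it differs from the paper's proof.

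\textbf{The paper's argument.} The paper works directly from the coprime eigenvalue formula \eqref{eq:eta}. When $m>1$, the index $j=\Delta$ is a nonzero Fourier index with $\chi(\Delta)=0$, so $\lambda_\Delta=\lambda_0=0$. Equating the aligned phases $\exp(\ii t\lambda_j)\zeta_n^{j(b-a)}$ at $j=0$ and $j=\Delta$ gives $\zeta_n^{\Delta(b-a)}=1$, which is exactly $m\mid b-a$. This takes two lines and uses neither the normalized congruences of Theorem~\ref{thm:transfer-structure} nor the conductor restriction.

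\textbf{Your argument.} You instead invoke Theorem~\ref{thm:global-conductor}, which rests on unit covariance, uniqueness of primitive characters, and the vanishing indices $p^{a+1}$. There is no circularity, since that theorem is proved in Section~4 independently of this statement. The route is much heavier, but it gains something. The order $k$ of $s$ divides $\Delta$ for every $m$, so your argument proves $m\mid b-a$ without the hypothesis $(\Delta,m)=1$. This agrees with $\mathcal T\subseteq m\Z_{\Delta m}$ in Theorem~\ref{thm:all-orders-classification}.

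The paper's zero-index trick cannot reach that generality. When $(\Delta,m)>1$, the eigenvalue $\eta_\Delta$ need not vanish. For example, in $\mathcal I_{9,\chi_3}(\{1,3\},\sigma)$ the only zero eigenvalue is $\lambda_0$, yet $\mathcal T=3\Z_9$.

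\textbf{The fallback.} Your diagnosis of where the fallback fails is inaccurate. When $m$ is even, every unit modulo $m$ is odd, so all the values $u-1$ are even. The gcd argument then yields only $(m/2)\mid s$. It therefore fails for every even $m$, not only for $m\mid 24$, and it succeeds for every odd $m$ via $u=2$. Since you discard that route, this does not affect your proof.
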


\begin{proof}
The assertion is immediate if $m=1$.  Assume $m>1$, so that
$j=\Delta$ is a nonzero Fourier index modulo $n=\Delta m$.
From \eqref{eq:eta}, $\eta_j=0$ whenever $(j,\Delta)>1$.  In particular
\[
   \eta_0=\eta_\Delta=0.
\]
We compare these two zero eigenvalues through the Fourier phase condition.
PST from $a$ to $b$ at time $t$ is equivalent to the existence of a unit
complex number $\gamma$ such that
\[
   \exp(\ii t\lambda_j)\zeta_n^{j(b-a)}=\gamma
   \qquad (0\le j<n).
\]
Taking $j=0$ and $j=\Delta$, and using
$\lambda_0=\lambda_\Delta=0$, gives
\[
   \zeta_n^{\Delta(b-a)}=1.
\]
Since $n=\Delta m$, this is equivalent to $m\mid (b-a)$.
\end{proof}

\begin{remark}
For an oriented circulant graph $\Gamma=\mathcal I_{\Delta m,\chi}(X,\sigma)$ with
$(\Delta,m)=1$, state transfer can occur only between vertices in the same
residue class modulo $m$.  Equivalently, transfer is confined to the
$\Delta$-vertex cosets of $m\Z_n$.
\end{remark}

\begin{corollary}\label{cor:coprime-case}
Let $\Gamma=\mathcal I_{\Delta m,\chi}(X,\sigma)$ be a nonempty oriented
circulant graph with $(\Delta,m)=1$, and write
\[
   \lambda_j=\eta_j\sqrt{\Delta},\qquad
   g=\gcd\{|\eta_j|:0\le j<n\},\qquad
   \widehat\eta_j=\eta_j/g.
\]
Let $\mathcal T$ be the transfer subgroup of $\Gamma$.
The following statements hold.
\begin{enumerate}[label=\textup{(\roman*)}]
\item PST from $a$ to $a+s$ occurs if and only if there are integers
      $\ell$ and $r$ such that
      \[
         s\equiv\ell m\pmod n,\qquad
         r\widehat\eta_j+\ell j\equiv0\pmod\Delta
         \quad(0\le j<n).
      \]
      The integer $r$ may be chosen positive, and then
      $2\pi r/(g\Delta\sqrt{\Delta})$ is a PST time.
\item PST between distinct vertices occurs if and only if
      $\Delta\in\{3,4,8\}$ and $X=\{m\}$.  In these cases,
      $\mathcal T=m\Z_{3m}$ for $\Delta=3$, whereas
      $\mathcal T=\{0,n/2\}$ for $\Delta\in\{4,8\}$.
\item If $k\mid n$ and $U(t)=P_{n/k}$ for some $t$, then $k\mid\Delta$.
\end{enumerate}
\end{corollary}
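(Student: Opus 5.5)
Each part of Corollary~\ref{cor:coprime-case} follows by specializing results already proved, so the plan is to take the three parts in turn.

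For part~(i), I start from Theorem~\ref{thm:transfer-structure}(i) with $D=\Delta$. PST from $a$ to $a+s$ is equivalent to PST from $0$ to $s$, hence to the existence of an integer $R$ with $R\widehat\eta_j+sj\equiv0\pmod n$ for all $j$. Theorem~\ref{thm:coprime-pst-pairs} gives $s\equiv\ell m\pmod n$ for some integer $\ell$. Reducing the congruence modulo $m$ gives $R\widehat\eta_j\equiv0\pmod m$ for every $j$, and the B\'ezout relation \eqref{eq:bezout-normalized} then yields $m\mid R$. Writing $R=mr$ and dividing by $m$ gives $r\widehat\eta_j+\ell j\equiv0\pmod\Delta$. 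The converse is the same computation read backwards: multiplying by $m$ recovers the congruence modulo $n$ with $R=mr$. The time formula of Theorem~\ref{thm:transfer-structure}(i) becomes $2\pi mr/(gn\sqrt\Delta)=2\pi r/(g\Delta\sqrt\Delta)$. Replacing $r$ by $r+\Delta q$ changes neither the congruence nor the transfer, so $r$ may be taken positive.

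For part~(ii), Theorem~\ref{thm:global-conductor} first forces $\Delta\in\{3,4,8\}$. I then specialize Theorem~\ref{thm:all-orders-classification} to $(\Delta,m)=1$, which means $a=0$ and $u=m$. In this case $X_0=X\cap\{e:e\mid m\}=X$, since every admissible divisor divides $m$. The condition $X_0=\{m\}$ therefore becomes $X=\{m\}$. Reading off the transfer subgroup: for $\Delta=3$ it is $m\Z_{3m}$. For $\Delta=4$ the case $a=0$ applies and gives $\{0,2m\}=\{0,n/2\}$. For $\Delta=8$ it is $\{0,4m\}=\{0,n/2\}$.

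For part~(iii), suppose $U(t)=P_{n/k}$. Then $n/k\in\mathcal T$, so Theorem~\ref{thm:coprime-pst-pairs} gives $m\mid n/k$, that is, $k\mid\Delta$. The case $k=1$ is trivial.

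There is no real obstacle here. The only step needing care is the division by $m$ in part~(i), which relies on the normalized gcd being $1$ so that \eqref{eq:bezout-normalized} applies. One should also check that the statement of part~(ii) for $\Delta=4$ matches the $a=0$ row of the earlier classification.
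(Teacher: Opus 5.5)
Your proposal is correct and follows the paper's proof essentially step for step. In (i), the paper works directly with the phase equations $\tau\eta_j+\ell j/\Delta\in\Z$ instead of first passing through the mod-$n$ congruence of Theorem~\ref{thm:transfer-structure}(i) and then dividing by $m$. The B\'ezout step is the same in both, and your (iii) is identical to the paper's.

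In (ii), you explicitly invoke Theorem~\ref{thm:global-conductor} to force $\Delta\in\{3,4,8\}$ before specializing Theorem~\ref{thm:all-orders-classification} with $a=0$, so that $X_0=X$. This is slightly more careful than the paper, which attributes the conductor restriction to Theorem~\ref{thm:all-orders-classification} alone.
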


\begin{proof}
\noindent\textbf{Part (i).}
Theorem~\ref{thm:coprime-pst-pairs} gives
$s\equiv\ell m\pmod n$.  Put $\tau=t\sqrt{\Delta}/(2\pi)$.  The Fourier
phase equations are
\[
   \tau\eta_j+\frac{\ell j}{\Delta}\in\Z
   \qquad(0\le j<n).
\]
Since the integers $\widehat\eta_j$ have gcd $1$, these equations imply that
$r=\Delta\tau g$ is an integer and give the displayed congruence.  The same
calculation in reverse proves sufficiency and the formula for a PST time.
Adding a multiple of $\Delta$ to $r$ permits $r$ to be chosen
positive.

\smallskip
\noindent\textbf{Part (ii).}
The exponent $a$ in
Theorem~\ref{thm:all-orders-classification} is zero, so $X_0=X$.  That theorem
gives exactly the three stated conductors, the condition $X=\{m\}$, and the
listed values of $\mathcal T$.

\smallskip
\noindent\textbf{Part (iii).}
If $U(t)=P_{n/k}$, then
Theorem~\ref{thm:coprime-pst-pairs} gives
$m\mid n/k=\Delta m/k$, which is equivalent to $k\mid\Delta$.
\end{proof}

The divisibility assertion in part~(iii) uses $(\Delta,m)=1$.  Without this
assumption, Theorem~\ref{thm:global-conductor} gives the appropriate
conductor-specific restriction.  A field-degree argument alone does not give
either conclusion, since the roots of unity $\exp(\ii t\lambda_j)$ also
depend on the transfer time.

\subsection{Power-of-two specializations}

For conductor $4$, let $a\ge3$,
put $n=2^a$, and choose
\[
   \varepsilon=(\varepsilon_0,\ldots,\varepsilon_{a-2})
   \in\{\pm1\}^{a-1}.
\]
For a nonzero $x\in\Z_n$, let $v_2(x)$ be the $2$-adic valuation of the unique
representative of $x$ in $\{1,\ldots,n-1\}$, and define
\[
   C^{(4)}_\varepsilon
   =\left\{x\in\Z_n\setminus\{0,n/2\}:
      \chi_4\!\left(\frac{x}{2^{v_2(x)}}\right)
      =\varepsilon_{v_2(x)}\right\}.
\]
Let $\Gamma^{(4)}_\varepsilon=\Cay(\Z_n,C^{(4)}_\varepsilon)$ be the
oriented circulant graph determined by $C^{(4)}_\varepsilon$.

\begin{theorem}\label{thm:power-two-four}
The oriented circulant graph $\Gamma^{(4)}_\varepsilon$ is connected.  Its
Hermitian eigenvalues are $\lambda_j=2\eta_j$, where $\eta_0=0$ and, for
$1\le j<n$, writing $r=v_2(j)$ and $u=j/2^r$,
\[
   \eta_j=
   \begin{cases}
   -\varepsilon_{a-2-r}2^r\chi_4(u), & 0\le r\le a-2,\\
   0, & r=a-1.
   \end{cases}
\]
Moreover,
\[
   \mathcal T=2^{a-2}\Z_{2^a},
   \qquad
   U_{\Gamma^{(4)}_\varepsilon}(\pi/4)
   =P_{\varepsilon_{a-2}2^{a-2}}.
\]
Consequently, every coset of $2^{a-2}\Z_{2^a}$ supports 4-vertex MST in
$\Gamma^{(4)}_\varepsilon$.
\end{theorem}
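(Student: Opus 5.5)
The plan is to recognize $\Gamma^{(4)}_\varepsilon$ as a graph of the form $\mathcal I_{4m,\chi_4}(X,\sigma)$ with $m=2^{a-2}$, and then read off everything from Theorem~\ref{thm:general-spectrum} and Theorem~\ref{thm:all-orders-classification}.

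\textbf{Identifying $X$ and $\sigma$.} A nonzero $x\in\Z_n$ with $v_2(x)=r$ lies in the gcd-class $\Theta_n(2^r)$. The condition $4\mid n/2^r$ is equivalent to $r\le a-2$, and the class $r=a-1$ is $\{n/2\}$, which is excluded. For $r\le a-2$, the element $x/2^r$ is a unit modulo $n/2^r$. Its $\chi_4$-value is determined modulo $4$, so the part of $C^{(4)}_\varepsilon$ in this class is exactly $\Theta^{\varepsilon_r}_{n,\chi_4}(2^r)$. Hence $\Gamma^{(4)}_\varepsilon=\mathcal I_{n,\chi_4}(X,\sigma)$ with $X=\{2^r:0\le r\le a-2\}$ (all divisors of $m$) and $\sigma(2^r)=\varepsilon_r$. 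Since $1\in X$, we have $\kappa=1$, and Proposition~\ref{prop:connectedness} gives connectedness.

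\textbf{Spectrum.} I will apply formula \eqref{eq:mixed-block-spectrum} with $p=2$, $m=2^{a-2}$ (so the exponent in the classification is $a-2$) and $u=1$. For $j$ with $v_2(j)=r\le a-2$, only the divisor $2^{a-2-r}$ contributes. Its Ramanujan factor is $c_1=1$ and $\chi(u/e)=\chi_4(1)=1$, so $\eta_j=-2^r\chi_4(j/2^r)\,\sigma(2^{a-2-r})=-\varepsilon_{a-2-r}2^r\chi_4(u)$. When $v_2(j)=a-1$, equivalently $j=n/2$, the value $\eta_j=0$ comes either from the remark after \eqref{eq:mixed-block-spectrum} or from Corollary~\ref{cor:spectral-symmetry}. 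Finally $\sqrt\Delta=2$, which gives $\lambda_j=2\eta_j$.

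\textbf{Transfer.} Since $a\ge3$, the exponent $a-2\ge1$. Here $X_0=\{m\}$ and $X_1=\{m/2\}$, because $X$ contains exactly one divisor of each valuation. The fourth case of Theorem~\ref{thm:all-orders-classification} therefore applies, giving $\mathcal T=m\Z_{4m}=2^{a-2}\Z_{2^a}$ and $U(\pi/4)=P_{\sigma(m)m}$, where $\sigma(m)=\varepsilon_{a-2}$. The MST statement then follows from Theorem~\ref{thm:transfer-structure}(iv) with $h=|\mathcal T|=4$.

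The only real care needed is in the indexing: the reversal between the valuation of $j$ and the valuation of the selected divisor, and the index bookkeeping between $\varepsilon_r$ and $\sigma(2^r)$. The rest is direct substitution.
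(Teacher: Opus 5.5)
Your proof is correct. The spectral part matches the paper's. Treating the class $d=2^s$ with $Q_d=2^{a-2-s}$ and $M_d=1$, so that only $s=a-2-r$ contributes, is exactly the paper's computation via Theorem~\ref{thm:general-factorization}. Your identification of $C^{(4)}_\varepsilon$ with $C(X,\sigma)$, where $X$ consists of all divisors of $2^{a-2}$ and $\sigma(2^r)=\varepsilon_r$, is also sound.

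Where you differ is the transfer part. You recognize $\Gamma^{(4)}_\varepsilon$ as the ``all classes selected'' instance of the fourth case of Theorem~\ref{thm:all-orders-classification} and read off $\mathcal T$ and $U(\pi/4)$. The paper instead argues directly:
\begin{itemize}
\item It checks $\eta_j+\varepsilon_{a-2}j\equiv0\pmod4$ according to $v_2(j)$. For odd $j$ this uses $\chi_4(j)\equiv j$; for $v_2(j)=1$ both terms are $\equiv2\pmod 4$; for $4\mid j$ it is trivial.
\item It notes $|\eta_1|=1$, so $g=1$.
\item It obtains the upper bound $\mathcal T\subseteq2^{a-2}\Z_{2^a}$ from $\eta_1=\eta_5$. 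Subtracting the congruences of Theorem~\ref{thm:transfer-structure}(i) at indices $1$ and $5$ gives $4s\equiv0\pmod{2^a}$.
\end{itemize}

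Your reduction is shorter and places the family inside the general classification. However, it inherits the full weight of Theorem~\ref{thm:all-orders-classification}, whose bound on $\mathcal T$ comes from Theorem~\ref{thm:global-conductor}. The paper's argument needs only Theorem~\ref{thm:transfer-structure} and a single unit, $u=5$, which is one instance of the unit-covariance argument behind Theorem~\ref{thm:global-conductor}.
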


\begin{proof}
We first verify the graph-theoretic assertions.  Each nonzero difference other
than $n/2$ belongs to exactly one of the two sets determined by $\chi_4$, so
the connection set is oriented.  It contains an odd integer and hence
generates $\Z_{2^a}$.  For the gcd-class indexed by $d=2^s$, the parameter
$Q$ in Theorem~\ref{thm:general-factorization} is $2^{a-s-2}$.  Hence, when
$v_2(j)=r\le a-2$, only the class with $s=a-2-r$ contributes.  This gives
the displayed eigenvalue formula; no class contributes when $r=a-1$.

We next determine the transfer translation.  For odd $j$, we have
$\chi_4(j)\equiv j\pmod4$, and hence
\[
   \eta_j+\varepsilon_{a-2}j\equiv0\pmod4.
\]
The same congruence holds when $v_2(j)=1$, because both terms are congruent
to $2$ modulo $4$, and it holds when $4\mid j$ as well.  Comparing Fourier
eigenvalues gives the stated transition matrix.  Since $|\eta_1|=1$,
normalization in Theorem~\ref{thm:transfer-structure} does not change the
coefficients $\eta_j$.
Finally, $\eta_1=\eta_5$.  The two congruences in
Theorem~\ref{thm:transfer-structure} with indices $1$ and $5$ imply
$4s\equiv0\pmod{2^a}$ for every $s\in\mathcal T$.  Therefore
$\mathcal T=2^{a-2}\Z_{2^a}$.
\end{proof}

The square-free radicand $2$ corresponds to conductor $8$.  Let $a\ge3$,
put $n=2^a$, and choose
\[
   \varepsilon=(\varepsilon_0,\ldots,\varepsilon_{a-3})
   \in\{\pm1\}^{a-2}.
\]
Define
\[
   C^{(8)}_\varepsilon
   =\left\{x\in\Z_n\setminus\{0\}:v_2(x)\le a-3,
      \ \chi_8\!\left(\frac{x}{2^{v_2(x)}}\right)
      =\varepsilon_{v_2(x)}\right\}.
\]
Let $\Gamma^{(8)}_\varepsilon=\Cay(\Z_n,C^{(8)}_\varepsilon)$ be the
oriented circulant graph determined by $C^{(8)}_\varepsilon$.

\begin{corollary}\label{cor:power-two-eight}
The oriented circulant graph $\Gamma^{(8)}_\varepsilon$ is connected.  Its
Hermitian eigenvalues are $\lambda_j=\eta_j\sqrt8$, where $\eta_0=0$ and,
for $1\le j<n$, writing $r=v_2(j)$ and $u=j/2^r$,
\[
   \eta_j=
   \begin{cases}
   -\varepsilon_{a-3-r}2^r\chi_8(u), & 0\le r\le a-3,\\
   0, & a-2\le r\le a-1.
   \end{cases}
\]
For every choice of $\varepsilon$,
\[
   U_{\Gamma^{(8)}_\varepsilon}(\pi/\sqrt8)=P_{2^{a-1}},
   \qquad
   \mathcal T=\{0,2^{a-1}\}.
\]
Consequently, $\Gamma^{(8)}_\varepsilon$ has antipodal PST and has no
$k$-vertex MST for $k\ge3$.
\end{corollary}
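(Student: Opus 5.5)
The plan is to follow the proof of Theorem~\ref{thm:power-two-four} closely, with conductor $8$ in place of $4$, and then to read off the transfer data from Theorem~\ref{thm:all-orders-classification}.

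\textbf{Step 1: identify the graph.} Here $n=2^a=8m$ with $m=2^{a-3}$, so $p=2$, the exponent of $m$ is $a-3$, and $u=1$. The divisors of $m$ are $d=2^s$ with $0\le s\le a-3$, and the gcd-class $\Theta_n(2^s)$ consists exactly of the $x$ with $v_2(x)=s$. For such $x$ we have $x/2^s\in\Z_{n/2^s}^\times$ with $8\mid n/2^s$. Hence
\[
C^{(8)}_\varepsilon=C(X,\sigma),\qquad X=\{2^s:0\le s\le a-3\},\qquad \sigma(2^s)=\varepsilon_s .
\]
So $\Gamma^{(8)}_\varepsilon=\mathcal I_{n,\chi_8}(X,\sigma)$, which is oriented by Proposition~\ref{prop:parity}. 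Connectedness follows from Proposition~\ref{prop:connectedness}, since $1\in X$ gives $\kappa=1$. Equivalently, $C^{(8)}_\varepsilon$ contains the odd element $1$ or $3$, which generates $\Z_{2^a}$.

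\textbf{Step 2: the spectrum.} For $d=2^s$ we get $q_d=2^{a-3-s}$, so $Q_d=2^{a-3-s}$ and $M_d=1$, with $c_1=1$ and $\chi_8(1)=1$. Let $r=v_2(j)$. The block formula \eqref{eq:mixed-block-spectrum} says that only the class with $Q_d=2^r$ contributes, that is, $s=a-3-r$, provided $r\le a-3$. This gives
\[
\eta_j=-\varepsilon_{a-3-r}2^r\chi_8(u).
\]
When $r\ge a-2$, either $Q_d\nmid j$ or $j/Q_d$ is even, so $\eta_j=0$.

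\textbf{Step 3: transfer.} We have $X_0=X\cap\{2^{a-3}\}=\{m\}$. The $\Delta=8$ case of Theorem~\ref{thm:all-orders-classification} therefore gives $\mathcal T=\{0,4m\}=\{0,2^{a-1}\}$, $g=1$, and $U(\pi/\sqrt8)=P_{2^{a-1}}$. As a direct check, Proposition~\ref{prop:shift} with $k=2$ and $\rho=1$ reduces this to the parity congruence $\eta_j\equiv j\pmod 2$. This holds because $\eta_j$ is odd exactly when $r=0$, where $\chi_8(u)=\pm1$. Since $|\mathcal T|=2$, part~(iv) of Theorem~\ref{thm:transfer-structure} shows that the maximum MST size is $2$, so there is no $k$-vertex MST for $k\ge3$. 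Antipodal PST is the statement $U(\pi/\sqrt8)=P_{n/2}$.

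\textbf{Main obstacle.} The only delicate point is Step~1: checking that the conditions $v_2(x)\le a-3$ and $\chi_8(x/2^{v_2(x)})=\varepsilon_{v_2(x)}$ coincide with membership in $\Theta^{\varepsilon_s}_{n,\chi_8}(2^s)$. This holds because $\chi_8$ depends only on the residue modulo $8$, and $8\mid 2^{a-s}$ exactly when $s\le a-3$. Once the identification is made, everything else is a direct specialization of the earlier results.
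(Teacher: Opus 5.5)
Your proposal is correct and follows the paper's proof. You identify $C^{(8)}_\varepsilon$ with $C(X,\sigma)$ for $X$ the full set of divisors of $m=2^{a-3}$, use $Q_d=2^{a-3-s}$ and $M_d=1$ so that only $s=a-3-v_2(j)$ contributes, and then apply the conductor-$8$ case of Theorem~\ref{thm:all-orders-classification} via $X_0=\{m\}$; the parity check through Proposition~\ref{prop:shift} is a harmless extra. One small slip in your aside: when $\varepsilon_0=-1$ the odd part of the connection set is the residues $5,7\pmod 8$, so it contains $-1$ and $-3$ rather than $1$ or $3$, but connectedness is unaffected since $1\in X$ already gives $\kappa=1$.
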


\begin{proof}
The connection set contains a unit and is therefore connected.  For the
gcd-class indexed by $d=2^s$, Theorem~\ref{thm:general-factorization} has
$Q=2^{a-s-3}$.  Hence only $s=a-3-v_2(j)$ contributes, which gives the
displayed spectrum.  Since all admissible classes are selected, the
conductor-$8$ part of Theorem~\ref{thm:all-orders-classification} gives the
stated transition matrix and the value of $\mathcal T$.
\end{proof}

Theorem~\ref{thm:all-orders-classification} also covers power-of-two graphs
in which some admissible gcd-classes are omitted.  We next record the
prime-power specialization at odd conductors.

\subsection{Odd prime-power specializations}

We apply Theorem~\ref{thm:general-spectrum} to a connected family at odd
prime-power orders.  Let $p\equiv3\pmod4$ be prime, let $\chi_p$ be the
quadratic character modulo $p$, and put $n=p^a$, where $a\ge1$.  For
$x\in\Z_{p^a}\setminus\{0\}$, let $v_p(x)$
be the $p$-adic valuation of the unique representative of $x$ in
$\{1,\ldots,p^a-1\}$.  Thus $0\le v_p(x)<a$.  Fix
\[
   \varepsilon=(\varepsilon_0,\varepsilon_1,\ldots,\varepsilon_{a-1})
   \in\{\pm1\}^a,
\]
and define
\[
   C_{p,\varepsilon}
   =
   \left\{x\in\Z_{p^a}\setminus\{0\}:
       \chi_p\!\left(x/p^{v_p(x)}\right)
       =\varepsilon_{v_p(x)}\right\}.
\]
Let $\Gamma_{p,\varepsilon}=\Cay(\Z_{p^a},C_{p,\varepsilon})$ be the
oriented circulant graph with connection set $C_{p,\varepsilon}$.

\begin{theorem}\label{thm:ramified-family}
The set $C_{p,\varepsilon}$ is an oriented connection set,
$\Gamma_{p,\varepsilon}$ is connected, and the Hermitian eigenvalues of
$\Gamma_{p,\varepsilon}$ are
\[
   \lambda_0=0,
\]
and, for $1\le j<p^a$, writing $r=v_p(j)$ and $u=j/p^r$,
\[
   \lambda_j
   =
   -\varepsilon_{a-1-r}p^r\chi_p(u)\sqrt{p}.
\]
\end{theorem}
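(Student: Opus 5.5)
The plan is to recognize $\Gamma_{p,\varepsilon}$ as a graph $\mathcal I_{n,\chi}(X,\sigma)$ and then read off its spectrum from Theorem~\ref{thm:general-spectrum}. Here $\Delta=p$ and $\chi=\chi_p$; since $p\equiv3\pmod4$ is prime, $\chi_p$ is odd and primitive of conductor $p$ (Proposition~\ref{prop:delta}). We take $n=p^a$ and $m=p^{a-1}$.

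The admissible divisors $d$ of $n$ with $p\mid n/d$ are $d=p^s$ for $0\le s\le a-1$. An element $x$ with $v_p(x)=s$ has $\gcd(x,n)=p^s$ and $x=p^sk$ with $k\in\Z_{p^{a-s}}^\times$. Moreover, $\chi_p(k)$ depends only on $k$ modulo $p$, so it agrees with $\chi_p(x/p^{v_p(x)})$ computed from any representative. Hence
\[
   C_{p,\varepsilon}=\bigcup_{s=0}^{a-1}\Theta^{\varepsilon_s}_{p^a,\chi_p}(p^s)=C(X,\sigma),
\]
with $X=\{p^s:0\le s\le a-1\}$ and $\sigma(p^s)=\varepsilon_s$. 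Proposition~\ref{prop:parity} (or Definition~\ref{def:main-graph} directly) then shows that the connection set is oriented. Connectedness follows because $1\in X$: in Proposition~\ref{prop:connectedness} we have $\kappa=\gcd X=1$. Alternatively, $C_{p,\varepsilon}$ contains a unit, namely $1$ or $-1$, which generates $\Z_{p^a}$.

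For the spectrum, each $d=p^s$ gives $q_d=p^{a-1-s}$, so $Q_d=p^{a-1-s}$ and $M_d=1$, with $c_1(j)=1$ and $\chi(1)=1$. Now fix $1\le j<p^a$ with $r=v_p(j)\le a-1$. The summand for $d=p^s$ requires $Q_d\mid j$, that is $a-1-s\le r$. If $a-1-s<r$, then $p\mid j/Q_d$, so $\chi_p(j/Q_d)=0$. Thus only $s=a-1-r$ survives, and $Q_d=p^r$ there. Formula \eqref{eq:general-spectrum} then gives
\[
   \lambda_j=-\varepsilon_{a-1-r}\,p^r\chi_p(u)\sqrt p .
\]
The value $\lambda_0=0$ comes from Corollary~\ref{cor:spectral-symmetry}, or directly from $\chi(0)=0$.

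No step is a real obstacle. The only point needing care is the identification of $C_{p,\varepsilon}$ with $C(X,\sigma)$, specifically checking that the character condition on $x/p^{v_p(x)}$ is well defined modulo $p^{a-s}$. This is exactly the same bookkeeping as in Theorem~\ref{thm:power-two-four} and Corollary~\ref{cor:power-two-eight}, with $p$ in place of $2$.
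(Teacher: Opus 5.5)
Your proposal is correct and follows essentially the same route as the paper. The paper applies Theorem~\ref{thm:general-factorization} to each valuation class $d=p^s$ with $Q=p^{a-s-1}$ and $M=1$, and checks orientation directly from $\chi_p(-1)=-1$. Your version first identifies $C_{p,\varepsilon}$ with $C(X,\sigma)$ and then uses Theorem~\ref{thm:general-spectrum}, which amounts to the same computation, with the same unique surviving index $s=a-1-r$.
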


\begin{proof}
Since $\chi_p(-1)=-1$, the set $C_{p,\varepsilon}$ chooses exactly one
element from each pair $\{x,-x\}$, so it is an oriented connection set.  It
contains a unit, and every unit generates the additive group $\Z_{p^a}$.

It remains to calculate the eigenvalue at a fixed Fourier index.  The part
with $v_p(x)=s$ is indexed by $d=p^s$.  In Theorem~\ref{thm:general-factorization}, its parameters are
\[
   Q=p^{a-s-1},\qquad M=1.
\]
Its contribution at the index $j$ is nonzero precisely when
$v_p(j)=a-s-1$.  Thus the unique nonzero contribution is the part with
$s=a-r-1$.  Substitution gives the displayed eigenvalue formula.
\end{proof}

\begin{corollary}\label{cor:ramified-family-mst}
Let $\Gamma=\mathcal I_{p^a,\chi_p}(X,\sigma)$ be an oriented circulant
graph.  For $0\le s<a$, define $\beta_s=0$ if $p^s\notin X$ and
$\beta_s=\sigma(p^s)$ otherwise.  If $p=3$ and $\beta_{a-1}\ne0$, then
\[
   \mathcal T=3^{a-1}\Z_{3^a},
   \qquad
   U_{\Gamma}\!\left(\frac{2\pi}{3\sqrt3}\right)
   =P_{\beta_{a-1}3^{a-1}}.
\]
Consequently, every coset of $3^{a-1}\Z_{3^a}$ supports 3-vertex MST in
$\Gamma$.  If
$\beta_{a-1}=0$, or if $p>3$, then $\mathcal T=\{0\}$.
\end{corollary}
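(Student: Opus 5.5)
The plan is to specialize Theorem~\ref{thm:all-orders-classification} and Theorem~\ref{thm:global-conductor} to $n=p^a$, so that $\Delta=p$ and $m=p^{a-1}$.

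\textbf{Case $p>3$.} The conductor $\Delta=p$ lies outside $\{3,4,8\}$. Theorem~\ref{thm:global-conductor} therefore excludes PST between distinct vertices, and $\mathcal T=\{0\}$ follows immediately. This is the only use of the general conductor restriction.

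\textbf{Case $p=3$.} Here $m=3^{a-1}$, so in the notation of Theorem~\ref{thm:all-orders-classification} we have the $3$-adic exponent $a-1$ and $u=1$. The divisors of $m$ with full $3$-adic valuation reduce to $m$ alone. Hence $X_0=X\cap\{3^{a-1}\}$, and $X_0=\{m\}$ holds exactly when $3^{a-1}\in X$, i.e.\ when $\beta_{a-1}\ne0$. In that case $\sigma(m)=\beta_{a-1}$. The $\Delta=3$ part of Theorem~\ref{thm:all-orders-classification} then gives
\[
\mathcal T=m\Z_{3m}=3^{a-1}\Z_{3^a}
\qquad\text{and}\qquad
U\!\left(\tfrac{2\pi}{3\sqrt3}\right)=P_{\beta_{a-1}3^{a-1}}.
\]
When $\beta_{a-1}=0$, the same theorem gives $\mathcal T=\{0\}$. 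We must confirm that the empty graph is not an issue: the statement presupposes $\Gamma$ is oriented circulant, and if $X=\varnothing$ then trivially $\mathcal T=\{0\}$.

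\textbf{MST assertion.} Part~(iv) of Theorem~\ref{thm:transfer-structure} says every coset of $\mathcal T$ supports MST. Since $|\mathcal T|=3$, each coset supports 3-vertex MST.

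\textbf{Main obstacle.} The only real step is checking the bookkeeping: $n=\Delta m$ with $m=3^{a-1}$, and the index sets $X_r$ with $u=1$ reduce to single divisors. As a sanity check on the sign, we can recompute directly from Theorem~\ref{thm:ramified-family}. The eigenvalue formula gives $\eta_j=-\beta_{a-1}\chi_3(j)\equiv-\beta_{a-1}j\pmod 3$ for $3\nmid j$, and $3\mid\eta_j$ otherwise. This is valid for general $X$ with $\beta_s$ replacing $\varepsilon_s$. Proposition~\ref{prop:shift} with $k=3$ and $\rho=\beta_{a-1}$ then confirms the stated transition matrix.
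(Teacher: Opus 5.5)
Your proposal is correct and matches the paper's proof. For $p=3$ you specialize the conductor-$3$ case of Theorem~\ref{thm:all-orders-classification} with $m=3^{a-1}$ and $u=1$, so that $X_0=\{m\}$ exactly when $\beta_{a-1}\ne0$ (with $\sigma(m)=\beta_{a-1}$), and for $p>3$ you invoke Theorem~\ref{thm:global-conductor}; your extra remarks are sound additions. These cover the empty graph, the MST consequence via part~(iv) of Theorem~\ref{thm:transfer-structure}, and the direct check via Proposition~\ref{prop:shift}, where for $\beta_{a-1}=-1$ you still have to invert $U(-t)=P_{3^{a-1}}$ to obtain $U(t)=P_{-3^{a-1}}$.
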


\begin{proof}
\noindent\textbf{Case 1.} $p=3$.
Write $3^a=3m$, where
$m=3^{a-1}$.  In Theorem~\ref{thm:all-orders-classification}, the condition
$X_0=\{m\}$ is equivalent to $\beta_{a-1}\ne0$.  The conductor-$3$ part of
that theorem gives the stated value of $\mathcal T$ and the transition matrix
when $\beta_{a-1}\ne0$, and gives $\mathcal T=\{0\}$ when
$\beta_{a-1}=0$.

\smallskip
\noindent\textbf{Case 2.} $p>3$.
Theorem~\ref{thm:global-conductor} excludes PST between distinct vertices.
\end{proof}

\begin{remark}
For $p=3$, the all-positive choice
$\varepsilon=(1,\ldots,1)$ in Theorem~\ref{thm:ramified-family} is the
second family in \cite[Example~6]{GodsilZhang2024}.  Allowing all
$\varepsilon\in\{\pm1\}^a$ gives $2^a$ connected labeled oriented
circulant graphs on $\Z_{3^a}$.  Corollary~\ref{cor:ramified-family-mst}
shows that every one of them has transfer subgroup
$3^{a-1}\Z_{3^a}$; the sign $\varepsilon_{a-1}$ determines the direction of
the translation attained at $2\pi/(3\sqrt3)$.  The two $2$-adic families in
\cite[Example~6]{GodsilZhang2024} are covered by
Theorem~\ref{thm:power-two-four} and by the conductor-$4$ case of
Theorem~\ref{thm:all-orders-classification}.
\end{remark}

\subsection{Examples}

The following examples illustrate how the conditions on $X_0$ and $X_1$ in
Theorem~\ref{thm:all-orders-classification} determine $\mathcal T$.  The
first example has both a ramified factor and a factor coprime to the
conductor.  The second compares two oriented circulant graphs of the same
order with $|\mathcal T|=2$ and $|\mathcal T|=4$, respectively.

\begin{example}\label{ex:eighteen}
Let $\Delta=3$, $n=18$, and $m=6$.  Let $X=\{1,6\}$, and let
$\sigma:X\to\{\pm1\}$ be given by $\sigma(1)=\sigma(6)=1$.  Set
\[
    C(X,\sigma)=\{1,6,7,13\}.
\]
Let $\Gamma=\mathcal I_{18,\chi_3}(X,\sigma)$ be the oriented circulant graph
with connection set $C(X,\sigma)$.  Since $1\in C(X,\sigma)$, $\Gamma$ is
connected.  Here $m=3\cdot2$ and
$X_0=\{6\}=\{m\}$.  Theorem~\ref{thm:all-orders-classification} gives
\[
   \mathcal T=6\Z_{18},
   \qquad
    U_{\Gamma}\!\left(\frac{2\pi}{3\sqrt3}\right)=P_6.
\]
Consequently, each of the six triples
\[
   \{b,b+6,b+12\},\qquad 0\le b<6,
\]
supports 3-vertex MST in $\Gamma$.
\end{example}

\begin{example}\label{ex:sixteen-comparison}
Let $\Delta=4$, $n=16$, and $m=4$.  Let
\[
    X^{(1)}=\{1,4\},
\]
and let $\sigma^{(1)}:X^{(1)}\to\{\pm1\}$ be the constant function with
value $1$.  Set
\[
    C^{(1)}=C(X^{(1)},\sigma^{(1)})=\{1,4,5,9,13\}.
\]
Let $\Gamma_1=\mathcal I_{16,\chi_4}(X^{(1)},\sigma^{(1)})$ be the oriented
circulant graph with connection set $C^{(1)}$.  Applying the notation
preceding Theorem~\ref{thm:all-orders-classification} to $X^{(1)}$ gives
$X^{(1)}_0=\{4\}$ and $X^{(1)}_1=\varnothing$, so
\[
    \mathcal T=\{0,8\},
    \qquad U_{\Gamma_1}(\pi/2)=P_8.
\]
Thus $\Gamma_1$ has antipodal PST but no 3-vertex MST.  For the second graph,
let
\[
    X^{(2)}=\{1,2,4\},
\]
let $\sigma^{(2)}:X^{(2)}\to\{\pm1\}$ be the constant function with value
$1$, and set
\[
    C^{(2)}=C(X^{(2)},\sigma^{(2)})=\{1,2,4,5,9,10,13\}.
\]
Let $\Gamma_2=\mathcal I_{16,\chi_4}(X^{(2)},\sigma^{(2)})$ be the oriented
circulant graph with connection set $C^{(2)}$.  Applying the same notation to
$X^{(2)}$ gives $X^{(2)}_1=\{2\}=\{m/2\}$, and Theorem~\ref{thm:all-orders-classification} gives
\[
    \mathcal T=4\Z_{16},
    \qquad U_{\Gamma_2}(\pi/4)=P_4.
\]
Hence each of the four sets
\[
   \{b,b+4,b+8,b+12\},\qquad 0\le b<4,
\]
supports 4-vertex MST in $\Gamma_2$.  Both $\Gamma_1$ and $\Gamma_2$ are
connected because their connection sets contain $1$.
\end{example}

\subsection{Consequences}

The classification has several immediate consequences for connected graphs,
transfer times, and the structure of the maximum MST sets.

\begin{corollary}\label{cor:connected-orders}
\begin{enumerate}[label=\textup{(\roman*)}]
\item There exists a connected oriented circulant graph
$\Gamma=\mathcal I_{3m,\chi_3}(X,\sigma)$ with 3-vertex MST if and only if
$m=1$ or $3\mid m$.
\item There exists a connected oriented circulant graph
$\Gamma=\mathcal I_{8m,\chi_8}(X,\sigma)$ with PST between distinct vertices
if and only if $m=1$ or $2\mid m$.
\item There exists a connected oriented circulant graph
$\Gamma=\mathcal I_{4m,\chi_4}(X,\sigma)$ with PST between distinct vertices
if and only if $m=1$ or $2\mid m$.  There exists such a connected graph with
4-vertex MST if and only if $m=2$ or $4\mid m$.
\end{enumerate}
Equivalently, the possible orders are $n=3$ or $9\mid n$ for
3-vertex MST, $n=8$ or $16\mid n$ for conductor-$8$ PST, and $n=4$ or
$8\mid n$ for conductor-$4$ PST.  Connected 4-vertex MST occurs at
conductor $4$ exactly when $n=8$ or $16\mid n$.
\end{corollary}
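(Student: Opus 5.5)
The plan is to combine Proposition~\ref{prop:connectedness} with Theorem~\ref{thm:all-orders-classification}. By Proposition~\ref{prop:connectedness}, $\Gamma=\mathcal I_{\Delta m,\chi}(X,\sigma)$ is connected if and only if $\gcd X=1$. Each part therefore asks whether some set $X$ of divisors of $m$ has gcd $1$ and satisfies the relevant row of the table. The condition $X_0=\{m\}$ forces $m\in X$, so $v_p(\gcd X)\le a-1$ can only come from divisors of strictly smaller $p$-valuation. When $a\ge1$ such divisors lie in $X_r$ with $r\ge1$, and those sets are unconstrained (except $X_1$ in the four-vertex case). When $a=0$, every divisor of $m$ lies in $X_0$, so $X=\{m\}$ and $\gcd X=m$.

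\emph{Necessity.} If $a=0$ then $X=X_0=\{m\}$, and connectedness forces $m=1$. If $a\ge1$ nothing extra is required for (i) and (ii), and $a\ge1$ just means $3\mid m$ for (i) and $2\mid m$ for (ii) and the first half of (iii). So the criterion is $m=1$ or $p\mid m$. For the four-vertex part of (iii), Theorem~\ref{thm:all-orders-classification} requires $a\ge1$, $X_0=\{m\}$ and $X_1=\{m/2\}$.
\begin{itemize}
\item If $a=1$, then $X_1=\{m/2\}$ is the only class besides $X_0$, so $X=\{m,m/2\}$ and $\gcd X=m/2=u$. Connectedness forces $u=1$, that is, $m=2$.
\item If $a\ge2$, then $4\mid m$.
\end{itemize}

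\emph{Sufficiency.} I will exhibit explicit sets $X$.
\begin{itemize}
\item For $m=1$ take $X=\{1\}$; this is the graph $\Gamma_\chi$ of Corollary~\ref{cor:primitive-spectrum}.
\item When $p\mid m$, take $X=\{m,1\}$. Since $a\ge1$, the divisor $1$ lies in $X_a$, not $X_0$, so $X_0=\{m\}$ and $\gcd X=1$.
\item For four-vertex MST with $m=2$, take $X=\{2,1\}$. Here $X_0=\{2\}$, $X_1=\{1\}=\{m/2\}$, and the gcd is $1$.
\item For four-vertex MST with $4\mid m$, take $X=\{m,m/2,1\}$. Because $a\ge2$, the divisor $1$ lies in $X_a$ with $a\ge2$, so $X_1=\{m/2\}$ is unaffected, and the gcd is $1$.
\end{itemize}
Each of these satisfies the corresponding row of Theorem~\ref{thm:all-orders-classification}.

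The translation to orders $n=\Delta m$ is immediate:
\begin{itemize}
\item $3m$ with $m=1$ or $3\mid m$ gives $n=3$ or $9\mid n$;
\item $8m$ gives $n=8$ or $16\mid n$;
\item $4m$ with $m=1$ or $2\mid m$ gives $n=4$ or $8\mid n$;
\item $4m$ with $m=2$ or $4\mid m$ gives $n=8$ or $16\mid n$.
\end{itemize}

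The main obstacle is the bookkeeping in the case $a=1$ of the four-vertex condition. There I must check that $X_0$ and $X_1$ exhaust all divisors of $m$, so no extra divisor can lower the gcd; this uses that every divisor of $m=2u$ has $2$-valuation $0$ or $1$. For conductor $3$, part~(i) asks about ``3-vertex MST'' rather than PST; these coincide, because the conductor-$3$ case of Theorem~\ref{thm:all-orders-classification} gives $|\mathcal T|=3$ whenever $\mathcal T\ne\{0\}$.
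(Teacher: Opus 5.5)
Your proposal is correct and follows the paper's proof essentially step for step. Proposition~\ref{prop:connectedness} reduces connectedness to $\gcd X=1$, and the case $a=0$ forces $X=\{m\}$. The witnesses $X=\{m,1\}$ (for $a\ge1$) and $X=\{m,m/2,1\}$ (for $a\ge2$) give sufficiency, and for $a=1$ the four-vertex conditions force $X=\{m,m/2\}$ with gcd $u$. Your explicit witnesses for $m=1$ and $m=2$, and your remark that PST and 3-vertex MST coincide at conductor $3$, only spell out what the paper leaves implicit.
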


\begin{proof}
Write $m=p^a u$ as in Theorem~\ref{thm:all-orders-classification}.  If
$a=0$, PST requires $X=\{m\}$.
Proposition~\ref{prop:connectedness} then shows that
$\mathcal I_{\Delta m,\chi}(X,\sigma)$ is connected
exactly when $m=1$.  If $a\ge1$, the choice
\[
   X=\{m,1\}
\]
satisfies $X_0=\{m\}$ and gives a connected oriented circulant graph
$\mathcal I_{\Delta m,\chi}(X,\sigma)$ for arbitrary choices of the two
signs.  This establishes the assertions for conductors $3$ and $8$ and the
PST assertion for conductor $4$.

For 4-vertex MST, Theorem~\ref{thm:all-orders-classification} also requires
$a\ge1$ and $X_1=\{m/2\}$.  If $a=1$, these conditions force
$\gcd\{d:d\in X\}=u$, so connectedness is possible exactly when $u=1$, or
equivalently $m=2$.  If $a\ge2$, the choice
\[
   X=\{m,m/2,1\}
\]
is connected and satisfies both required conditions.  This is equivalent to
$4\mid m$.
\end{proof}

The next corollary lists all positive transfer times in the cases of
Theorem~\ref{thm:all-orders-classification} for which
$\mathcal T\ne\{0\}$.

\begin{corollary}\label{cor:all-order-times}
Let $\Gamma=\mathcal I_{\Delta m,\chi}(X,\sigma)$ be a nonempty oriented
circulant graph satisfying one of the cases of
Theorem~\ref{thm:all-orders-classification} for which
$\mathcal T\ne\{0\}$.

If $\Delta=3$ and $X_0=\{m\}$, put
\[
   \rho=
   \begin{cases}
      1,&\sigma(m)=1,\\
      2,&\sigma(m)=-1.
   \end{cases}
\]
For $\ell\in\{1,2\}$, we have $U(t)=P_{\ell m}$ if and only if
\[
   t=\frac{2\pi r}{3\sqrt3},
   \qquad r\in\Z_{>0},
   \qquad r\equiv\rho\ell\pmod3.
\]

If $\Delta=4$ and $\mathcal T=\{0,2m\}$, then $U(t)=P_{2m}$ if and only if
\[
   t=\frac{(2q+1)\pi}{2},
   \qquad q=0,1,2,\ldots.
\]
If $\Delta=4$ and $\mathcal T=m\Z_{4m}$, put
\[
   \rho=
   \begin{cases}
      1,&\sigma(m)=1,\\
      3,&\sigma(m)=-1.
   \end{cases}
\]
For $\ell\in\{1,2,3\}$, we have $U(t)=P_{\ell m}$ if and only if
\[
   t=\frac{\pi r}{4},
   \qquad r\in\Z_{>0},
   \qquad r\equiv\rho\ell\pmod4.
\]

If $\Delta=8$ and $X_0=\{m\}$, then $U(t)=P_{4m}$ if and only if
\[
   t=\frac{(2q+1)\pi}{\sqrt8},
   \qquad q=0,1,2,\ldots.
\]
There is no PST between distinct vertices in the remaining cases.
\end{corollary}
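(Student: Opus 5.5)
The plan is to read every case off Theorem~\ref{thm:transfer-structure}(i), using the facts from Theorem~\ref{thm:all-orders-classification} that $g=1$ whenever $\mathcal T\ne\{0\}$ and that explicit congruences for the $\eta_j$ were verified there. By part~(i), $U(t)=P_s$ holds exactly when $t=2\pi(r_s+qn)/(n\sqrt\Delta)$, where $r_s$ is the unique residue modulo $n$ with $r_s\eta_j+sj\equiv0\pmod n$ for all $j$. Rather than compute $r_s$ modulo $n$ directly, I will use Proposition~\ref{prop:shift}, which handles translations $P_{n/k}$ and multiples of them. Alternatively, I will use the homomorphism $\rho$ of part~(iii): if $U(t_0)=P_{s_0}$ at the minimal time $t_0=T_0/h$, then $U(qt_0)=P_{qs_0}$, and these are the only translations attained.

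For $\Delta=3$ with $X_0=\{m\}$, the proof of Theorem~\ref{thm:all-orders-classification} gives $\eta_j+\sigma(m)j\equiv0\pmod3$ for all $j$, and $g=1$ so $T_0=2\pi/\sqrt3$ and $h=3$. Apply Proposition~\ref{prop:shift} with $k=3$ and target $P_{\ell m}$. More precisely, I will use the generalization of its proof: at $t=2\pi r/(3\sqrt3)$, the phase on the $j$-th Fourier vector is $\exp(2\pi\ii r\eta_j/3)$, while the eigenvalue of $P_{\ell m}$ there is $\zeta_3^{-\ell j}$. Equality for all $j$ means $r\eta_j+\ell j\equiv0\pmod3$. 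Substituting $\eta_j\equiv-\sigma(m)j$ gives $(\ell-r\sigma(m))j\equiv0\pmod3$ for all $j$; taking $j=1$, this is $r\equiv\sigma(m)\ell\equiv\rho\ell\pmod3$, since $\rho\equiv\sigma(m)\pmod3$.

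Every PST time must be of this form. Indeed, part~(i) shows that all times attaining $P_{\ell m}$ are $2\pi(r_s+qn)/(n\sqrt3)$, and part~(iii) says the set of such times is a coset of $T_0\Z=\frac{2\pi}{\sqrt3}\Z$. Since $t_0=T_0/3$ already yields $P_{\sigma(m)m}$, every PST time is an integer multiple of $T_0/3$. Positivity gives $r>0$.

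The same template handles $\Delta=4$ with $\mathcal T=m\Z_{4m}$. Here $\eta_j+\sigma(m)j\equiv0\pmod4$ was established, $T_0=\pi$ and $h=4$, giving $r\equiv\sigma(m)\ell\equiv\rho\ell\pmod4$ at $t=\pi r/4$.

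For the antipodal cases ($\Delta=4$ with $|\mathcal T|=2$, and $\Delta=8$), we have $h=2$ and $\eta_j\equiv j\pmod2$ with $g=1$. The times are then $t=\pi r/\sqrt\Delta$ with $r\eta_j+j\equiv0\pmod2$, that is, $r$ odd. This gives $(2q+1)\pi/2$ and $(2q+1)\pi/\sqrt8$ respectively. The final sentence of the corollary is just the $\mathcal T=\{0\}$ cases of Theorem~\ref{thm:all-orders-classification}.

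The main obstacle is making sure no PST time is missed that is not a multiple of $T_0/h$. I will settle this by citing part~(iii): every $t$ with $U(t)$ a translation is a multiple of $T_0/h$. That reduces everything to the finite residue check above.
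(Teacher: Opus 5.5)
Your proposal is correct and follows essentially the paper's route: both use $g=1$ from Theorem~\ref{thm:all-orders-classification} and the equal spacing of translation times from Theorem~\ref{thm:transfer-structure} to reduce everything to integer multiples of $T_0/h$ with $T_0=2\pi/(g\sqrt{\Delta})$. The paper then reads off the residue classes from the powers $U(t_0)^r$ of the minimal-time transition matrix. You instead check the phases directly from the congruences $\eta_j\equiv-\sigma(m)j$ (modulo $3$ or $4$) and $\eta_j\equiv j\pmod 2$, which is an equivalent computation.
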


\begin{proof}
In every case with $\mathcal T\ne\{0\}$,
Theorem~\ref{thm:all-orders-classification} gives $g=1$.
Theorem~\ref{thm:transfer-structure} then gives the minimum periods
$2\pi/\sqrt3$, $\pi$, and $2\pi/\sqrt8$ for conductors $3$, $4$, and $8$,
respectively.  The same theorem shows that the translation
times are equally spaced within each period.  The transition matrices in
Theorem~\ref{thm:all-orders-classification} and their powers determine the
displayed residue classes.
\end{proof}

\begin{corollary}\label{cor:all-order-mst-sets}
Let $\Gamma=\mathcal I_{\Delta m,\chi}(X,\sigma)$ be a nonempty oriented
circulant graph satisfying one of the cases of
Theorem~\ref{thm:all-orders-classification} for which
$\mathcal T\ne\{0\}$.  The maximum sets supporting pairwise PST in $\Gamma$
are as follows.
\begin{enumerate}[label=\textup{(\roman*)}]
\item For $\Delta=3$, they are the $m$ triples
\[
   \{b,b+m,b+2m\},\qquad 0\le b<m.
\]
\item For $\Delta=4$ and $\mathcal T=m\Z_{4m}$, they are the $m$ sets
\[
   \{b,b+m,b+2m,b+3m\},\qquad 0\le b<m.
\]
If $\mathcal T=\{0,2m\}$, the maximum sets are the $2m$ antipodal pairs
$\{b,b+2m\}$, where $0\le b<2m$.
\item For $\Delta=8$, they are the $4m$ antipodal pairs
$\{b,b+4m\}$, where $0\le b<4m$.
\end{enumerate}
\end{corollary}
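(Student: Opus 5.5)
This is a direct combination of the transfer subgroup computed in Theorem~\ref{thm:all-orders-classification} with part~(iv) of Theorem~\ref{thm:transfer-structure}.

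First, part~(iv) of Theorem~\ref{thm:transfer-structure} describes the MST sets. Every coset of $\mathcal T$ supports MST. Every set supporting pairwise PST lies in a single coset: translating so that the set contains $0$ forces all its elements into $\mathcal T$. Hence the sets of maximum size $|\mathcal T|=h$ are exactly the cosets $b+\mathcal T$. Any MST set of size $h$ is contained in some coset of size $h$, so it equals that coset.

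Second, I read off $\mathcal T$ from Theorem~\ref{thm:all-orders-classification} in each case with $\mathcal T\ne\{0\}$:
\begin{itemize}
\item For $\Delta=3$, $\mathcal T=m\Z_{3m}=\{0,m,2m\}$. Its cosets in $\Z_{3m}$ are $\{b,b+m,b+2m\}$, and these are distinct for $0\le b<m$, giving $3m/3=m$ of them.
\item For $\Delta=4$ with $\mathcal T=m\Z_{4m}$, the cosets are $\{b,b+m,b+2m,b+3m\}$ for $0\le b<m$.
\item For $\Delta=4$ with $\mathcal T=\{0,2m\}$, the cosets are the antipodal pairs $\{b,b+2m\}$ for $0\le b<2m$, giving $4m/2=2m$ pairs.
\item For $\Delta=8$, $n=8m$ and $\mathcal T=\{0,4m\}$, so the cosets are the $4m$ pairs $\{b,b+4m\}$ for $0\le b<4m$.
\end{itemize}

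The only step needing care is that the representatives $0\le b<n/h$ list each coset of $(n/h)\Z_n$ exactly once. This holds because two such $b$ are congruent modulo $n/h$ only if they are equal. I expect no real obstacle here.
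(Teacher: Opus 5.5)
Your proposal is correct and follows the paper's proof: part~(iv) of Theorem~\ref{thm:transfer-structure} identifies the maximum MST sets with the cosets of $\mathcal T$, and you then substitute the transfer subgroups from Theorem~\ref{thm:all-orders-classification}. Your explicit check that the representatives $0\le b<n/h$ list each coset exactly once adds a detail the paper leaves implicit.
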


\begin{proof}
By Theorem~\ref{thm:transfer-structure}, the maximum sets are precisely the cosets
of $\mathcal T$.  Substitution of the three possible values of $\mathcal T$
in Theorem~\ref{thm:all-orders-classification} gives the listed sets.
\end{proof}

\begin{corollary}\label{cor:radicand-transfer}
Let $\delta$ be a positive square-free integer, let
$\Delta=\Delta(\delta)$ be the conductor defined in Section~2, and let $\chi$
be the corresponding odd primitive quadratic character.  There exists a
multiple $n$ of $\Delta$ and an oriented circulant graph
$\Gamma=\mathcal I_{n,\chi}(X,\sigma)$ with PST
between distinct vertices if and only if
\[
   \delta\in\{1,2,3\}.
\]
For such a graph, let $\mathcal T$ be its transfer subgroup.  For
$\delta=1$, the cardinality of
$\mathcal T$ can be $2$ or $4$; for $\delta=2$, it can only be $2$; and for
$\delta=3$, it can only be $3$.  Each possibility is attained.
The same existence statement holds with PST replaced by PGST.
\end{corollary}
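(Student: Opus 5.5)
The plan is to split the statement into three parts: the ``only if'' direction, the constructive ``if'' direction with the possible sizes of $\mathcal T$, and the PGST addendum. Each reduces to results already proved.

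For necessity, suppose some $\Gamma=\mathcal I_{n,\chi}(X,\sigma)$ with $\chi$ of conductor $\Delta(\delta)$ has PST between distinct vertices. By Theorem~\ref{thm:global-conductor}, $\Delta(\delta)\in\{3,4,8\}$. The map $\delta\mapsto\Delta(\delta)$ is injective on square-free integers: $\Delta(\delta)$ is odd exactly when $\delta\equiv3\pmod4$, and otherwise equals $4\delta$. It therefore only remains to invert it on these three values. The odd value $3$ comes only from $\delta=3$. The value $4$ comes from $4\delta=4$, so $\delta=1$. The value $8$ comes from $4\delta=8$, so $\delta=2$. This gives $\delta\in\{1,2,3\}$, matching $\Delta(1)=4$, $\Delta(2)=8$, $\Delta(3)=3$ recorded in Section~2.

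The restrictions on $|\mathcal T|$ come from the case list in Theorem~\ref{thm:all-orders-classification}. For $\Delta=3$ the only nonzero value is $m\Z_{3m}$, of order $3$. For $\Delta=8$ it is $\{0,4m\}$, of order $2$. For $\Delta=4$ it is either $\{0,2m\}$ or $m\Z_{4m}$, of orders $2$ and $4$.

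For attainment, I would give explicit witnesses.
\begin{itemize}
\item For $\delta=3$ and $\delta=2$, take $m=1$, $X=\{1\}$ and $\sigma\equiv1$, so $X_0=\{m\}$. This is $\Gamma_\chi$ of Corollary~\ref{cor:primitive-spectrum} on $\Z_3$, respectively $\Z_8$, and it has $|\mathcal T|=3$, respectively $2$.
\item For $\delta=1$ with $|\mathcal T|=2$, take $m=1$, which gives the directed $4$-cycle $\{1\}\subset\Z_4$ with $a=0$.
\item For $\delta=1$ with $|\mathcal T|=4$, take $m=2$ and $X=\{1,2\}$, so that $a=1$, $X_0=\{2\}$ and $X_1=\{1\}=\{m/2\}$. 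Alternatively, use Example~\ref{ex:sixteen-comparison} or Theorem~\ref{thm:power-two-four}.
\end{itemize}
In each case the corresponding row of Theorem~\ref{thm:all-orders-classification} applies directly, so every listed possibility is attained.

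Finally, for PGST I will invoke Theorem~\ref{thm:transfer-structure}(iv). Its spectral hypothesis holds for every $\mathcal I_{n,\chi}(X,\sigma)$ by Theorem~\ref{thm:general-spectrum}, with $D=\Delta$. Since the graphs considered are nonempty, the coefficients $\eta_j$ are not all zero; if some were empty, neither PST nor PGST could occur anyway. Under that hypothesis PGST between $a$ and $b$ is equivalent to PST, so the existence statement transfers verbatim.

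The only mildly delicate point is the inversion of $\Delta(\delta)$: conductor $8$ corresponds to radicand $2$ and not to some $\delta=8$, which is exactly Remark~\ref{rem:conductor-radicand}. Everything else is bookkeeping against Theorem~\ref{thm:all-orders-classification}.
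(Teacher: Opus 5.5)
Your proposal is correct and follows essentially the same route as the paper: Theorem~\ref{thm:global-conductor} together with the inversion of $\delta\mapsto\Delta(\delta)$ gives necessity, Theorem~\ref{thm:all-orders-classification} gives the possible sizes of $\mathcal T$ and their attainment, and Theorem~\ref{thm:transfer-structure}(iv) gives the PGST statement. Your explicit witnesses are all valid and simply make concrete what the paper cites in a single line; for instance, $n=8$ with $X=\{1,2\}$ gives $|\mathcal T|=4$ at conductor $4$.
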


\begin{proof}
By the definition of $\Delta(\delta)$, the conductors $4$, $8$, and $3$
correspond to the square-free radicands $1$, $2$, and $3$, respectively.
Theorem~\ref{thm:global-conductor} excludes every other conductor.
Theorem~\ref{thm:all-orders-classification} gives examples attaining each stated
cardinality.  The final assertion follows from Theorem~\ref{thm:transfer-structure}.
\end{proof}

\begin{remark}
For $\Delta=4$, we have $\sqrt{\Delta}=2$, and the graphs
$\mathcal I_{4m,\chi_4}(X,\sigma)$ are precisely the integral oriented
circulant graphs.  The conductor-$4$ part of Theorem~\ref{thm:all-orders-classification} restates the classification in
\cite{Song2024IOC} in terms of the transfer subgroup $\mathcal T$.
The integral mixed case was settled in \cite{SongLin2024Mixed}.
\end{remark}

Theorem~\ref{thm:all-orders-classification} and
Table~\ref{tab:all-orders-classification} give the classification, while
Corollaries~\ref{cor:all-order-times} and \ref{cor:all-order-mst-sets}
describe its transfer-time and MST consequences.  The final subsection
counts the resulting graphs and introduces no additional state-transfer
cases.

\subsection{Enumeration and small orders}

Fix an odd primitive quadratic character $\chi$ of conductor $\Delta$ and an
order $n=\Delta m$.  The connection-set description in Definition~\ref{def:main-graph} reduces enumeration to independent choices on admissible
gcd-classes.  We count labeled oriented circulant graphs
$\mathcal I_{n,\chi}(X,\sigma)$, equivalently their connection sets, rather
than isomorphism classes.  In particular, when $X\ne\varnothing$, the
isomorphic graphs associated with $\sigma$ and $-\sigma$ in Remark~\ref{rem:sign-reversal} are counted separately.  Let $\mathrm d(m)$ denote
the number of positive divisors of $m$.

\begin{proposition}\label{prop:counting}
Let $n=\Delta m$.  Definition~\ref{def:main-graph} gives
\[
    3^{\mathrm d(m)}
\]
distinct connection sets $C(X,\sigma)$, including the empty connection set.
Consequently, there are $3^{\mathrm d(m)}$ labeled oriented circulant graphs
$\Gamma=\mathcal I_{n,\chi}(X,\sigma)$.  The number of connected labeled
oriented circulant graphs is
\begin{equation}\label{eq:connected-count}
   \sum_{e\mid m}\mu(e)
      \left(3^{\mathrm d(m/e)}-1\right).
\end{equation}
\end{proposition}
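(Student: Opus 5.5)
The count has two parts: first show that the data $(X,\sigma)$ is in bijection with connection sets, then count connected graphs by M\"obius inversion over the gcd $\kappa$.

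\textbf{Counting connection sets.} Since $n=\Delta m$, the admissible indices $d$ are exactly the divisors of $m$. For each such $d$ there are three choices: $d\notin X$, or $d\in X$ with $\sigma(d)=1$, or $d\in X$ with $\sigma(d)=-1$. This gives $3^{\mathrm d(m)}$ pairs $(X,\sigma)$, with $X=\varnothing$ giving the empty set. The map $(X,\sigma)\mapsto C(X,\sigma)$ is injective, for two reasons.
\begin{itemize}
\item The sets $\Theta_{n,\chi}^{\pm1}(d)$ lie in different gcd-classes $\Theta_n(d)$ for different $d$, so $C\cap\Theta_n(d)$ recovers both whether $d\in X$ and which piece was chosen.
\item The two pieces $\Theta_{n,\chi}^{+1}(d)$ and $\Theta_{n,\chi}^{-1}(d)$ are disjoint and nonempty. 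They are negatives of each other by Proposition~\ref{prop:parity}, and nonempty because $\Z_{n/d}^\times$ surjects onto $\Z_\Delta^\times$ (Lemma~\ref{lem:unit-reduction}), on which $\chi$ takes both values.
\end{itemize}
So distinct data give distinct connection sets, which proves the first two claims.

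\textbf{Connected graphs.} For $e\mid m$, let $A(e)$ be the number of nonempty graphs with $\kappa=\gcd X$ divisible by $e$. These are exactly the nonempty graphs whose $X$ consists of multiples of $e$. Writing $d=ed'$, the condition $d\mid m$ becomes $d'\mid m/e$, so by the first part
\[
A(e)=3^{\mathrm d(m/e)}-1.
\]
Let $B(\kappa)$ be the number of nonempty graphs with $\gcd X$ exactly $\kappa$. Then
\[
A(e)=\sum_{e\mid\kappa\mid m}B(\kappa),
\]
and M\"obius inversion over the divisor lattice of $m$ gives
\[
B(1)=\sum_{e\mid m}\mu(e)A(e).
\]

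By Proposition~\ref{prop:connectedness}, a nonempty graph is connected if and only if $\kappa=1$. It remains to decide whether the empty graph counts as connected. It does so only when $n=1$, which is impossible since $\Delta\ge3$. So the connected count is $B(1)$, which is \eqref{eq:connected-count}.

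The main step to check is the injectivity of $(X,\sigma)\mapsto C(X,\sigma)$, in particular that both character pieces in each gcd-class are nonempty; the rest is standard bookkeeping.
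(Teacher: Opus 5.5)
Your proof is correct and follows the same route as the paper: three independent choices for each divisor of $m$, followed by M\"obius inversion over the common divisor $\kappa=\gcd X$ using Proposition~\ref{prop:connectedness}. You also verify two points the paper leaves implicit, namely that $(X,\sigma)\mapsto C(X,\sigma)$ is injective (both character pieces of each gcd-class are nonempty and disjoint) and that the empty graph is disconnected because $n\ge\Delta\ge3$.
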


\begin{proof}
The admissible gcd-class indices are precisely the divisors of $m$.  For each
such divisor, the corresponding class may be omitted or included with either
orientation, giving three choices.  For $e\mid m$, there are
$3^{\mathrm d(m/e)}-1$ nonempty choices whose indices are all divisible by
$e$.  Proposition~\ref{prop:connectedness} identifies the graphs with
connected components indexed by these common divisors, and M\"obius inversion
gives \eqref{eq:connected-count}.
\end{proof}

\begin{corollary}\label{cor:all-order-count}
Fix $\Delta\in\{3,4,8\}$, let $\chi$ be the odd primitive quadratic character
of conductor $\Delta$, and write $m=p^a u$ as in Theorem~\ref{thm:all-orders-classification}.  Among the labeled oriented circulant
graphs $\Gamma=\mathcal I_{\Delta m,\chi}(X,\sigma)$ arising from all
connection-set choices, including the empty connection set, exactly
\[
   2\cdot3^{a\mathrm d(u)}
\]
have PST between distinct vertices.
For $\Delta=3$, every graph counted above has 3-vertex MST; for
$\Delta=8$, every graph counted above has antipodal PST.  For
$\Delta=4$ and $a\ge1$, exactly
\[
   4\cdot3^{(a-1)\mathrm d(u)}
\]
have 4-vertex MST, and the remaining
$2\cdot3^{a\mathrm d(u)}-4\cdot3^{(a-1)\mathrm d(u)}$ satisfy
$|\mathcal T|=2$.
When $a=0$, the two labeled conductor-$4$ oriented circulant graphs with PST
also satisfy $|\mathcal T|=2$, and both are connected if and only if $m=1$.
Since $\mathrm d(m)=(a+1)\mathrm d(u)$, the proportion among all
connection-set choices, including the empty graph, is
\[
   \frac{2}{3^{\mathrm d(u)}}.
\]
This proportion is independent of $a$.  Among the nonempty labeled oriented
circulant graphs, the corresponding proportion is
\[
   \frac{2\cdot3^{a\mathrm d(u)}}
        {3^{(a+1)\mathrm d(u)}-1}.
\]
\end{corollary}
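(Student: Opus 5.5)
The count reduces to the classification in Theorem~\ref{thm:all-orders-classification}: PST between distinct vertices, i.e.\ $\mathcal T\ne\{0\}$, happens exactly when $X_0=\{m\}$ in every conductor. Proposition~\ref{prop:counting} shows that a labeled graph is the same thing as an independent choice, for each divisor $d\mid m$, of one of three options: omit $d$, include it with $\sigma(d)=1$, or include it with $\sigma(d)=-1$. So it suffices to count these choice vectors subject to the constraints on $X_0$ and, for 4-vertex MST, on $X_1$.

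\textbf{Main count.} Split the divisors of $m=p^au$ by $p$-adic valuation. The divisors with valuation $a$ are the $\mathrm d(u)$ elements $p^ae$ with $e\mid u$. The condition $X_0=\{m\}$ fixes all of them: $m$ must be included, with either of two signs, and the other $\mathrm d(u)-1$ must be omitted. This gives exactly $2$ choices. The remaining divisors number $a\,\mathrm d(u)$ and are unconstrained, so they give $3^{a\mathrm d(u)}$ choices. The total is therefore $2\cdot3^{a\mathrm d(u)}$. The empty graph is automatically excluded, since $m\in X$. For $\Delta=3$ and $\Delta=8$, Theorem~\ref{thm:all-orders-classification} gives $\mathcal T=m\Z_{3m}$ and $\mathcal T=\{0,4m\}$ respectively. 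Corollary~\ref{cor:all-order-mst-sets} then yields the 3-vertex MST and antipodal PST assertions.

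\textbf{Conductor $4$ refinement.} Take $\Delta=4$ and $a\ge1$. Four-vertex MST requires, in addition, $X_1=\{m/2\}$. The valuation-$(a-1)$ block has $\mathrm d(u)$ divisors and now also admits only $2$ choices. The remaining $(a-1)\mathrm d(u)$ divisors are free, giving $4\cdot3^{(a-1)\mathrm d(u)}$ graphs. The rest of the PST graphs have $|\mathcal T|=2$ by the table, and their count is the difference. When $a=0$, every divisor lies in $X_0$, so $X=\{m\}$ and there are two graphs with $|\mathcal T|=\{0,2m\}$ of size $2$. By Proposition~\ref{prop:connectedness} these are connected iff $\kappa=m=1$.

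\textbf{Proportions.} By multiplicativity, $\mathrm d(m)=(a+1)\mathrm d(u)$. Dividing the main count by $3^{\mathrm d(m)}$ gives $2/3^{\mathrm d(u)}$, and dividing instead by $3^{\mathrm d(m)}-1$ gives the nonempty proportion. There is no real obstacle here. The only step that needs care is confirming that the blocks $X_r$ partition the divisors of $m$ with the stated block sizes, and that the conditions in Theorem~\ref{thm:all-orders-classification} are exactly the block constraints used above.
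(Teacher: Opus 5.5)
Your proposal is correct and follows the paper's proof: both count independent three-way choices on the divisors of $m$, with $X_0=\{m\}$ cutting the $\mathrm d(u)$ valuation-$a$ divisors down to the two signs of $\sigma(m)$ and leaving the other $a\,\mathrm d(u)$ free, and with $X_1=\{m/2\}$ doing the same to the valuation-$(a-1)$ block at conductor $4$. One small slip: in the $a=0$ case you mean $\mathcal T=\{0,2m\}$, so $|\mathcal T|=2$, not $|\mathcal T|=\{0,2m\}$.
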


\begin{proof}
The condition $X_0=\{m\}$ leaves two choices for $\sigma(m)$ and excludes the
other $\mathrm d(u)-1$ possible indices in $X_0$.  The remaining
$a\mathrm d(u)$ indices are arbitrary, giving $2\cdot3^{a\mathrm d(u)}$
labeled oriented circulant graphs.  For 4-vertex MST at conductor $4$, the
additional condition $X_1=\{m/2\}$ leaves two choices for
$\sigma(m/2)$, excludes the other $\mathrm d(u)-1$ possible indices in
$X_1$, and leaves $(a-1)\mathrm d(u)$ indices arbitrary.  Multiplication of
these independent choices gives the stated formulas.
\end{proof}

\begin{corollary}\label{cor:connected-all-order-count}
Fix $\Delta\in\{3,4,8\}$, let $\chi$ be the odd primitive quadratic character
of conductor $\Delta$, and let $m=p^a u$ be as in Theorem~\ref{thm:all-orders-classification}.  Let $N_{\mathrm{conn}}$ be the number
of connected labeled oriented circulant graphs
$\Gamma=\mathcal I_{\Delta m,\chi}(X,\sigma)$ having PST between distinct
vertices.  Then
\[
N_{\mathrm{conn}}=
\begin{cases}
2,&a=0\text{ and }u=1,\\
0,&a=0\text{ and }u>1,\\
2\displaystyle\sum_{v\mid u}\mu(v)
\left(3^{a\mathrm d(u/v)}-3^{(a-1)\mathrm d(u/v)}\right),&a\ge1.
\end{cases}
\]
For conductor $4$, let $N_4$ be the number of connected labeled oriented
circulant graphs $\Gamma=\mathcal I_{4m,\chi_4}(X,\sigma)$ with 4-vertex MST.
Then
\[
N_4=
\begin{cases}
0,&a=0,\\
4,&a=1\text{ and }u=1,\\
0,&a=1\text{ and }u>1,\\
4\displaystyle\sum_{v\mid u}\mu(v)
\left(3^{(a-1)\mathrm d(u/v)}-3^{(a-2)\mathrm d(u/v)}\right),&a\ge2.
\end{cases}
\]
Thus $N_{\mathrm{conn}}-N_4$ connected labeled conductor-$4$ oriented
circulant graphs satisfy $|\mathcal T|=2$.
\end{corollary}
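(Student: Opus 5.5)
The plan is to count, for fixed $\Delta$ and $m=p^au$, pairs $(X,\sigma)$ satisfying the conditions of Theorem~\ref{thm:all-orders-classification} together with connectivity. By Proposition~\ref{prop:connectedness}, connectivity is equivalent to $\gcd X=1$. We handle this by M\"obius inversion over common divisors, exactly as in Proposition~\ref{prop:counting}.

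\textbf{Case $a=0$.} Here $X_0=X$, so PST forces $X=\{m\}$. Then $\kappa=m$, so the graph is connected exactly when $m=u=1$. The two choices of $\sigma(m)$ give $N_{\mathrm{conn}}=2$ or $0$, as claimed. The same reasoning gives $N_4=0$ when $a=0$, since 4-vertex MST requires $a\ge1$.

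\textbf{Case $a\ge1$, general count.} Let $f(m)$ be the number of all (not necessarily connected) PST choices; by Corollary~\ref{cor:all-order-count}, $f(m)=2\cdot3^{a\mathrm d(u)}$. For a divisor $e\mid m$, the choices with every index divisible by $e$ correspond, via $d\mapsto d/e$, to choices for $m/e$. Because $X_0=\{m\}$ forces $m\in X$, the corresponding set contains $m/e$, and the PST condition transfers only when $v_p(e)$ behaves correctly.

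The key observation is that $\gcd X$ always divides $m$ but must be coprime to... — this is not automatic. If $p\mid e$, then $m/e$ has smaller $p$-valuation, and the condition $X_0=\{m\}$ for $m$ becomes a condition on the top layer $p^{a-1}$ for $m/e$. I will therefore directly write
\[
N_{\mathrm{conn}}=\sum_{e\mid m}\mu(e)\,\#\{\text{PST choices with } e\mid\gcd X\}.
\]
Only square-free $e$ matter, so either $e=v$ or $e=pv$ with $v\mid u$.
\begin{itemize}
\item[] For $e=v$: the indices are the divisors $p^{a-r}e'$ with $v\mid e'$. The layer-$0$ condition leaves $2$ choices, and the other layers give $3^{a\mathrm d(u/v)}$, for a total of $2\cdot3^{a\mathrm d(u/v)}$.
\item[] For $e=pv$: layer $0$ still must be exactly $\{m\}$ with $pv\mid m$, while layer $a$ (valuation $0$) is now empty. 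This gives $2\cdot3^{(a-1)\mathrm d(u/v)}$.
\end{itemize}
Since $\mu(pv)=-\mu(v)$, summing yields the displayed formula.

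\textbf{Count for $N_4$.} The same argument applies, with layers $0$ and $1$ both fixed: the factor $4$ comes from $\sigma(m)$ and $\sigma(m/2)$, and the free layers number $a-1$, or $a-2$ when $2\mid e$. When $a=1$, layer $1$ is the valuation-$0$ layer, so $e=pv$ contributes nothing. The sum is then $4\sum_{v\mid u}\mu(v)\cdot[\,X_1=\{m/2\}\text{ has } v\mid m/2\,]$, which is $4$ only when $u=1$; this matches Corollary~\ref{cor:connected-orders}. The final sentence follows by subtraction, because for $\Delta=4$ every PST graph has $|\mathcal T|\in\{2,4\}$.

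The main obstacle is the bookkeeping of which layers become forced empty when $p\mid e$. In particular, for $a=1$ in the $N_4$ count, the fixed layer coincides with the bottom layer, so this case must be verified separately.
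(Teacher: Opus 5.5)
Your proposal is correct and is essentially the paper's argument: you use M\"obius inversion over the square-free $f\mid m$, where only $f=v$ and $f=pv$ with $v\mid u$ contribute. Your layer-by-layer count reproduces the paper's free-divisor counts $a\,\mathrm d(u/v)$ and $(a-1)\,\mathrm d(u/v)$, together with the shifted counts and the separate $a=1$ check for $N_4$. The abandoned remarks, the reduction $d\mapsto d/e$ and the unfinished sentence about $\gcd X$ ``being coprime to\ldots'', play no role and should simply be deleted.
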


\begin{proof}
The condition $X_0=\{m\}$ forces $m$ to be in $X$ and leaves the
$a\mathrm d(u)$ divisors
\[
   p^b e,\qquad 0\le b<a,\qquad e\mid u,
\]
free.  For $f\mid m$, let $N(f)$ be the number of these free divisors that
are divisible by $f$.  There are $2\cdot3^{N(f)}$ labeled oriented circulant
graphs for which every element of $X$ is divisible by $f$.  Proposition~\ref{prop:connectedness} and M\"obius inversion therefore give
\[
   N_{\mathrm{conn}}=2\sum_{f\mid m}\mu(f)3^{N(f)}.
\]
If $a\ge1$, M\"obius inversion leaves only the terms $f=v$ and $f=pv$, where
$v\mid u$, and
\[
   N(v)=a\mathrm d(u/v),
   \qquad
   N(pv)=(a-1)\mathrm d(u/v).
\]
Substitution gives the formula for $N_{\mathrm{conn}}$; the case $a=0$
follows from $\sum_{v\mid u}\mu(v)$.

For 4-vertex MST, the additional condition $X_1=\{m/p\}$ forces both $m$
and $m/p$, with four choices for their signs.  If $a\ge2$, the free divisors
are $p^b e$ with $0\le b\le a-2$ and $e\mid u$.  For $v\mid u$, the numbers
of free divisors divisible by $v$ and by $pv$ are, respectively,
$(a-1)\mathrm d(u/v)$ and $(a-2)\mathrm d(u/v)$.  The same M\"obius
inversion gives the displayed formula for $N_4$.  When $a=1$, there are no
free divisors, and the forced set is connected exactly when $u=1$.
\end{proof}

Let $\Gamma=\mathcal I_{n,\chi}(X,\sigma)$ be a nonempty oriented circulant
graph, and let $h$ be the integer in \eqref{eq:transfer-gcd}.  By
Theorem~\ref{thm:transfer-structure}, $h$ is the maximum cardinality of a set
supporting pairwise PST in $\Gamma$.  Table~\ref{tab:small-orders} records the
connected labeled oriented circulant graphs at selected small orders.  Its
first two columns give the square-free radicand $\delta$ and the corresponding
conductor $\Delta=\Delta(\delta)$.  An entry $h:c$ in the last column means
that exactly $c$ such graphs have the indicated value of $h$; in particular,
$h=1$ means that PST does not occur between distinct vertices.  The first
block covers all conductor and transfer-subgroup cases in
Theorem~\ref{thm:all-orders-classification}: $h=3$ at conductor $3$, $h=2$ or
$4$ at conductor $4$, and $h=2$ at conductor $8$.  It also includes orders
for which every connected graph has $h=1$.  The second block gives examples
from both forms of conductors of odd primitive quadratic characters in
Proposition~\ref{prop:delta} that are excluded by
Theorem~\ref{thm:global-conductor}.

\begin{table}[htbp]
\centering
\small
\caption{Representative connected labeled oriented circulant graphs,
classified by the maximum cardinality $h$ of a set supporting pairwise PST}
\label{tab:small-orders}
\vspace{0.5em}
\begin{tabular}{cccccc}
\toprule
$\delta$ & $\Delta$ & $n$ & $\mathrm d(n/\Delta)$ &
number of graphs & distribution by $h$ \\
\midrule
\multicolumn{6}{l}{\emph{Conductors $3$, $4$, and $8$}}\\
$3$ & $3$ & $3$  & $1$ & $2$  & $3:2$ \\
$3$ & $3$ & $6$  & $2$ & $6$  & $1:6$ \\
$3$ & $3$ & $9$  & $2$ & $6$  & $1:2$, $3:4$ \\
$3$ & $3$ & $18$ & $4$ & $66$ & $1:54$, $3:12$ \\
$3$ & $3$ & $27$ & $3$ & $18$ & $1:6$, $3:12$ \\
$1$ & $4$ & $4$  & $1$ & $2$  & $2:2$ \\
$1$ & $4$ & $8$  & $2$ & $6$  & $1:2$, $4:4$ \\
$1$ & $4$ & $12$ & $2$ & $6$  & $1:6$ \\
$1$ & $4$ & $16$ & $3$ & $18$ & $1:6$, $2:4$, $4:8$ \\
$1$ & $4$ & $24$ & $4$ & $66$ & $1:54$, $2:12$ \\
$1$ & $4$ & $48$ & $6$ & $630$ & $1:498$, $2:108$, $4:24$ \\
$2$ & $8$ & $8$  & $1$ & $2$  & $2:2$ \\
$2$ & $8$ & $16$ & $2$ & $6$  & $1:2$, $2:4$ \\
$2$ & $8$ & $24$ & $2$ & $6$  & $1:6$ \\
$2$ & $8$ & $32$ & $3$ & $18$ & $1:6$, $2:12$ \\
$2$ & $8$ & $48$ & $4$ & $66$ & $1:54$, $2:12$ \\
\addlinespace[3pt]
\multicolumn{6}{l}{\emph{Other conductors of odd quadratic characters}}\\
$7$  & $7$  & $7$  & $1$ & $2$ & $1:2$ \\
$7$  & $7$  & $49$ & $2$ & $6$ & $1:6$ \\
$11$ & $11$ & $11$ & $1$ & $2$ & $1:2$ \\
$15$ & $15$ & $15$ & $1$ & $2$ & $1:2$ \\
$5$  & $20$ & $20$ & $1$ & $2$ & $1:2$ \\
$5$  & $20$ & $40$ & $2$ & $6$ & $1:6$ \\
$6$  & $24$ & $24$ & $1$ & $2$ & $1:2$ \\
\bottomrule
\end{tabular}
\end{table}

The rows $(\Delta,n)=(3,6),(4,12),(8,24)$ show that a conductor admitting PST
need not yield PST in a connected graph at every multiple of the conductor.
At $(\Delta,n)=(4,48)$, all three possibilities $h=1,2,4$ occur.  The rows
with $\Delta=20$ represent the square-free radicand $\delta=5$, since
$\Delta(5)=20$.  All entries follow from
\eqref{eq:general-spectrum} and \eqref{eq:transfer-gcd} by exact integer
arithmetic.

The verification script \texttt{verify\_enumeration.py} enumerates the three
choices for each admissible gcd-class, computes the integer coefficient sequence in
\eqref{eq:general-spectrum}, divides it by its gcd, evaluates $h$ from
\eqref{eq:transfer-gcd}, and tests connectedness using
Proposition~\ref{prop:connectedness}.  No floating-point computation is
used.
The rows with $\Delta=3,4,8$ follow by specializing
Corollaries~\ref{cor:all-order-count} and
\ref{cor:connected-all-order-count}.
For every conductor in the second block,
Theorem~\ref{thm:global-conductor} excludes PST between distinct vertices.
The conductors $7,11,15$ represent the first alternative in
Proposition~\ref{prop:delta}, and $20,24$ represent the second.  Direct evaluation of
\eqref{eq:general-spectrum} and \eqref{eq:transfer-gcd} using exact integer
arithmetic gives the displayed counts.

\section{Concluding remarks}

The periodicity theorem of Godsil and Lato \cite[Theorem~6.1]{GodsilLato2020}
and the connection-set characterization of Godsil and Zhang
\cite[Theorem~8.1]{GodsilZhang2024} reduce PST on nonempty oriented circulant
graphs to the quadratic-character family
$\mathcal I_{n,\chi}(X,\sigma)$; see
Corollary~\ref{cor:pst-character-family}.  The character-theoretic condition
of Chan et al.\ \cite[Theorem~3.3]{ChanPantangiRazafimahatratraSin2026}
provides a general test for oriented normal Cayley graphs.  The two additional
ingredients used here are the explicit, Fourier-indexed eigenvalue formula in
Theorem~\ref{thm:general-spectrum}, valid also at ramified orders, and
Theorem~\ref{thm:transfer-structure}, which determines the vertices joined by
PST for square-root spectra.

These ingredients restrict PST between distinct vertices to conductors
$3$, $4$, and $8$ and give the exact conditions on $X$ at every multiple of
these conductors.  The square-free radicands are therefore precisely $1$,
$2$, and $3$.  The conductor-$4$ case agrees with the known integral
oriented classification \cite{Song2024IOC}; the nonintegral cases occur at
conductors $3$ and $8$.  The general four-vertex bound follows from
\cite[Theorem~4.1]{ChanPantangiRazafimahatratraSin2026}; the cyclic
classification obtained here determines exactly when $\mathcal T$ has
cardinality $2$, $3$, or $4$, together with all PST times, maximum MST sets,
connected orders, and enumeration formulas.

For general finite abelian groups, the connection-set structure is known
\cite{GodsilZhang2024}, and general character-theoretic conditions for PST
are available \cite{ChanPantangiRazafimahatratraSin2026}.  The remaining
problem is to convert these results into an explicit classification of
the vertex sets on which pairwise PST occurs, the PST times, and the MST sets,
analogous to the cyclic classification obtained here.

\section*{Data availability statement}

No external datasets were used in this work.  The Python script
\texttt{verify\_enumeration.py}, supplied as supplementary material,
reproduces the enumeration results and Table~\ref{tab:small-orders}.

\end{document}